\newcommand{\E}{\mathbb{E}}
\DeclareMathOperator{\spec}{spec}
\newcommand{\lind}{{\mathcal{L}}}
\newcommand{\trace}[1]{\operatorname{tr}(#1)}
\newtheorem{theorem}{Theorem}[section]
\newtheorem*{theorem*}{Theorem}
\newtheorem{lemma}[theorem]{Lemma}
\newtheorem*{lemma*}{Lemma}
\newtheorem{proposition}[theorem]{Proposition}
\newtheorem*{prop*}{Proposition}
\newtheorem{corollary}[theorem]{Corollary}
\newtheorem{remark}[theorem]{Remark}
\newtheorem{claim}[theorem]{Claim}
\newtheorem{definition}[theorem]{Definition}
\let\c@equation\c@theorem
\DeclareMathOperator*{\argmax}{arg\,max}
\renewcommand{\braket}[2]{\langle#1|#2\rangle}
\newcommand{\C}[0]{\mathbb{C}}
\newcommand{\N}[0]{\mathbb{N}}
\newcommand{\R}[0]{\mathbb{R}}
\newcommand{\Z}[0]{\mathbb{Z}}
\newcommand{\tr}[0]{\operatorname{tr}}
\newcommand{\Var}[0]{\operatorname{Var}}
\newcommand{\pull}[9]{
#1\ar@/_/[ddr]_{#2} \ar@{.>}[rd]^{#3} \ar@/^/[rrd]^{#4} & &\\
& #5\ar[r]^{#6}\ar[d]^{#8} &#7\ar[d]^{#9} \\}
\newcommand{\cmp}[9]{
\xymatrix{
#1 \ar[r]^{#4}{#5} \ar@/_2pc/[rr]^{#8}_{#9} & #2 \ar[r]^{#6}_{#7} & #3
}
}
\newcommand{\ha}[1]{\ar@{^(->}[#1]}
\newcommand{\ls}[1]{\ar@{-}[#1]}
\newcommand{\sj}[1]{\ar@{->>}[#1]}
\newcommand{\aq}[1]{\ar@{=}[#1]}
\newcommand{\acir}[1]{\ar@{}[#1]|-{\textstyle{\circlearrowright}}}
\newcommand{\acil}[1]{\ar@{}[#1]|-{\textstyle{\circlearrowleft}}}
\newcommand{\ard}[1]{\ar@{.>}[#1]}
\newcommand{\mt}[1]{\ar@{|->}[#1]}
\newcommand{\inm}[1]{\ar@{}[#1]|-{\in}}
\newcommand{\inr}{\ar@{}[d]|-{\rotatebox[origin=c]{-90}{$\in$}}}
\newcommand{\inl}{\ar@{}[u]|-{\rotatebox[origin=c]{90}{$\in$}}}
\newcommand{\beq}[1]{\begin{equation}\llabel{#1}}
\newcommand{\eeq}[0]{\end{equation}}
\newcommand{\bal}[0]{\begin{align*}}
\newcommand{\eal}[0]{\end{align*}}
\newcommand{\ban}[0]{\begin{align}}
\newcommand{\ean}[0]{\end{align}}
\newcommand{\fixme}[1]{{\color{red}#1}}
\newcommand{\llabel}[1]{\label{#1}\text{\fixme{\tiny#1}}}
\newcommand{\arxiv}[1]{\url{http://www.arxiv.org/abs/#1}}
\DeclareFontFamily{U}{wncy}{}
    \DeclareFontShape{U}{wncy}{m}{n}{<->wncyr10}{}
    \DeclareSymbolFont{mcy}{U}{wncy}{m}{n}
    \DeclareMathSymbol{\Sh}{\mathord}{mcy}{"58} 
\title{On quantum to classical comparison for Davies generators}
\author{
Joao Basso\thanks{Authors are listed in alphabetical order. Department of Mathematics, University of California, Berkeley, CA 94720, USA. \texttt{joao.basso@berkeley.edu}} 
\and Shirshendu Ganguly\thanks{Department of Statistics, University of California, Berkeley, CA 94720, USA. \texttt{sganguly@berkeley.edu}} 
\and Alistair Sinclair\thanks{Computer Science Division, University of California, Berkeley, CA 94720, USA. \texttt{sinclair@cs.berkeley.edu}}  
\and Nikhil Srivastava\thanks{Department of Mathematics, University of California, Berkeley, CA 94720, USA. \texttt{nikhil@math.berkeley.edu}} 
\and Zachary Stier\thanks{Department of Mathematics, University of California, Berkeley CA 94720, USA. \texttt{zstier@berkeley.edu}} 
\and Thuy-Duong~Vuong\thanks{Computer Science Division, University of California, Berkeley, CA 94720, USA, and Miller Institute for Basic Science Research. \texttt{tdvuong@berkeley.edu}. \\
Department of Computer Science and Engineering, University of California, San Diego, CA 92093, USA. \texttt{thvuong@ucsd.edu} (preferred email)}
}
\begin{document}
\maketitle
\begin{abstract}
Despite extensive study, our understanding of quantum Markov chains remains far less complete than that of their classical counterparts. Temme \cite{temme13} observed that the Davies Lindbladian, a well-studied model of quantum Markov dynamics, contains an embedded classical Markov generator, raising the natural question of how the convergence properties of the quantum and classical dynamics are related. While \cite{temme13} showed that the spectral gap of the Davies Lindbladian can be much smaller than that of the embedded classical generator for certain highly structured Hamiltonians, we show that if the spectrum of the Hamiltonian does not contain long arithmetic progressions, then the two spectral gaps must be comparable. As a consequence, we prove that for a large class of Hamiltonians, including those obtained by perturbing a fixed Hamiltonian with a generic external field, the quantum spectral gap remains within a constant factor of the classical spectral gap. Our result aligns with physical intuition and enables the application of classical Markov chain techniques to the quantum setting.

The proof is based on showing that any ``off-diagonal'' eigenvector of the Davies generator can be used to construct an observable which commutes with the Hamiltonian and has a Lindbladian Rayleigh quotient which can be upper bounded in terms of that of the original eigenvector's Lindbladian Rayleigh quotient. Thus, a spectral gap for such observables implies a spectral gap for the full Davies generator. 
\end{abstract}
\section{Introduction}
For a Hamiltonian $H$ and inverse temperature parameter $\beta\geq 0$, the goal of  quantum Gibbs sampling is to prepare
the quantum Gibbs state \[\rho_\beta := \frac{\exp(-\beta H)}{\trace{\exp(-\beta H)}} .\]
One of the leading approaches for quantum Gibbs sampling is based on quantum Markov processes, with the Davies Lindbladian generator being a prominent model (see Definition \ref{def:davies generator}). Despite significant recent progress, our understanding of the Davies generator remains incomplete. Existing works are able to prove mixing for commuting Hamiltonians \cite{Temme2014ThermalizationTB,kastoryano2016quantum,kochanowski2024rapid}, in the high temperature regime \cite{rouze2025efficient}, or under strong assumptions on either the Hamiltonian or the Davies generator \cite{Bardet2021EntropyDF,rajakumar2024gibbs,chen2021fast,chen2024randomized}. Yet, there is a lack of tools for proving mixing outside these settings.
This stands in stark contrast to the classical setting, where recent advances have yielded sharp convergence bounds across the full range of temperatures for several important models \cite{ALO20,CLV21,AJKPV22,Chen2022LocalizationSA,Chen2022OptimalMF}.

A natural question arises: to what extent can techniques from classical Markov chains be leveraged to understand their quantum counterparts? In \cite{temme13}, it is observed that when $H$ has simple spectrum, the Davies Lindbladian contains a canonical embedded classical Markov chain. Specifically, its action on observables that are diagonal in the Hamiltonian eigenbasis is isomorphic to that of a classical Markov generator called the Pauli master equation. This implies that the spectral gap $\lambda_{Q}$ of the quantum process is upper-bounded by the spectral gap $\lambda_{\mathrm{cl}}$ of the embedded classical one, i.e., $\lambda_{Q}\le \lambda_{\mathrm{cl}}$. 

Temme asked the insightful question: when does the opposite comparison $\lambda_Q \gtrsim^? \lambda_{\mathrm{cl}}$ hold? He gave two kinds of answers to this question.  First, he 
constructed a single-body Hamiltonian with simple spectrum and $\lambda_Q=O(1/N)\lambda_{\mathrm{cl}}$, where $N$ is
the Hilbert space dimension, showing that an efficient comparison is impossible in general. Second, he outlined
an approach to proving bounds on $\lambda_Q$ on a case-by-case basis by considering the matrix entries of the Davies generator (in a basis induced by the Hamiltonian eigenbasis)
and applying matrix-analytic tools such as Gershgorin's theorem to control the off-diagonal part (i.e., its action orthogonal to the diagonal). Since then, this approach was used in several works (e.g. \cite{chen2021fast,rajakumar2024gibbs,chen2024randomized}) to separately control the diagonal and off-diagonal parts of a Davies generator, and conclude that in those cases $\lambda_Q$ is not much smaller than $\lambda_{\mathrm{cl}}$. Temme concluded his paper by asking whether his approach could be extended to Hamiltonians with degenerate spectrum, which many interesting examples such as classical Ising models, Toric code Hamiltonians, and Heisenberg models exhibit.

In this paper, we address Temme's question by identifying a condition on the spectrum of $H$ which we call the {\em short arithmetic progression property}, which implies that the quantum and classical convergence rates of the Davies generator are comparable. Our main result (\cref{thm:general comparison theorem dependent only on length of arithmetic sequence}) shows that if the length of the longest {arithmetic progression} in $\spec(H)$ is $D$, then $\lambda_Q\ge\Omega(1/D)\lambda_{\mathrm{cl}}$. This is consistent with Temme's work, as his example was a Hamiltonian whose spectrum contained a long arithmetic progression of length $N$; our result thus shows that long arithmetic progressions are the only obstacle to such a comparison. Notably, the short AP property we introduce is considerably weaker than having simple spectrum and simple Bohr spectrum\footnote{The simple Bohr spectrum condition means that for any eigenvalues \( E_j \neq E_k \) and \( E_{j'} \neq E_{k'} \) of \( H \), the corresponding energy differences are distinct; that is, \( E_j - E_k \neq E_{j'} - E_{k'} \).} (see Remark \ref{rem:bohrvsap}), and includes many natural examples which have degeneracy, including the ones mentioned above (see \cref{rem:proper vs non-proper AP}).

Secondarily, we show that several families of generic Hamiltonians --- notably, any many-body Hamiltonian with a random local external field (\cref{thm:generic transverse field}) --- obey the short AP property with $D=2$, implying that for such models the convergence rates of the quantum and classical processes are within a constant factor of each other. 
As an interesting consequence (\cref{cor:cheeger}), by applying the classical Cheeger inequality we conclude that for short AP Hamiltonians the Davies Lindbladian spectral gap is approximately witnessed by a projection which commutes with the Hamiltonian.

Before stating our results in detail, we recall the definition of the Davies generator and record some preliminary facts.
\begin{definition}[Davies generator] \label{def:davies generator} Let $n$ be the number of qubits and $N = 2^n$ be the dimension of the corresponding Hilbert space.
    Given a Hamiltonian $H\in \C^{N\times N},$ an inverse temperature $\beta\geq 0\footnote{Our results hold for all $\beta \in \mathbb{R}$, but we assume $\beta \geq 0$ to align with standard convention.},$ a set of jump operators $\mathcal{S},$ and a transition rate function $G:\R\rightarrow\R,$ the Davies generator\footnote{The general formulation of the Davies generator includes a coherent term, writing the generator as 
\[
\tilde{\mathcal{L}}(f) = i[H, f] + \mathcal{L}(f),
\]
where \( \mathcal{L} \) is as \cref{def:davies generator}. This is important for detailed balance \cite{chen2023efficient} and hypocoercivity arguments \cite{fang2025mixing}, but the inclusion of the coherent term $i[H,f]$ does not change the spectral gap so we will omit it in this paper.} is a linear operator $\mathcal{L}$ on $\C^{N\times N}$ defined for observable $f$ as
    \begin{equation}\label{def:davies dissipative part}
       \mathcal{L} (f) := \sum_{\omega, S\in \mathcal{S}}\mathcal{L}_{\omega, S}(f),\nonumber 
    \end{equation}
     where 
for Bohr frequency $\omega$ and jump operator $S \in \mathcal{S},$
    \begin{equation}
        \mathcal{L}_{\omega, S}(f) := G(\omega) (S(\omega)^\dag f S(\omega) - \frac{1}{2} \{S(\omega)^\dag S(\omega), f\} )= \frac{G(\omega)}{2} \left(S(\omega)^\dag [f, S(\omega) ] + [S(\omega)^\dag, f] S(\omega)\right),\nonumber
    \end{equation}
    where $[A,B]=AB-BA$ denotes the commutator, $\{A,B\} = AB + BA$ denotes the anticommutator, and \[S(\omega) := \sum_{\lambda_1, \lambda_2: \lambda_1 - \lambda_2 = \omega} \Pi_{\lambda_1} S \Pi_{\lambda_2},\]
where $ \Pi_\lambda$ is the projector to the eigenspace of $H$ of eigenvalue $\lambda.$
\end{definition}

Throughout the paper, we make the standard assumption that:
\begin{itemize}
    \item  The jump operator set is self-adjoint,
\begin{equation} \label{assumption:jump operator closure}
    \mathcal{S}^\dag = \set{S^\dag : S \in \mathcal{S}} = \mathcal{S},\nonumber
    \end{equation}
    \item The transition rate function satisfies the detailed balance condition: $ G(\omega)= G(-\omega) e^{-\beta \omega}.$ 
\end{itemize}

Under these assumptions, 
  $\mathcal{L}$ satisfies the following Kubo--Martin--Schwinger (KMS) reversibility condition (see \cite[Fact 3.1]{BCL24}), which can be viewed as an analog of the time-reversibility condition of classical Markov chains:
\begin{equation}\label{eq:kms reversible condition}
    \langle \mathcal{L}(f), g\rangle_\rho = \langle f, \mathcal{L}(g)\rangle_\rho,
\end{equation}
where recall that $\rho \equiv  \rho_\beta =\frac{\exp(-\beta H)}{\trace{\exp(-\beta H)}}$ is the Gibbs state, and the inner product $\langle \cdot, \cdot \rangle_\rho$ is the KMS inner product, i.e.\ $ \langle A, B\rangle_{\rho} := \tr( \rho^{1/2} A \rho^{1/2} B^\dag). $ 

For an observable $f$, define its
\begin{itemize}
    \item Variance: $\Var(f):= \langle f, f\rangle_\rho - \trace{\rho f} \trace{\rho f^\dag}\nonumber$;
    \item Dirichlet form: $\mathcal{E}_{\mathcal{L}}(f):= - \langle \mathcal{L}(f), f\rangle_\rho$; and
    \item Rayleigh quotient of $f$: $ \lambda_{\mathcal{L}}(f):= \frac{   \mathcal{E}_{\mathcal{L}}(f)}{ \Var(f)}$.\footnote{with the convention that $\lambda_{\mathcal{L}}(f) = +\infty$ if $ \Var(f) = 0.$}
\end{itemize}
We define the spectral gap of $\mathcal{L}$ by
\begin{equation} \label{eq:spectral gap}
\lambda_{\mathcal{L}} :=\min_{f}\lambda_{\mathcal{L}}(f).
\end{equation}

The spectral gap of $\mathcal{L}$ controls the convergence rate of the Lindbladian dynamics $e^{t \mathcal{L}}$, specifically its $\chi^2$ mixing time (see \cite{KT13}). We will restrict our attention exclusively to the spectral gap in this paper, and will not discuss other approaches to establishing mixing. An important role is played by the following invariant subspaces of the Davies generator.


 \begin{definition}[$V_\omega$ subspaces]\label{def:V omega}
For a Hamiltonian $H$ and Bohr frequency $\omega,$ let $V_\omega$ be the image of the map \[f \mapsto f(\omega) := \sum_{\lambda_1, \lambda_2: \lambda_1 - \lambda_2 = \omega} \Pi_{\lambda_1} f \Pi_{\lambda_2}.\]
where $ \Pi_\lambda$ is the projector to eigenspace of $H$ of eigenvalue $\lambda.$ 
 \end{definition}
We note that $V_0$ is precisely the set of observables that commute with $H$ (see \cref{prop:observable in V0 commute with $H$}).

As shown in \cite{temme13}, each $V_\omega$ is an invariant subspace under $\mathcal{L}$, that is, $\lind(V_\omega) \subseteq V_\omega$. Define  the spectral gap in $V_\omega$ as
\begin{equation} \label{eq:spectral gap on subspace} \lambda_{\mathcal{L},\omega} = \min_{f\in V_\omega} \lambda_{\mathcal{L}}(f). 
\end{equation}
Then
\begin{equation}\label{eq:spectral gap is minimize over all subspace}
  \lambda_{\mathcal{L}}=\min_\omega  \lambda_{\mathcal{L}, \omega}.
\end{equation}

We also define the spectral gap with respect to Hermitian observables: 
\begin{equation} \label{eq:spectral gap hermitian} \lambda_{\mathcal{L}}^\mathrm{H} := \min_{f:f=f^\dag} \lambda_{\mathcal{L}}(f) \quad \text{and} \quad \lambda_{\mathcal{L}, 0}^\mathrm{H} := \min_{f\in V_0:f=f^\dag} \lambda_{\mathcal{L}}(f).
\end{equation}

In \cref{prop:compare spectral gap with hermitian spectral gap,prop:compare spectral gap with hermitian spectral gap in V0}, we show that
\(
\lambda_{\mathcal{L}}^\mathrm{H} = \lambda_{\mathcal{L}}\)  {and} \(\lambda_{\mathcal{L},0}^\mathrm{H} = \lambda_{\mathcal{L},0}.\)
When the Hamiltonian $H$ has simple spectrum,
\cite{temme13} relates the spectral gap $\lambda_{\mathcal{L}, 0}^\mathrm{H}$ of Hermitian observables in $V_0$ to the spectral gap of a certain classical Markov generator.
\begin{definition}[Classical Markov generator, {\cite{temme13}}] \label{def:classical Markov generator}
Consider a Hamiltonian \( H \) with an orthonormal eigenbasis \( U = \{ u_i \}_{i=1}^N \). Given a Davies generator \( \mathcal{L} \) associated with $H$, inverse temperature $ \beta \geq 0,$ transition rate function $G:\R\to \R,$ and jump operator set $\mathcal{S}$,  we define a corresponding classical Markov generator with state space \( U \), using the following construction.

Let $ \rho = \rho_\beta := \frac{\exp(-\beta H)}{\trace{\exp(-\beta H)}},$
and define the corresponding stationary distribution \( \pi: U \to \mathbb{R}_{\geq 0} \) by
\[
\pi(u_i) := \langle u_i | \rho | u_i \rangle.
\]
The transition rate of the classical Markov generator from state \( u_i \) to \( u_j \) (for \( i \neq j \)) is given by
\begin{equation}\label{eq:classical MC transition probability}
P_{\mathcal{L},U}[u_i \to u_j] = \langle \mathcal{L}(\ket{u_j}\bra{u_j}), \ket{u_i}\bra{u_i} \rangle = G(E_j - E_i) \sum_{S\in \mathcal{S}} |\langle u_i | S | u_j \rangle|^2 \geq 0,\nonumber
\end{equation}
where \( E_i = \langle u_i | H | u_i \rangle \) is the eigenvalue of $H$ corresponding to $u_i.$

Given a test function \( F: U \to \mathbb{R} \), 
the Dirichlet form of $F$ is defined as
\begin{equation}\label{eq:classical chain dirichlet form}
\mathcal{E}_{\mathcal{L}, U, \mathrm{cl}}(F) = \frac{1}{2} \sum_{i,j} (F(u_i) - F(u_j))^2 \, \pi(u_i) P_U[u_i \to u_j].\nonumber
\end{equation}

The \emph{classical spectral gap} associated with \( \mathcal{L} \) and the basis \( U \) is then given by
\begin{equation}\label{eq:classical chain spectral gap}
\lambda_{\mathcal{L}, U, \mathrm{cl}} := \min_{F: U \to \mathbb{R}} \frac{\mathcal{E}_{\mathcal{L}, U, \mathrm{cl}}(F)}{\mathrm{Var}_\pi(F)}\footnote{with the convention that $ \frac{\mathcal{E}_{\mathcal{L}, U, \mathrm{cl}}(F)}{\mathrm{Var}_\pi(F)} =+\infty$ if $\Var_\pi(F) = 0. $},\nonumber
\end{equation}
where \( \mathrm{Var}_\pi(F) := \mathbb{E}_\pi[F^2] - \mathbb{E}_\pi[F]^2 \) denotes the variance of \( F \) with respect to the distribution \( \pi \).
\end{definition}
The reversibility condition of the Lindbladian $\mathcal{L}$ with respect to the KMS inner product $\langle \cdot, \cdot \rangle_\rho$ implies that the associated classical Markov generators are reversible with respect to~$\pi$.
  
\begin{proposition}[{\cite{temme13}}] \label{prop:V0 equal classical for nondegenerate H}
    For a Hamiltonian $H$ that has simple spectrum and a Davies generator $\mathcal{L}$ associated with $H,$ let the \emph{classical spectral gap} of $\mathcal{L},$ denoted by $\lambda_{\mathcal{L},\mathrm{cl}}, $ be the spectral gap of the classical Markov generator associated with $\mathcal{L}$ and the unique orthonormal eigenbasis of $H$ (see \cref{def:classical Markov generator}). Then, \[ \lambda_{\mathcal{L},0}^\mathrm{H} = \lambda_{\mathcal{L},\mathrm{cl}}. \]
\end{proposition}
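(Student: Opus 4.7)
The plan is to exhibit a linear bijection between Hermitian observables in $V_0$ and real functions on $U$, and show it preserves both variance and Dirichlet form.

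\textbf{Step 1 (Identification).} Since $H$ has simple spectrum, let $U=\{u_i\}_{i=1}^N$ be its unique (up to phases) orthonormal eigenbasis, with projectors $P_i=\ket{u_i}\!\bra{u_i}$. By \cref{prop:observable in V0 commute with $H$}, any $f\in V_0$ commutes with $H$, hence is simultaneously diagonalizable with $H$. By simplicity of the spectrum, this forces $f=\sum_i f_i P_i$ for some scalars $f_i$; Hermiticity makes the $f_i$ real. Thus $f\mapsto F$, where $F(u_i):=f_i$, is a bijection between Hermitian elements of $V_0$ and real functions $F\colon U\to\R$.

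\textbf{Step 2 (Variance matches).} Since $\rho=e^{-\beta H}/\tr(e^{-\beta H})$ is diagonal in $U$, it commutes with $f$, so $\rho^{1/2}f\rho^{1/2}=\rho f$. Therefore
\[
\langle f,f\rangle_\rho=\tr(\rho f^2)=\sum_i\pi(u_i)F(u_i)^2=\E_\pi[F^2],\qquad \tr(\rho f)=\sum_i\pi(u_i)F(u_i)=\E_\pi[F],
\]
giving $\Var(f)=\Var_\pi(F)$.

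\textbf{Step 3 (Dirichlet form matches).} I would compute $\mathcal{L}(P_j)$ explicitly using \cref{def:davies generator}. Writing $s_{ik}:=\bra{u_i}S\ket{u_k}$, the simple-spectrum assumption makes $S(\omega)=\sum_{i,k:E_i-E_k=\omega}s_{ik}\ket{u_i}\!\bra{u_k}$ relate at most one pair $(i,k)$ per Bohr frequency. A short calculation then gives
\[
S(\omega)^\dag P_j S(\omega)=\sum_{k:E_j-E_k=\omega}|s_{jk}|^2 P_k,\qquad S(\omega)^\dag S(\omega)=\text{diagonal},
\]
so that $\mathcal{L}(P_j)$ is again diagonal, with
\[
\mathcal{L}(P_j)=\sum_{k\neq j}P_{\mathcal{L},U}[u_k\to u_j]\,P_k\;-\;\Bigl(\sum_{i\neq j}P_{\mathcal{L},U}[u_j\to u_i]\Bigr)P_j,
\]
after invoking $\mathcal{S}^\dag=\mathcal{S}$ to symmetrize $\sum_S|s_{jk}|^2=\sum_S|s_{kj}|^2$ and matching against \cref{def:classical Markov generator}. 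In particular, $\mathcal{L}$ restricted to diagonal observables corresponds exactly to the classical generator $\mathcal{L}_{\mathrm{cl}}$ acting on $F$: $\mathcal{L}(f)=\sum_k(\mathcal{L}_{\mathrm{cl}}F)(u_k)\,P_k$, where $(\mathcal{L}_{\mathrm{cl}}F)(u_k)=\sum_{j\neq k}(F(u_j)-F(u_k))P_{\mathcal{L},U}[u_k\to u_j]$. Using again that $\rho$ commutes with $f$ and with $\mathcal{L}(f)$,
\[
\mathcal{E}_{\mathcal{L}}(f)=-\tr(\rho\,\mathcal{L}(f)\,f)=-\sum_k\pi(u_k)F(u_k)(\mathcal{L}_{\mathrm{cl}}F)(u_k),
\]
and the standard symmetrization using the reversibility $\pi(u_k)P_{\mathcal{L},U}[u_k\to u_j]=\pi(u_j)P_{\mathcal{L},U}[u_j\to u_k]$ (which follows from the KMS condition \eqref{eq:kms reversible condition}) converts this into $\tfrac12\sum_{k,j}(F(u_k)-F(u_j))^2\pi(u_k)P_{\mathcal{L},U}[u_k\to u_j]=\mathcal{E}_{\mathcal{L},U,\mathrm{cl}}(F)$.

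\textbf{Step 4 (Conclude).} Combining Steps 2 and 3, $\lambda_{\mathcal{L}}(f)=\mathcal{E}_{\mathcal{L}}(f)/\Var(f)=\mathcal{E}_{\mathcal{L},U,\mathrm{cl}}(F)/\Var_\pi(F)$ for every Hermitian $f\in V_0$, and the bijection $f\leftrightarrow F$ transports the infimum on one side to the other, yielding $\lambda_{\mathcal{L},0}^\mathrm{H}=\lambda_{\mathcal{L},\mathrm{cl}}$. The only conceptually delicate point is Step 1: simplicity of the spectrum is what rules out off-diagonal contributions to $V_0$ coming from degenerate eigenspaces, so that the ambient Hilbert–Schmidt space of observables restricted to $V_0$ is genuinely abelian and isomorphic to classical functions on $U$; every subsequent computation is then essentially forced by the fact that everything in sight is simultaneously diagonal in $U$.
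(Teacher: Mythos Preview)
Your proof is correct and follows essentially the same approach as the paper (which proves the more general \cref{prop:classical chain V0} in \cref{sec:quantum to classical spectral gap} and recovers this statement as the simple-spectrum special case): identify Hermitian $f\in V_0$ with real functions on $U$, then match variances and Dirichlet forms directly.

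One small correction to your Step~3: the claim that ``the simple-spectrum assumption makes $S(\omega)$ relate at most one pair $(i,k)$ per Bohr frequency'' is false---that would require simple \emph{Bohr} spectrum, which is strictly stronger. Your formula $S(\omega)^\dag P_j S(\omega)=\sum_{k:E_j-E_k=\omega}|s_{jk}|^2 P_k$ is nonetheless correct: the reason is that inserting $P_j=\ket{u_j}\!\bra{u_j}$ fixes the index $i=j$, and then simplicity of the spectrum (not of the Bohr spectrum) forces the two remaining indices $k,k'$ with $E_j-E_k=E_j-E_{k'}=\omega$ to coincide. The same mechanism makes $S(\omega)^\dag S(\omega)$ diagonal. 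So the computation stands; only the stated justification needs adjusting.
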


\subsection{Our results}


\paragraph{Comparison under the Short AP Property.}
Our main theorem is a comparison between the spectral gap of the Davies generator $\mathcal{L}$ and the spectral gap of observables in $V_0$.
Before stating the result, we introduce the following definition.

\begin{definition}[Arithmetic progression]
    A multiset \( S \) is said to contain an \emph{arithmetic progression of length \( D \)} (or a {\em \( D \)-arithmetic progression}) if there exist \( a, b  \) such that
    \[
    \{ a, a + b, a + 2b, \ldots, a + (D - 1)b \} \subseteq S.
    \]
    If \( b \neq 0 \), we say that \( S \) contains a \emph{proper} \( D \)-arithmetic progression.
\end{definition}

\begin{theorem}[General spectral gap comparison theorem] \label{thm:general comparison theorem dependent only on length of arithmetic sequence}
      If the spectrum of a Hamiltonian $H$ does not contain any proper $(D+1)$-arithmetic progression, then for any Davies generator $\mathcal{L}$ associated with $H,$
     \[ \lambda_{\mathcal{L}, 0} \geq \lambda_{\mathcal{L}} \geq \frac{1}{2D} \lambda_{\mathcal{L}, 0}.\]
\end{theorem}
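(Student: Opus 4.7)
The first inequality $\lambda_{\mathcal{L},0}\ge\lambda_{\mathcal{L}}$ is immediate from \eqref{eq:spectral gap is minimize over all subspace}, which expresses $\lambda_{\mathcal{L}}$ as a minimum over all invariant subspaces $V_\omega$. For the nontrivial direction, \eqref{eq:spectral gap is minimize over all subspace} reduces matters to showing $\lambda_{\mathcal{L},\omega}\ge\tfrac{1}{2D}\lambda_{\mathcal{L},0}$ for every \emph{nonzero} Bohr frequency $\omega$ (the $\omega=0$ case being trivial). Fix such an $\omega$ and an arbitrary nonzero $f\in V_\omega$. The plan is to construct an observable $g\in V_0$ satisfying $\lambda_{\mathcal{L}}(g)\le 2D\cdot\lambda_{\mathcal{L}}(f)$; since $V_0$ is one of the invariant subspaces, minimizing over $f$ then yields the bound. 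This realizes the strategy outlined in the abstract: converting an ``off-diagonal'' eigenvector into a commuting observable with a controlled Rayleigh quotient.

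The construction is organized around the maximal $\omega$-arithmetic progressions in $\spec(H)$. By hypothesis every such chain has length at most $D$, and $f$ decomposes as a direct sum over chains of block-off-diagonal operators $f_k:=\Pi_{E_0+k\omega}\,f\,\Pi_{E_0+(k-1)\omega}$, with $\mathcal{E}_{\mathcal{L}}(f)$ and $\Var(f)$ splitting correspondingly. A natural candidate for $g$ is a chain-wise Hermitian combination of $f^{\dagger}f$ and $ff^{\dagger}$, each of which already lies in $V_0$ and hence commutes with $H$ by \cref{prop:observable in V0 commute with $H$}. To relate $\mathcal{E}_{\mathcal{L}}(g)$ to $\mathcal{E}_{\mathcal{L}}(f)$, I would expand using the commutator form
\[
\mathcal{L}_{\omega',S}(\cdot)=\tfrac{G(\omega')}{2}\bigl(S(\omega')^{\dagger}[\cdot,S(\omega')]+[S(\omega')^{\dagger},\cdot]S(\omega')\bigr),
\]
apply a Leibniz-type product rule on each chain, and telescope across the at most $D$ consecutive levels. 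The factor of $D$ enters precisely from this telescoping depth via a Cauchy--Schwarz step, while the factor of $2$ arises from symmetrically combining the contributions of $f^{\dagger}f$ and $ff^{\dagger}$ (alternatively, from the Hermitian reductions \cref{prop:compare spectral gap with hermitian spectral gap,prop:compare spectral gap with hermitian spectral gap in V0}, which allow one to work with symmetrized combinations of $f$ and $f^{\dagger}$).

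The main obstacle I anticipate lies in choosing the chain-wise weights defining $g$ so that an upper bound on $\mathcal{E}_{\mathcal{L}}(g)$ and a matching lower bound on $\Var(g)$ in terms of $\Var(f)$ hold simultaneously. A crude choice such as $g=f^{\dagger}f$ with uniform weights produces cross terms in $\mathcal{E}_{\mathcal{L}}(g)$ from jump operators that link different levels of a chain, which must be absorbed into a $D$-fold multiple of $\mathcal{E}_{\mathcal{L}}(f)$; conversely, overly aggressive weighting risks degrading the variance lower bound. The short AP hypothesis is exactly what caps the number of such per-chain cross terms by $D$, and the careful construction must exploit this structural fact, together with the KMS reversibility \eqref{eq:kms reversible condition}, to achieve the announced constant $1/(2D)$.
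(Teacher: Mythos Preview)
Your high-level reduction via \eqref{eq:spectral gap is minimize over all subspace} is correct, and the instinct to build a $V_0$ observable out of $ff^\dagger$ and $f^\dagger f$ is on the right track. But two concrete choices in your sketch diverge from what actually works, and as stated the argument would not close.

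First, the paper's construction uses the \emph{square roots}
\[
g = e^{\beta\omega/4}(ff^\dagger)^{1/2}, \qquad h = e^{-\beta\omega/4}(f^\dagger f)^{1/2},
\]
not $ff^\dagger$ and $f^\dagger f$ themselves, and with no chain-wise weights. The square root is essential: it makes $g,h$ homogeneous of degree one in $f$, so that $\Var(g),\Var(h)$ are directly comparable to $\Var(f)$. With your candidate $g=f^\dagger f$, the variance is quartic in $f$ and there is no clean lower bound in terms of $\Var(f)$; the ``overly aggressive weighting risks degrading the variance lower bound'' worry you raise is not a tuning issue but a structural one that the square root resolves. The $e^{\pm\beta\omega/4}$ factors are exactly what make $\rho^{1/4}g\rho^{1/4}=(\tilde f\tilde f^\dagger)^{1/2}$ for $\tilde f=\rho^{1/4}f\rho^{1/4}$, which is what drives the trace inequality below.

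Second, you locate the factor $D$ in the Dirichlet form comparison via a telescoping/Cauchy--Schwarz over the chain levels. In the paper's proof the Dirichlet inequality is $D$-\emph{free}: one shows the clean bound $2\mathcal{E}_{\mathcal{L}}(f)\ge \mathcal{E}_{\mathcal{L}}(g)+\mathcal{E}_{\mathcal{L}}(h)$ (\cref{thm:comparison of dirichlet forms}) by writing each Dirichlet form as a sum of $\|[S(\tilde\omega),\cdot]\|_\rho^2$ terms (\cref{prop:dirichlet form as sum of commutators}) and proving the pointwise trace inequality
\[
\tr(A\tilde g A^\dagger \tilde g)+\tr(B\tilde h B^\dagger \tilde h)\ge \tr(A\tilde f B^\dagger \tilde f^\dagger)+\tr(B\tilde f^\dagger A^\dagger \tilde f)
\]
via the SVD of $\tilde f$ (\cref{prop:crucial trace inequality}); the right-hand side becomes $\sum_{i,j}\sqrt{\lambda_i\lambda_j}\,|u_i^\dagger Au_j-\tilde v_i^\dagger B\tilde v_j|^2\ge0$. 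The factor $D$ enters \emph{only} in the variance comparison $\Var(g)+\Var(h)\ge \tfrac1D\Var(f)$ (\cref{thm:comparison of variances}), where one uses that the singular vectors $u_i,v_i$ of $f$ are eigenvectors of $H$ with $H(v_i)=H(u_i)-\omega$, partitions $\{H(u_i)\}$ into maximal $\omega$-chains, and bounds $1-\sum_i\rho(u_i)\ge 1/D$ using that each chain has length $\le D$. So the short-AP hypothesis is used on the variance side, not to control cross terms in the Dirichlet form.
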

\cref{thm:general comparison theorem dependent only on length of arithmetic sequence} is the consequence of \cref{eq:spectral gap is minimize over all subspace} and the following theorem, which compares the spectral gap in $V_\omega$ for $\omega \neq 0$ with the spectral gap  in $ V_0.$
\begin{theorem} \label{thm:compare spectral gap V omega and V 0}
    Consider a Hamiltonian $H$ and a Bohr frequency $\omega \neq 0.$ If the spectrum of $H$ does not contain a $(D+1)$-arithmetic progression where the difference between consecutive terms is $\omega$,
    then for any Davies generator $\mathcal{L}$ associated with $H,$
     \[\lambda_{\mathcal{L}, \omega}\geq \frac{1}{2}\max \left(\frac{1}{D}, 1 -e^{-|\beta \omega |} \right) \lambda_{\mathcal{L}, 0}.\]
\end{theorem}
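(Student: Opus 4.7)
My plan is to reduce the theorem to the following claim: for every nonzero $f\in V_\omega$ with $\omega\ne 0$, one can construct a nonzero observable $g\in V_0$ (so $g$ commutes with $H$) such that
\[\lambda_{\mathcal{L}}(g)\le 2\min\!\left(D,\ \frac{1}{1-e^{-|\beta\omega|}}\right)\lambda_{\mathcal{L}}(f).\]
Once this is established, membership $g\in V_0$ forces $\lambda_{\mathcal{L},0}\le\lambda_{\mathcal{L}}(g)$, and choosing $f$ to be a Rayleigh-quotient minimizer on $V_\omega$ yields the theorem at once. The two factors come from two independent constructions of $g$; we simply take whichever bound is stronger and the $\max$ in the statement records the better of them.

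\textbf{Construction of $g$.} The natural observables in $V_0$ built from $f\in V_\omega$ are quadratic combinations such as $f^\dag f$, $ff^\dag$, and their linear combinations (recentered so $\tr(\rho g)=0$ if needed); these lie in $V_0$ automatically because $f^\dag\in V_{-\omega}$, so the Bohr frequencies cancel. To transfer the Rayleigh quotient, I would invoke the Leibniz-type identity for a GKLS generator with Kraus operators $\{K_j\}$,
\[\mathcal{L}(AB)-\mathcal{L}(A)B-A\,\mathcal{L}(B)=-\sum_j[K_j^\dag,A][K_j,B],\]
specialized to $A=f$, $B=f^\dag$. Combined with Cauchy--Schwarz in the KMS inner product and KMS reversibility~\eqref{eq:kms reversible condition}, this should yield $\mathcal{E}_{\mathcal{L}}(g)$ as a bilinear functional controlled by $\mathcal{E}_{\mathcal{L}}(f)$.

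\textbf{The two routes.} For the factor $1/(1-e^{-|\beta\omega|})$, I would use the detailed balance identity $G(-\omega)=e^{\beta\omega}G(\omega)$ together with the Gibbs-weight relation $\pi(u_i)=e^{-\beta\omega}\pi(u_j)$ for pairs at Bohr frequency $\omega$. The asymmetry between $\tr(\rho f^\dag f)$ and $\tr(\rho ff^\dag)$ then contributes a factor $(1-e^{-|\beta\omega|})$ to the variance of a suitable signed combination such as $g=f^\dag f-ff^\dag$ (with sign depending on $\sgn(\omega)$). For the factor $1/D$, I would use the short-AP hypothesis to decompose $\spec(H)$ into maximal ``$\omega$-chains'' $\lambda,\lambda+\omega,\ldots,\lambda+(\ell-1)\omega$, each of length $\ell\le D$. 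Within a chain, $f$ is determined by its blocks $F_k:=\Pi_{\lambda+k\omega}\,f\,\Pi_{\lambda+(k-1)\omega}$ on the ``edges'' of a length-$\ell$ path, and I would construct $g$ as a discrete antiderivative on the ``vertices'' via a telescoping sum such as $G_m=\sum_{k<m}F_kF_k^\dag$ (or an analogous symmetric variant). A discrete Poincar\'e inequality on the length-$\le D$ path then gives $\Var(g)\gtrsim\Var(f)/D$, while the Leibniz-based comparison above keeps $\mathcal{E}_{\mathcal{L}}(g)$ bounded by $\mathcal{E}_{\mathcal{L}}(f)$.

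\textbf{Main obstacle.} The most delicate step is executing the Dirichlet-form comparison rigorously for the AP route. The Davies form $\mathcal{E}_{\mathcal{L}}$ couples all Bohr sectors through its jump operators, so the commutators $[K_j,f]$ produced by the Leibniz identity can in principle cross chain boundaries; one must show that these cross-chain contributions are absorbed cleanly by the per-chain Poincar\'e estimate and do not destroy the $1/D$ factor. I expect the resolution to hinge on showing that within a chain the $G$-factors governing the action of $\mathcal{L}|_{V_\omega}$ and $\mathcal{L}|_{V_0}$ are matched (both involve $G(\omega)$ and $G(-\omega)$ with the same Gibbs weights), so the per-chain estimate is not weakened by off-chain noise. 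The detailed-balance route, being essentially algebraic, should be significantly more straightforward.
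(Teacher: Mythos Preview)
Your high-level strategy---construct $g\in V_0$ from $f\in V_\omega$ and compare Rayleigh quotients---matches the paper's, but the specific constructions you propose have a fatal homogeneity mismatch. With $g$ quadratic in $f$ (e.g.\ $g=f^\dag f$, $ff^\dag$, or $f^\dag f-ff^\dag$), the Dirichlet form $\mathcal{E}_\mathcal{L}(g)$ is degree four in $f$ while $\mathcal{E}_\mathcal{L}(f)$ is degree two, so the inequality ``$\mathcal{E}_\mathcal{L}(g)\lesssim \mathcal{E}_\mathcal{L}(f)$'' cannot hold uniformly (scale $f\mapsto cf$). The Leibniz identity $[K_j,f^\dag f]=[K_j,f^\dag]f+f^\dag[K_j,f]$ followed by Cauchy--Schwarz would at best give $\mathcal{E}_\mathcal{L}(g)\lesssim\|f\|_{\mathrm{op}}^2\,\mathcal{E}_\mathcal{L}(f)$, and the operator norm of $f$ does not cancel against anything on the variance side. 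Your ``discrete antiderivative'' $G_m=\sum_{k<m}F_kF_k^\dag$ has the same quartic-homogeneity problem; moreover the Poincar\'e constant on a path of length $D$ is $\Theta(1/D^2)$, not $1/D$, so even if homogeneity were fixed you would likely lose an extra factor of $D$.

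The paper resolves the homogeneity problem by taking square roots: $g=e^{\beta\omega/4}(ff^\dag)^{1/2}$ and $h=e^{-\beta\omega/4}(f^\dag f)^{1/2}$, both degree-one in $f$. The Dirichlet comparison $2\mathcal{E}_\mathcal{L}(f)\ge\mathcal{E}_\mathcal{L}(g)+\mathcal{E}_\mathcal{L}(h)$ is then proved not via Leibniz but by writing the SVD $f=\sum_i\lambda_i\ket{u_i}\bra{v_i}$ (so that $g,h$ share the same singular values) and verifying the pointwise trace inequality $\tr(AgA^\dag g)+\tr(BhB^\dag h)\ge\tr(AfB^\dag f^\dag)+\tr(Bf^\dag A^\dag f)$; the difference is exactly $\sum_{i,j}\sqrt{\lambda_i\lambda_j}\,|\bra{u_i}A\ket{u_j}-\bra{v_i}B\ket{v_j}|^2\ge 0$. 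On the variance side, H\"older gives $\Var(g)\ge(1-\sum_i\rho(u_i))\Var(f)$, and the factor $1-\sum_i\rho(u_i)$ is bounded below either by $1-e^{-|\beta\omega|}$ (from $\sum_i\rho(v_i)=e^{\beta\omega}\sum_i\rho(u_i)\le 1$) or by $1/D$ via a maximal-$\omega$-chain decomposition together with the multiplicity bound $r(\alpha_j)\le\min(m(\alpha_j),m(\alpha_{j+1}))$. Your per-chain intuition is correct, but the mechanism that gives the linear loss in $D$ is eigenvalue counting, not a Poincar\'e inequality; and the ``algebraic'' exponential-factor route still requires the full square-root Dirichlet comparison, so it is not as straightforward as you suggest.
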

When $H$ has simple spectrum, observables in $V_0$ are referred to as \emph{diagonal} observables, since they are diagonal in the unique eigenbasis of $H,$ while observables in $\bigcup_{\omega \neq 0} V_\omega$ are called \emph{non-diagonal} observables. In this case, the quantity $\min_{\omega \neq 0} \lambda_{\mathcal{L}, \omega}$ characterizes the decay rate of non-diagonal observables, and is known as the \emph{quantum decoherence rate}. \Cref{thm:compare spectral gap V omega and V 0} provides a comparison between the quantum decoherence rate and the decay rate of diagonal observables.

We also obtain the following generalization of \cite{temme13}, which relates the spectral gap in $V_0$ to the spectral gap of a certain classical Markov generator, even when $H$ has repeated eigenvalues. The key observation is that any Hermitian observables in $V_0$ is diagonal in some orthonormal eigenbasis of $H.$
\begin{proposition}\label{prop:classical chain V0}

  For any Hamiltonian $H$ and any Davies generator $\mathcal{L}$ associated with $H,$ there exists an orthonormal eigenbasis $\hat{U}$ of $H$ such that the spectral gap of (Hermitian or arbitrary) observables in $V_0$ is equal to the spectral gap of the classical Markov operator associated with $\mathcal{L}$ and $\hat{U}$, i.e.,
  \[\lambda_{\mathcal{L},0}  = \lambda_{\mathcal{L},0}^\mathrm{H} = \lambda_{\mathcal{L},\hat{U},\mathrm{cl}}.\]
  Moreover, $\hat{U}$ minimizes the associated classical spectral gap over all orthonormal eigenbases of $H$, i.e.,
\[
\hat{U} = \arg\min_{U} \lambda_{\mathcal{L},U,\mathrm{cl}}.
\]

\end{proposition}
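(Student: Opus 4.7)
The plan is to construct $\hat{U}$ from a minimizing Hermitian observable in $V_0$ and then match the two spectral gaps through a diagonal embedding. First, I would invoke \cref{prop:compare spectral gap with hermitian spectral gap in V0} to reduce $\lambda_{\mathcal{L},0}$ to $\lambda_{\mathcal{L},0}^\mathrm{H}$, and let $f^*\in V_0$ be a Hermitian observable attaining this value. By \cref{prop:observable in V0 commute with $H$}, $f^*$ commutes with $H$, and since both operators are Hermitian and commute they admit a simultaneous orthonormal eigenbasis $\hat{U}=\{u_i\}$. In this basis $f^*$ is diagonal, and so corresponds to the real function $F^*:\hat U\to\R$ defined by $F^*(u_i) = \langle u_i | f^* | u_i\rangle$.

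The crux of the proof is the following classical embedding identity: for any $F:\hat{U}\to\R$ and the associated diagonal Hermitian matrix $f_F := \sum_i F(u_i)\ketbra{u_i}{u_i}\in V_0$,
\[
\Var(f_F) = \Var_\pi(F), \qquad \mathcal{E}_{\mathcal{L}}(f_F) = \mathcal{E}_{\mathcal{L},\hat{U},\mathrm{cl}}(F).
\]
The variance identity is immediate because $f_F$ is a function of $H$ and therefore commutes with $\rho^{1/2}$, reducing $\langle f_F, f_F\rangle_\rho$ to $\tr(\rho f_F^2) = \E_\pi[F^2]$ and $\tr(\rho f_F)$ to $\E_\pi[F]$. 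The Dirichlet-form identity is verified by expanding each summand $-\langle \mathcal{L}_{\omega, S}(f_F), f_F\rangle_\rho$ via the commutator representation $\mathcal{L}_{\omega,S}(f) = \tfrac{G(\omega)}{2}(S(\omega)^\dag [f,S(\omega)] + [S(\omega)^\dag,f]S(\omega))$, passing to the basis $\hat U$, and noting that $[f_F, S(\omega)]$ has matrix elements $(F(u_i) - F(u_j))\langle u_i|S|u_j\rangle$ between basis vectors $\ket{u_i},\ket{u_j}$ with $E_i - E_j = \omega$. Summing over $\omega$ and $S$ and using the detailed-balance relation $G(\omega) = G(-\omega)e^{-\beta\omega}$ then reproduces the classical Dirichlet form with transition rates $P_{\mathcal{L},\hat U}[u_i\to u_j]$ exactly as in \cref{def:classical Markov generator}. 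I expect this bookkeeping to be the main technical step, since one must verify that cross-terms between basis vectors $\ket{u_i},\ket{u_j}$ lying in a common $H$-eigenspace---a possibility absent from the simple-spectrum setting of \cite{temme13}---do not spoil the identification; they do not, precisely because $f_F$ is diagonal in $\hat U$, so that $[f_F, \Pi_\lambda S \Pi_\lambda]$ restricted to the $\lambda$-eigenspace acts only between basis vectors on which $F$ takes different values.

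With the embedding identity in hand, both comparisons follow. Plugging $F = F^*$ yields $\lambda_{\mathcal{L},\hat{U},\mathrm{cl}} \leq \lambda_\mathcal{L}(f^*) = \lambda_{\mathcal{L},0}^\mathrm{H}$, while plugging any minimizer $F$ of the classical Rayleigh quotient in $\hat{U}$ produces a Hermitian $f_F\in V_0$ with the same Rayleigh quotient, so $\lambda_{\mathcal{L},0}^\mathrm{H} \leq \lambda_{\mathcal{L},\hat{U},\mathrm{cl}}$.

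For the ``moreover'' claim, the embedding is in fact basis-independent: for any orthonormal eigenbasis $U$ of $H$ and any $F:U\to\R$, the matrix $f_F := \sum_i F(u_i)\ketbra{u_i}{u_i}$ commutes with $H$ and hence lies in $V_0$, and the same calculation yields $\lambda_\mathcal{L}(f_F) = \mathcal{E}_{\mathcal{L}, U,\mathrm{cl}}(F)/\Var_\pi(F)$. Taking the infimum over $F$ gives $\lambda_{\mathcal{L},0}^\mathrm{H} \leq \lambda_{\mathcal{L},U,\mathrm{cl}}$, and combined with $\lambda_{\mathcal{L},0}^\mathrm{H} = \lambda_{\mathcal{L},\hat{U},\mathrm{cl}}$ this shows that $\hat U$ minimizes the classical spectral gap over all orthonormal eigenbases of $H$.
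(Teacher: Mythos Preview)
Your proposal is correct and follows essentially the same approach as the paper: reduce to Hermitian observables via \cref{prop:compare spectral gap with hermitian spectral gap in V0}, simultaneously diagonalize any Hermitian $f\in V_0$ with $H$, and prove the embedding identity $\Var(f_F)=\Var_\pi(F)$, $\mathcal{E}_\mathcal{L}(f_F)=\mathcal{E}_{\mathcal{L},U,\mathrm{cl}}(F)$ to match Rayleigh quotients. The only cosmetic difference is that the paper proves directly $\lambda_{\mathcal{L},0}^\mathrm{H}=\min_U\lambda_{\mathcal{L},U,\mathrm{cl}}$ by letting the diagonalizing basis vary with $f$, whereas you first fix $\hat U$ from a minimizer and then show it is optimal; the paper also verifies the Dirichlet identity by expanding $\mathcal{L}(f)$ in the basis $\{\ket{u_j}\bra{u_k}\}$ and computing matrix elements rather than via the commutator form, but both computations are equivalent.
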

The following is a direct corollary of \cref{thm:general comparison theorem dependent only on length of arithmetic sequence} and \cref{prop:classical chain V0}.
\begin{theorem}\label{thm:quantum vs classical spectral gap comparison}
    If the spectrum of a Hamiltonian $H$ does not contain a proper $(D+1)$-arithmetic progression, then for any Davies generator $\mathcal{L},$ there exists an orthonormal eigenbasis
    $\hat{U}$
    of $H$ such that the spectral gap of $\mathcal{L}$ can be compared to the spectral gap $\lambda_{\mathcal{L},\hat{U},\mathrm{cl}}$ of the classical Markov operator associated with $\mathcal{L}$ and $\hat{U}$ via the following sandwiching inequality:
    \[ \lambda_{\mathcal{L},\hat{U},\mathrm{cl}} \geq \lambda_{\mathcal{L}} \geq \frac{1}{2D} \lambda_{\mathcal{L},\hat{U},\mathrm{cl}} .\]
   Moreover, $\hat{U}$ minimizes the classical spectral gap over all orthonormal eigenbases of $H$, i.e.,
\[
\hat{U} = \arg\min_{U} \lambda_{\mathcal{L},U,\mathrm{cl}}.
\]
If additionally, the Hamiltonian $H$ has a simple spectrum, then $H$ has a unique orthonormal eigenbasis, and the quantum spectral gap $\lambda_{\mathcal{L}}$ can be compared to the spectral gap of the Markov generator associated with this unique eigenbasis, that is, the classical spectral gap $ \lambda_{\mathcal{L},\mathrm{cl}}$ of $\mathcal{L},$ via
\[ \lambda_{\mathcal{L},\mathrm{cl}} \geq \lambda_{\mathcal{L}} \geq \frac{1}{2D} \lambda_{\mathcal{L},\mathrm{cl}}. \]
\end{theorem}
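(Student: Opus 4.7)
The plan is a direct substitution: the theorem packages together the quantum-to-quantum sandwich of \cref{thm:general comparison theorem dependent only on length of arithmetic sequence} with the quantum-to-classical identification of \cref{prop:classical chain V0}. No new ideas are required, and the paper itself advertises the result as a corollary of these two statements.

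First I would apply \cref{prop:classical chain V0} to select the distinguished orthonormal eigenbasis $\hat{U}$ of $H$. That proposition simultaneously hands us the equality $\lambda_{\mathcal{L},0} = \lambda_{\mathcal{L},\hat{U},\mathrm{cl}}$ and the minimizing property $\hat{U} = \argmin_{U}\lambda_{\mathcal{L},U,\mathrm{cl}}$, which is already the $\argmin$ conclusion of the theorem. Then I would apply \cref{thm:general comparison theorem dependent only on length of arithmetic sequence}, whose no-proper-$(D+1)$-AP hypothesis on $\spec(H)$ is exactly the one assumed, to get
\[ \lambda_{\mathcal{L},0} \;\geq\; \lambda_{\mathcal{L}} \;\geq\; \frac{1}{2D}\,\lambda_{\mathcal{L},0}. \]
Substituting the first equality into this sandwich immediately yields
\[ \lambda_{\mathcal{L},\hat{U},\mathrm{cl}} \;\geq\; \lambda_{\mathcal{L}} \;\geq\; \frac{1}{2D}\,\lambda_{\mathcal{L},\hat{U},\mathrm{cl}}. \]

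For the simple-spectrum addendum, simple spectrum forces the eigenbasis of $H$ to be unique up to phases (which do not affect the classical Markov generator of \cref{def:classical Markov generator}), so $\hat{U}$ must coincide with this unique eigenbasis; consequently $\lambda_{\mathcal{L},\hat{U},\mathrm{cl}}$ equals the unambiguous classical spectral gap $\lambda_{\mathcal{L},\mathrm{cl}}$ of \cref{prop:V0 equal classical for nondegenerate H}, and the specialized inequality drops out. Since the whole argument is a substitution, there is no genuine technical obstacle; the only point worth flagging is that one must fix the basis via \cref{prop:classical chain V0} \emph{before} inserting into the quantum sandwich, because the lower bound $\lambda_{\mathcal{L}} \geq \frac{1}{2D}\lambda_{\mathcal{L},U,\mathrm{cl}}$ need not hold for an arbitrary orthonormal eigenbasis $U$ of a degenerate $H$ — only for the minimizing choice $\hat{U}$.
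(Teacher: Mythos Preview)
Your proposal is correct and is essentially identical to the paper's proof: the paper also selects $\hat{U}$ as the minimizer via \cref{prop:classical chain V0} to obtain $\lambda_{\mathcal{L},0} = \lambda_{\mathcal{L},\hat{U},\mathrm{cl}}$, and then substitutes into the sandwich inequality of \cref{thm:general comparison theorem dependent only on length of arithmetic sequence}. Your additional remark about needing the minimizing basis for the lower bound is a correct and useful clarification.
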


Another corollary of \cref{thm:general comparison theorem dependent only on length of arithmetic sequence} and \cref{prop:classical chain V0} is that for short AP Hamiltonians, the spectral gap of a Davies Lindbladian is approximately witnessed by a projection which commutes with the Hamiltonian. The proof is to apply the classical Cheeger inequality to the classical Markov generator associated with $\mathcal{L}$ and $\hat{U}$.

\begin{corollary}\label{cor:cheeger}
    Suppose the spectrum of $H$ does not contain a proper arithmetic progression of length $D+1$. Let $\mathcal{L}$ be a Davies generator associated with $H$, inverse temperature $ \beta \geq 0,$ transition rate function $G:\R\to \R,$ and jump operator set $\mathcal{S}$. Then, there is a projection $P$ (i.e. $P^2 = P$) such that $[P,H]=0$ and
    $$ \mathcal{E}_\mathcal{L}(P)\le 4 \sqrt{D M \lambda_{\mathcal{L}}} \cdot \Var_\rho(P),$$
    where \[M = \|G\|_\infty \cdot \|\sum_{S\in \mathcal{S}} S S^\dag \| ,\]
   and $\| \cdot\|$ denotes the matrix spectral norm and $\|G\|_\infty := \sup_\omega |G(\omega)|$.
\end{corollary}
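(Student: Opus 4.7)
The plan is to apply \cref{thm:quantum vs classical spectral gap comparison} to transfer the problem to the associated classical Markov chain, invoke the classical Cheeger inequality to produce a low-conductance subset of eigenvectors, and lift its indicator function to a projection commuting with $H$. By \cref{thm:quantum vs classical spectral gap comparison}, there is an orthonormal eigenbasis $\hat{U} = \{u_i\}$ of $H$ with $\lambda_{\mathcal{L},\hat{U},\mathrm{cl}} \le 2D\lambda_\mathcal{L}$. For the corresponding classical generator (\cref{def:classical Markov generator}), with transition rates $Q(u_i,u_j) = G(E_j-E_i)\sum_{S\in\mathcal{S}}|\langle u_i|S|u_j\rangle|^2$ and stationary distribution $\pi(u_i) = \langle u_i|\rho|u_i\rangle$, the classical Cheeger inequality for continuous-time reversible chains yields
\[\lambda_{\mathcal{L},\hat{U},\mathrm{cl}} \;\ge\; \frac{\Phi^2}{2q_{\max}}, \qquad \Phi \;:=\; \min_{T:\,\pi(T)\le 1/2} \frac{\sum_{i\in T,\,j\notin T}\pi(u_i)\,Q(u_i,u_j)}{\pi(T)},\]
where $q_{\max} := \max_i \sum_{j\neq i}Q(u_i,u_j)$ is the maximum total rate out of any state.

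Using completeness of $\hat{U}$, for each $i$,
\[\sum_{j\neq i}Q(u_i,u_j) \;\le\; \|G\|_\infty \sum_j \sum_{S\in\mathcal{S}}|\langle u_i|S|u_j\rangle|^2 \;=\; \|G\|_\infty\,\langle u_i| \sum_{S\in\mathcal{S}}SS^\dag |u_i\rangle \;\le\; M,\]
so $\Phi \le \sqrt{2q_{\max}\lambda_{\mathcal{L},\hat{U},\mathrm{cl}}} \le \sqrt{4DM\lambda_\mathcal{L}} = 2\sqrt{DM\lambda_\mathcal{L}}$. Let $S^*\subseteq \hat{U}$ achieve the Cheeger minimum, and define the projection $P := \sum_{u\in S^*}|u\rangle\langle u|$; since each $u\in\hat{U}$ is an eigenvector of $H$, $[P,H]=0$ and so $P\in V_0$. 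By (the proof of) \cref{prop:classical chain V0}, a diagonal observable in the basis $\hat{U}$ lies in $V_0$ and has quantum Dirichlet form equal to its classical Dirichlet form; in particular,
\[\mathcal{E}_\mathcal{L}(P) \;=\; \sum_{i\in S^*,\,j\notin S^*} \pi(u_i)\,Q(u_i,u_j), \qquad \Var_\rho(P) \;=\; \pi(S^*)\bigl(1-\pi(S^*)\bigr) \;\ge\; \pi(S^*)/2.\]
Thus $\mathcal{E}_\mathcal{L}(P)/\Var_\rho(P) \le 2\Phi \le 4\sqrt{DM\lambda_\mathcal{L}}$, which is the claimed bound.

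The only nontrivial step is the identification of the quantum Dirichlet form on $V_0$ with the classical Dirichlet form on $\hat{U}$ --- i.e., that the indicator $\mathbf{1}_{S^*}$ genuinely lifts to a Hermitian $V_0$-observable with matching Rayleigh quotient. This is precisely the content of (the proof of) \cref{prop:classical chain V0}; the remainder is the standard classical Cheeger bound together with the already-established spectral gap comparison, plus elementary bookkeeping on transition rates.
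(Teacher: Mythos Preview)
Your proposal is correct and follows essentially the same approach as the paper: both invoke \cref{thm:quantum vs classical spectral gap comparison} to pass to the classical chain on $\hat U$, bound the maximum escape rate by $M$, apply the classical Cheeger inequality $\Phi^2\le 2q_{\max}\lambda_{\mathrm{cl}}$, and lift the minimizing indicator to a diagonal projection whose quantum Dirichlet form and variance coincide with the classical ones via \cref{prop:classical chain V0}. The arithmetic and the chain of inequalities are identical to the paper's.
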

The factor \(M\) comes from \(\max_u \sum_{v \neq u} P_{\mathcal{L}, \hat{U}}[u \to v]\), i.e., the maximum total transition rate out of any state of the associated classical Markov operator. For common choices of \(G\), such as \(G(\omega) = \frac{1}{1 + \exp(\beta \omega)}\) and \(G(\omega) = \min\big(1, \exp(-\frac{\beta \omega}{2})\big)\), which corresponds to the transition rates Glauber dynamics and Metropolis-Hasting dynamics respectively, we have \(\|G\|_\infty \leq 1\), and with jump operators normalized so that \(\sum_{S \in \mathcal{S}} S S^\dagger \leq I\), it follows that \(M\leq 1.\)

\paragraph{Short AP Property for Generic Hamiltonians.}
Next, we show that the absence of 3-term arithmetic progressions is a \emph{generic} property under various models of perturbation. For instance, we prove that adding a generic external field to any fixed Hamiltonian typically yields a new Hamiltonian whose spectrum contains neither 3-term arithmetic progressions nor repeated eigenvalues\footnote{A special case was considered in \cite{Merkli2008DynamicsOC}, focusing on the restricted setting \( H = H_0 + \sum_i h_i Z_i \) with \( H_0 = \sum_{i,j} J_{ij} Z_i Z_j \); our result applies more generally to arbitrary \( H_0 \) and to broader classes of perturbation (see \cref{thm:generic transverse field} and \cref{sec:no AP in perturbed Hamiltonian}).}.

\begin{theorem}\label{thm:generic transverse field}
For a fixed Hamiltonian $H_0\in \C^{N\times N}$ and Hermitian $P\in \C^{2\times 2}$ where $P$ is not a multiple of the identity, let $P_i := I^{\otimes (i-1)} \otimes P \otimes I^{\otimes (n - i)}$, and let $H := H_0 + \sum_{i=1}^n h_i P_i$ where $\mathbf{h} = (h_i)_{i=1}^n\in \R^n$. Then the set of $\mathbf{h}$ such that $ H$ has a $3$-arithmetic progression or repeated eigenvalues has Lebesgue measure $0$.
\end{theorem}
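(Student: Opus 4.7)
The plan is to exhibit a single polynomial $Q(\mathbf{h})$ in $\mathbf{h}=(h_1,\ldots,h_n)$ whose zero set contains the bad set of $\mathbf{h}$ for which $H(\mathbf{h})$ has a repeated eigenvalue or a proper $3$-term arithmetic progression in its spectrum, and then verify $Q\not\equiv 0$. The standard fact that the zero set of a nonzero polynomial on $\R^n$ has Lebesgue measure zero will then finish the proof. The natural choice is
\[
Q(\mathbf{h}):=\Disc(\mathbf{h})\cdot R(\mathbf{h}),\qquad R(\mathbf{h}):=\prod_{\substack{i,j,k\\ \text{pairwise distinct}}}\bigl(\lambda_i(\mathbf{h})+\lambda_k(\mathbf{h})-2\lambda_j(\mathbf{h})\bigr),
\]
where $\lambda_1(\mathbf{h}),\ldots,\lambda_N(\mathbf{h})$ are the eigenvalues of $H(\mathbf{h})$ in an arbitrary ordering and $\Disc$ denotes the usual discriminant. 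Both factors are invariant under any permutation of the $\lambda_i$'s, hence are symmetric polynomials in the roots of the characteristic polynomial of $H(\mathbf{h})$, and so are polynomials in the coefficients of that characteristic polynomial, which are in turn polynomials (in fact affine) in $\mathbf{h}$. Thus $Q$ is a polynomial in $\mathbf{h}$, and a brief case analysis on whether $\{\lambda_i,\lambda_j,\lambda_k\}$ are all distinct, two equal, or all equal shows that $Q(\mathbf{h})=0$ whenever $H(\mathbf{h})$ is bad.

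To witness $Q\not\equiv 0$, I would study $H(t\mathbf{v})$ along a ray $t\to\infty$ for a well-chosen $\mathbf{v}\in\R^n$. Since $P$ is $2\times 2$, Hermitian, and not a multiple of the identity, it has two distinct eigenvalues $p_\pm$, and writing $P=\alpha I+\gamma P'$ with $\gamma=(p_+-p_-)/2\neq 0$ and $\spec(P')=\{\pm 1\}$, the pairwise commuting tensor operators $P'_1,\ldots,P'_n$ are simultaneously diagonal in a product basis indexed by $\epsilon\in\{\pm 1\}^n$, on which $\sum_i v_iP_i$ acts by $\alpha\sum_i v_i+\gamma\sum_i\epsilon_iv_i$. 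Now choose $v_1,\ldots,v_n$ to be linearly independent over $\mathbb{Q}$. Then the $2^n$ resulting eigenvalues are pairwise distinct, and any relation $\sum_i v_i\bigl(\epsilon_i^{(1)}+\epsilon_i^{(3)}-2\epsilon_i^{(2)}\bigr)=0$ forces $\epsilon_i^{(1)}=\epsilon_i^{(2)}=\epsilon_i^{(3)}$ for every $i$ (since the average of two $\pm 1$ values lies in $\{\pm 1\}$ only when they coincide), ruling out any proper $3$-AP in the limiting spectrum. By Weyl's inequality the eigenvalues of $H(t\mathbf{v})=H_0+t\sum_i v_iP_i$ differ from $t$ times those of $\sum_i v_iP_i$ by at most $\|H_0\|$, so for all sufficiently large $t$ every factor of $\Disc$ and $R$ remains nonzero, giving $Q(t\mathbf{v})\neq 0$.

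The step I expect to require the most care is justifying that $R$ really is a polynomial in $\mathbf{h}$: although each factor $\lambda_i+\lambda_k-2\lambda_j$ is only an algebraic function of $\mathbf{h}$ (and a fixed ordering of the $\lambda_i$'s is not even continuous in $\mathbf{h}$ across level crossings), the full product over all ordered pairwise distinct triples is manifestly symmetric in the multiset of eigenvalues, and so descends via the fundamental theorem of symmetric polynomials to a polynomial in the coefficients of the characteristic polynomial. The remaining ingredients are either classical (Weyl's inequality, the zero-set-of-a-nonzero-polynomial fact on $\R^n$) or a short linear-independence computation over $\mathbb{Q}$.
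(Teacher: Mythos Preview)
Your proposal is correct and is essentially the same argument as the paper's: define the symmetric products $\Disc$ and $R$ (the paper calls them $G$ and $F$), observe they are polynomials in $\mathbf{h}$ via the fundamental theorem of symmetric polynomials, and witness nonvanishing by sending the perturbation to infinity so that $H_0$ becomes negligible and one is reduced to checking that $\sum_i v_iP_i$ has no repeated eigenvalues or $3$-APs when the $v_i$ are rationally independent. The only differences are cosmetic: you invoke Weyl's inequality where the paper rescales by $R^{-1}$ and passes to a limit, and your parenthetical ``(in fact affine)'' is a slip---the coefficients of the characteristic polynomial are polynomials in $\mathbf{h}$ but not affine---though this does not affect the argument.
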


Our technique also applies to other types of perturbations (see \cref{sec:no AP in perturbed Hamiltonian}). 
A direct consequence of our main comparison theorems (\cref{thm:general comparison theorem dependent only on length of arithmetic sequence,thm:quantum vs classical spectral gap comparison}), together with the absence of long proper arithmetic progressions in the spectra of generic Hamiltonians (\cref{thm:generic transverse field,thm:generic higher order term,thm:generic quadratic term,thm:1D XY+Z}), is that for such Hamiltonians, the quantum spectral gap is within a constant factor of the spectral gap of a corresponding classical Markov generator. In particular, we obtain the following:

\begin{corollary}\label{cor:generic transverse field}
For a fixed Hamiltonian $H_0\in \C^{N\times N}$ and Hermitian $P\in \C^{2\times 2}$ where $P$ is not a multiple of the identity, let $P_i := I^{\otimes (i-1)} \otimes P \otimes I^{\otimes (n - i)}$, and let $H := H_0 + \sum_{i=1}^n h_i P_i$ where $\mathbf{h} = (h_i)_{i=1}^n\in \R^n$. Then, for Lebesgue almost every $\mathbf{h}$, the Hamiltonian $H$ has simple spectrum and for every  Davies generator $\mathcal{L}$ associated with $H$:
$
\lambda_{\mathcal{L}} = \Theta(1) \lambda_{\mathcal{L},\mathrm{cl}}.
$
\end{corollary}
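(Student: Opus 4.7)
The plan is to observe that this corollary follows immediately by combining Theorem \ref{thm:generic transverse field} (which guarantees simple spectrum and the absence of proper $3$-term arithmetic progressions generically in $\mathbf{h}$) with the sandwiching inequality of Theorem \ref{thm:quantum vs classical spectral gap comparison} (which converts the short-AP property into the desired quantum--classical comparison). There is no substantive obstacle to overcome; the work has already been done in those two theorems, and the task here is simply to invoke them with the correct parameters.

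More concretely, I would first apply Theorem \ref{thm:generic transverse field} to the pair $(H_0, P)$ to produce a Lebesgue-null set $\mathcal{N}\subset \R^n$ such that for every $\mathbf{h}\notin \mathcal{N}$, the Hamiltonian $H = H_0 + \sum_{i=1}^n h_i P_i$ has simple spectrum and contains no proper $3$-term arithmetic progression in its spectrum. Fixing any such $\mathbf{h}$, the hypothesis of Theorem \ref{thm:quantum vs classical spectral gap comparison} is then satisfied with $D = 2$, so for any Davies generator $\mathcal{L}$ associated with $H$ we obtain
\[
\lambda_{\mathcal{L}, \mathrm{cl}} \;\geq\; \lambda_{\mathcal{L}} \;\geq\; \tfrac{1}{4}\,\lambda_{\mathcal{L}, \mathrm{cl}},
\]
where $\lambda_{\mathcal{L},\mathrm{cl}}$ is the spectral gap of the classical Markov generator associated with $\mathcal{L}$ and the unique orthonormal eigenbasis of $H$ (which exists because $H$ has simple spectrum). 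This is precisely the statement that $\lambda_{\mathcal{L}} = \Theta(1)\,\lambda_{\mathcal{L}, \mathrm{cl}}$, with explicit constants $\tfrac{1}{4}$ and $1$ independent of $n$, $H_0$, $P$, $\beta$, and $\mathcal{S}$.

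The only thing worth double-checking is the bookkeeping between ``length $D+1$ AP forbidden'' and the constant $\tfrac{1}{2D}$ in Theorem \ref{thm:quantum vs classical spectral gap comparison}: ruling out proper $3$-APs corresponds to $D+1=3$, i.e.\ $D=2$, giving the factor $\tfrac{1}{2D}=\tfrac{1}{4}$ used above. Since both hypotheses (simple spectrum and no proper $3$-AP) are supplied simultaneously by Theorem \ref{thm:generic transverse field}, no additional genericity argument is needed, and the conclusion follows on the same full-measure set.
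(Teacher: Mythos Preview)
Your proposal is correct and matches the paper's approach exactly: the paper does not give a separate proof of this corollary, instead presenting it as a direct consequence of Theorem~\ref{thm:generic transverse field} and Theorem~\ref{thm:quantum vs classical spectral gap comparison}, just as you do. Your bookkeeping with $D=2$ and the resulting constant $\tfrac14$ is also correct.
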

\begin{remark}\label{rem:bohrvsap}
The assumption that the spectrum of $H$ contains no long proper arithmetic progressions is related to, but significantly weaker than, the condition that $H$ has simple spectrum and simple Bohr spectrum--also known as the non-degenerate energy gaps condition (see, e.g., \cite[Definition 7]{HuangHarrow25}).

Clearly, when $H $ has simple spectrum and simple Bohr spectrum, \cref{thm:general comparison theorem dependent only on length of arithmetic sequence} applies with $D =2.$
Moreover, under this assumption, each subspace $V_\omega$ with $\omega \neq 0$ has dimension 1. In that case, it is easy to see that $ 
\lambda_{\mathcal{L}} \geq \frac{1}{2} \lambda_{\mathcal{L}, \mathrm{cl}}$  (see \cref{prop:simple comparison result when orbit has dimension 1}).

On the other hand, the assumption that $H$ does not contain long (proper) arithmetic progressions allows for a much richer structure. In particular, it permits the subspaces $V_\omega$ (for $\omega \neq 0$) to have dimension as large as $\Omega(2^n)$.

As an example, consider $H = \sum_{i=1}^n h_i Z_i$, or the 1D XY+Z model with periodic boundary conditions (see e.g.\ \cite{Keating_2015}):
\[
 H  =  J \sum_{i=1}^n X_i Y_{i+1} +  h \sum_{i=1}^n Z_i \quad \text{where}\quad Y_{n+1}= Y_1.
\]
The spectra of these Hamiltonians generically do not contain 3-term arithmetic progressions or repeated eigenvalues (see the proofs of \cref{thm:generic transverse field,thm:1D XY+Z}). The eigenvalues of $H$ in these cases take the form
\[
\lambda_{\textbf{z}} = \sum_{i=1}^n z_i m_i, \quad \text{with } \textbf{z} \in \{\pm 1\}^n,
\]
where the \( m_i \) are fixed parameters.

Take $\omega = 2m_n$. Then for any pair of configurations $\textbf{z}, \textbf{z'}$ satisfying $z_i = z'_i$ for all $i \in [n-1]$ and $z_n = -z'_n$, we have:
\[
\lambda_{\textbf{z}} - \lambda_{\textbf{z}'} = \omega.
\]
There are $2^{n-1}$ such pairs, implying that the dimension of $V_\omega$ is at least $2^{n-1}$. This illustrates that the assumption of avoiding long arithmetic progressions still allows $V_\omega$ with $\omega \neq 0$ to have exponentially large dimension and complex structure.
\end{remark}

\begin{remark} \label{rem:proper vs non-proper AP}
It is important to distinguish between \emph{proper} and \emph{non-proper} arithmetic progressions when applying \cref{thm:general comparison theorem dependent only on length of arithmetic sequence}.

For instance, consider the Hamiltonian \( H = \sum_{i,j} Z_i Z_j \). This Hamiltonian has only \( O(n) \) distinct eigenvalues, so by the pigeonhole principle, at least one eigenvalue must have multiplicity \( 2^{n - o(1)} \). As a result, the spectrum of \( H \) contains a \emph{non-proper} arithmetic progression of length \( 2^{n - o(1)} \), where all terms are equal.

However, the longest proper arithmetic progression in the spectrum has length only \( O(n) \). Thus, despite the presence of a long non-proper progression due to the existence of an eigenvalue with high multiplicity, the structure relevant to \cref{thm:general comparison theorem dependent only on length of arithmetic sequence}, the proper arithmetic progressions, remains short.
\end{remark}

\subsection{Related Work and Significance}
\


Besides \cite{temme13}, several works considered the question of comparing quantum and classical spectral gaps \cite{chen2021fast,ramkumar2024mixing,chen2024randomized}, albeit under strong assumptions on the Hamiltonians and/or the Lindbladian dynamics. In particular, \cite{chen2021fast,ramkumar2024mixing,chen2024randomized} study Lindbladian dynamics defined by random jumps. While \cite{chen2024randomized} studies the Davies generator, \cite{chen2021fast,ramkumar2024mixing} study a physically-motivated variant of the Davies generator known as the CKG generator \cite{chen2023efficient}. The analysis in \cite{chen2021fast,ramkumar2024mixing} relies on bounding the spectral gap for the expected CKG generator over all random jumps; under their assumptions, the expected CKG generator coincides with the expected Davies generator up to a rescaling of the transition rate.
For any eigenbasis $U$ of $H,$ the expected Davies generator studied in \cite{chen2021fast,ramkumar2024mixing,chen2024randomized} preserves the subspace of observables that are diagonal in $U$, while factoring the non-diagonal observables into invariant subspaces of dimension $1$ i.e. $ \mathcal{L}(\ket{u_i}\bra{u_j})\subseteq \text{span}(\ket{u_i}\bra{u_j})$ for any distinct eigenvectors $u_i,u_j$ in $U$.\footnote{\cite{rajakumar2024gibbs} presents several results, of which the above assertion applies to all except the cyclic graph and hypercube Hamiltonians. In the cyclic graph case, the expected CKG generator coincides with the expected Davies generator, allowing our comparison result for Davies generators to streamline their analysis. The hypercube case, however, relies on a different approach tailored to their specific choice of jump operators.} In each case, these non-diagonal invariant subspaces are then analyzed directly via adaptations of Temme's argument and shown to be comparable to the associated classical spectral gap. In hindsight this comparison follows more conceptually from the following straightforward extension of Temme's result (specifically \cite[Theorem 11]{temme13} 
combined with a standard bottleneck argument applied to the classical Markov generator). Although the argument is elementary, it appears to have gone unnoticed in the literature. We include a proof in \cref{sec:missing proofs}.

\begin{proposition}[Extension of \cite{temme13}]\label{prop:simple comparison result when orbit has dimension 1}
    Consider a Hamiltonian $H,$ orthonormal eigenbasis $U =\{u_i\}$ of $H $ and a Davies generator $\mathcal{L}$ associated with $H.$ Suppose that for all
    $i\neq j$, $\mathcal{L}(\ket{u_i}\bra{u_j})\subseteq \text{span} (\ket{u_i}\bra{u_j})$. Then, 
    $ \lambda_{\mathcal{L}}\geq \frac{1}{2}\lambda_{\mathcal{L},U, \mathrm{cl}}.$
    In particular, when $ H$ has  simple spectrum and simple Bohr spectrum, $\lambda_{\mathcal{L}} \geq \frac{1}{2}\lambda_{\mathcal{L},\mathrm{cl}}$.
\end{proposition}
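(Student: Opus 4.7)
The plan is to use the hypothesis to block-diagonalize $\mathcal{L}$ in the matrix-unit basis $\{|u_i\rangle\langle u_j|\}$, recognize the diagonal block as the classical Markov generator, and control the remaining off-diagonal eigenvalues by a standard bottleneck argument applied to the classical chain.

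First, because $\rho$ is diagonal in $U$, the matrix units $\{|u_i\rangle\langle u_j|\}$ are pairwise KMS-orthogonal with $\||u_i\rangle\langle u_j|\|_\rho^2 = \pi(u_i)^{1/2}\pi(u_j)^{1/2}$, and the hypothesis makes each $|u_i\rangle\langle u_j|$ with $i\ne j$ an eigenvector of $\mathcal{L}$ with real eigenvalue $-\mu_{ij}\le 0$ (by KMS negative semi-definiteness). The KMS-self-adjointness identity
$$\langle \mathcal{L}(|u_i\rangle\langle u_i|),|u_k\rangle\langle u_l|\rangle_\rho = \langle |u_i\rangle\langle u_i|,\mathcal{L}(|u_k\rangle\langle u_l|)\rangle_\rho=-\mu_{kl}\langle |u_i\rangle\langle u_i|,|u_k\rangle\langle u_l|\rangle_\rho=0$$
for all $k\ne l$ then shows the diagonal subspace $D:=\mathrm{span}\{|u_i\rangle\langle u_i|\}$ is $\mathcal{L}$-invariant, and a direct computation parallel to the proof of \cref{prop:V0 equal classical for nondegenerate H} identifies $\mathcal{L}|_D$ with the classical Markov generator of \cref{def:classical Markov generator} via the isometry $F\mapsto \hat F = \sum_i F(u_i)|u_i\rangle\langle u_i|$. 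So $\mathcal{L}|_D$ has spectral gap exactly $\lambda_{\mathcal{L},U,\mathrm{cl}}$, giving
$$\lambda_\mathcal{L}=\min\!\bigl(\lambda_{\mathcal{L},U,\mathrm{cl}},\ \min_{i\ne j}\mu_{ij}\bigr),$$
and it suffices to show $\mu_{ij}\ge \tfrac{1}{2}\lambda_{\mathcal{L},U,\mathrm{cl}}$ for every $i\ne j$.

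Next I would expand $\mu_{ij}\cdot\pi(u_i)^{1/2}\pi(u_j)^{1/2} = \mathcal{E}_\mathcal{L}(|u_i\rangle\langle u_j|)$ using the Lindblad form. Since $\langle u_k|S(\omega)|u_k\rangle$ vanishes unless $\omega=0$, and since $\sum_{\omega,S}G(\omega)\langle u_k|S(\omega)^\dagger S(\omega)|u_k\rangle = \sum_l P[u_k\to u_l]$ (by $\mathcal{S}=\mathcal{S}^\dagger$), the anti-commutator piece contributes exactly $\tfrac{1}{2}(\sum_l P[u_i\to u_l]+\sum_l P[u_j\to u_l])$, while the only surviving part of the ``jump'' term $S(\omega)^\dagger|u_i\rangle\langle u_j|S(\omega)$ is the $\omega=0$ cross term $G(0)\sum_{S\in\mathcal{S}}\overline{\langle u_i|S|u_i\rangle}\langle u_j|S|u_j\rangle$. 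Cauchy--Schwarz followed by AM--GM bounds the magnitude of this cross term by $\sqrt{P[u_i\to u_i]\,P[u_j\to u_j]}\le \tfrac{1}{2}(P[u_i\to u_i]+P[u_j\to u_j])$, which cancels the self-loop contributions in the anti-commutator piece and yields the clean estimate
$$\mu_{ij}\ \ge\ \tfrac{1}{2}\bigl(r(u_i)+r(u_j)\bigr),\qquad r(u_k):=\sum_{l\ne k}P[u_k\to u_l].$$

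Finally, testing the classical Rayleigh quotient against $F=\mathbf{1}_{\{u_i\}}$ gives $\lambda_{\mathcal{L},U,\mathrm{cl}}\le r(u_i)/(1-\pi(u_i))$, and similarly for $j$, so
$$\mu_{ij}\ \ge\ \frac{\lambda_{\mathcal{L},U,\mathrm{cl}}}{2}\bigl((1-\pi(u_i))+(1-\pi(u_j))\bigr)=\frac{\lambda_{\mathcal{L},U,\mathrm{cl}}}{2}(2-\pi(u_i)-\pi(u_j))\ \ge\ \frac{1}{2}\lambda_{\mathcal{L},U,\mathrm{cl}},$$
where the last inequality uses $\pi(u_i)+\pi(u_j)\le 1$. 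The ``in particular'' case is then immediate: simple Bohr spectrum forces every $V_\omega$ with $\omega\ne 0$ to be one-dimensional so the hypothesis is automatic, and simple spectrum gives $\lambda_{\mathcal{L},U,\mathrm{cl}}=\lambda_{\mathcal{L},\mathrm{cl}}$ by \cref{prop:V0 equal classical for nondegenerate H}. The one delicate step is the cross-term bound: Cauchy--Schwarz and AM--GM must be applied in exactly the right way so that the self-loop contributions cancel and leave the clean $\tfrac{1}{2}(r(u_i)+r(u_j))$ lower bound; the remaining steps are essentially formal given Temme's diagonal-block identification and the standard Cheeger-type bottleneck inequality.
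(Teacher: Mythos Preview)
Your proof is correct and follows essentially the same strategy as the paper's: block-diagonalize $\mathcal{L}$ into the diagonal piece (identified with the classical generator) and one-dimensional off-diagonal pieces, then lower-bound each off-diagonal eigenvalue $\mu_{ij}$ by $\tfrac12(r(u_i)+r(u_j))$ and finish with the singleton-indicator bottleneck bound $r(u_k)\ge(1-\pi(u_k))\lambda_{\mathcal{L},U,\mathrm{cl}}$. The only cosmetic difference is in handling the $\omega=0$ cross term: the paper completes the square to obtain the exact identity $\mu_{ij}=\tfrac12 G(0)\sum_S|\langle u_i|S|u_i\rangle-\langle u_j|S|u_j\rangle|^2+\tfrac12(r(u_i)+r(u_j))$ and then drops the nonnegative square, whereas you bound the cross term via Cauchy--Schwarz and AM--GM; both routes land on the same inequality $\mu_{ij}\ge\tfrac12(r(u_i)+r(u_j))$.
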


In this context, the algorithmic significance of our work is threefold. First, one can imagine scenarios where the expected Lindbladian in the above scheme is a Davies generator corresponding to a Hamiltonian with degenerate Bohr spectrum, but satisfying the short AP property. Second, Davies generators of commuting Hamiltonians are local and efficiently implementable, and can be directly used as algorithms; the short AP property may be easy to verify and useful in their analysis. Finally, the ability to restrict attention to observables in $V_0$ may be useful in developing further proof techniques to understand the mixing of Davies and related generators.

In the context of physics, there is a physical intuition which suggests that in this context ``dephasing'' (corresponding to off-diagonal decay)  should happen at time scales shorter than mixing \cite{zurek2002decoherence,zurek2003decoherence,Schlosshauer2007Decoherence}. 
 \cref{thm:general comparison theorem dependent only on length of arithmetic sequence} shows that this is true up to a small factor for Davies generators of short AP Hamiltonians, clarifying the mechanism of thermalization for such generators. Equivalently, the failure of rapid dephasing can always be blamed on a long arithmetic progression in the spectrum of the Hamiltonian. 
 
While it is easy to construct artificial local Hamiltonians with long arithmetic progressions (e.g. $H=\sum_{i\le n} 2^{-i}Z_i$), we are not aware of any natural physical $n$-qubit local Hamiltonians whose spectra contain arithmetic progressions of length longer than $O(n)$, with this bound being met by the Toric code \cite{kitaev2003fault}\footnote{Since the toric code Hamiltonian consists of pairwise commuting terms where each has eigenvalues $\pm 1,$ it is easy to explicitly compute the Hamiltonian spectrum.}. A natural research direction is to show that some large class of physical Hamiltonians --- such as the transverse field Ising model on arbitrary graphs --- satisfy the short AP property with $D=\mathrm{poly}(n)$, which would yield a useful quantum to classical comparison for this class.


\subsection{Techniques}

\subsubsection{Proof of main comparison theorem  ( \texorpdfstring{\cref{thm:compare spectral gap V omega and V 0,thm:general comparison theorem dependent only on length of arithmetic sequence}})}

We outline the strategy for proving \cref{thm:compare spectral gap V omega and V 0}. Given an observable \( f \in V_\omega \) where $\omega \neq 0$, we construct two observables:
\[
g = e^{\frac{\beta\omega}{4}} (f f^\dag)^{1/2}, \quad \text{and} \quad h = e^{-\frac{\beta\omega}{4}} (f^\dag f)^{1/2}.
\]
Both \( g \) and \( h \) are Hermitian and belong to \( V_0 \). Our strategy is to (1) lower bound the Dirichlet form of \( f \) by the Dirichlet forms of \( g \) and \( h \), then (2) relate the Dirichlet forms of $g$ and $h$ to their variances, and (3) lower bound the variance of $g$ and $h$ by the variance of $f.$ For (1), we rewrite the Dirichlet form of $f,g,h$ as weighted commutator norms $\| \rho^{1/4}[S(\tilde{\omega}),f]\rho^{1/4}\|_F^2$ then apply a novel trace inequality whose proof utilizes the spectral decomposition of $f, g, h.$ (2) is simply due to $g,h$ being observables in $V_0$ and the definition of $\lambda_{\mathcal{L},0}.$  For (3), we again use a spectral decomposition of $f,g,h,$ and further note that the vectors appearing in this decomposition are eigenvectors of the Hamiltonian $H,$ and use the relationship between their eigenvalues to lower bound the variances of $g, h.$ 

Specifically,
\( g = e^{\frac{\beta\omega}{4}} \sum \lambda_i \ket{u_i}\bra{u_i} \), \( 
f = \sum \lambda_i \ket{u_i}\bra{v_i} \), and \( h = e^{-\frac{\beta\omega}{4}} \sum \lambda_i \ket{v_i}\bra{v_i} \), 
with \( u_i, v_i \) being eigenvectors of both \( H \) and \( \rho \) and the corresponding eigenvalues of $H$ satisfy \( H(v_i) = H(u_i) - \omega\). 

Intuitively, lower-bounding the variances of $g$ and $h$ is equivalent to showing that $g$ and $h$ are far from (multiple of) the identity. More concretely, using the above decomposition and Hölder's inequality, and let $\rho(v_i) ,\rho(u_i)$ be the corresponding eigenvalues of $\rho$, we get:
\[
\Var(g) \geq (1 - \sum \rho(u_i)) \Var(f), \quad 
\Var(h) \geq (1 - \sum \rho(v_i)) \Var(f)
\]
and since \(1\geq \sum \rho(v_i) = e^{\beta \omega} \sum \rho(u_i) \) and \(1\geq \sum \rho(u_i) = e^{-\beta \omega} \sum \rho(v_i) \), we conclude:
\[
\max(\Var(g), \Var(h)) \geq (1 - e^{-|\beta \omega|}) \Var(f).
\]

Next, we provide some intuition on why $\max(\Var(g), \Var(h)) \geq \frac{1}{D} \Var(f)$ where $D$ is the length of the longest arithmetic progression with common difference $\omega$ in the spectrum of $H.$
Take a sequence of eigenvectors $u_{i_1}, \dots, u_{i_k}$, where $H(u_{i_{j+1}}) = H(u_{i_j}) -\omega$ for all  $j$; then $H(v_{i_k}) = H(u_{i_k}) -\omega$ is also in $\spec(H).$ By assuming that this chain is maximal, we conclude that $v_{i_k}$ is not in the spectral decomposition of $g.$ 
We can assume w.l.o.g.\ that $\beta \omega \geq 0,$ in which case we can argue that $\rho(v_{i_k}) \geq \frac{1}{k}\sum_{j=1}^k \rho(u_{i_j}).$ Additionally, note that the  $k+1 \leq D$ since $H(u_{i_1}), \cdots, H(u_{i_k}), H(v_{i_k})$ forms a $(k+1)$-arithmetic progression with common difference $\omega$ in $\spec(H)$; these observations are enough to conclude that $1 -\sum \rho(u_i)\geq 1/D$ and thus $\Var(g) \geq \frac{1}{D}\Var(f).$
\subsubsection{Proof that generic Hamiltonians do not have long arithmetic progressions}
Finally, we sketch the proof of \cref{thm:generic transverse field}, following the approach in \cite{HuangHarrow25}.  Let \( \lambda_1, \dots, \lambda_N \) be the eigenvalues of a Hamiltonian \( H \). Observe that \( H \) has no 3-term arithmetic progression in its spectrum if and only if
\[
F := \prod_{\substack{i, j, k \text{ distinct}}} (\lambda_i + \lambda_j - 2\lambda_k)
\]
is nonzero.

Since \( F \) is symmetric in the eigenvalues, it can be viewed as a polynomial in the entries of \( H \). Thus, the set of Hamiltonians for which \( F = 0 \) has measure zero, unless \( F \equiv 0 \) identically. To rule this out, it suffices to exhibit a single instance where \( F \ne 0 \); such an example can be constructed for each type of perturbation we consider.

We remark that this approach generalizes to the case of longer \( D \)-term arithmetic progressions. In that case, the role of the factor \( (\lambda_i + \lambda_j - 2\lambda_k) \) is replaced by a suitable expression such as
\[
\sum_{j=1}^{D-2} \left( (\lambda_{i_j} - \lambda_{i_{j+1}}) - (\lambda_{i_{j+1}} - \lambda_{i_{j+2}}) \right)^2,
\]
which similarly vanishes when a \( D \)-term arithmetic progression is present. We can also prove that $H$ does not have repeated eigenvalues using a similar strategy, where we replace $F$ with $G = \prod_{i\neq j}(\lambda_i -\lambda_j) $.

\subsection{Organization}
In \cref{sec:general spectral comparison}, we prove \cref{thm:compare spectral gap V omega and V 0} and \cref{thm:general comparison theorem dependent only on length of arithmetic sequence}. In \cref{sec:quantum to classical spectral gap}, we prove \cref{prop:classical chain V0,thm:quantum vs classical spectral gap comparison,cor:cheeger}. In \cref{sec:no AP in perturbed Hamiltonian}, we prove \cref{thm:generic transverse field} and related results. 
\section{Preliminaries}
\subsection{Notation}
Let \( n \) be the number of qubits, and let \( N = 2^n \) denote the dimension of the Hilbert space. We write \( \rho \) for the quantum Gibbs state, use \( \langle \cdot, \cdot \rangle_\rho \) to denote the KMS inner product, and define the associated norm by
\[
\|A\|_\rho^2 := \langle A, A \rangle_\rho = \|\rho^{1/4} A \rho^{1/4}\|_{\mathrm{F}}^2,
\]
where \( \|\cdot\|_{\mathrm{F}} \) denotes the Frobenius norm.

For \(k \in \mathbb{N}_{\geq 1}\), we write \([k]\) for the index set \(\{1, \dots, k\}\).
 We use \(A^\dagger\) to denote the conjugate transpose of a matrix \(A\).
We will also occasionally use \emph{braket notation}, where \(\ket{u}\) denotes a column vector and \(\bra{u}\) its conjugate transpose.

\subsection{Linear algebra}

\begin{proposition}\label{prop:eigenvalue multiplicities of AB and BA}
Let $A, B$ be arbitrary matrices. Then the nonzero spectrum of $AB$ and $BA$ are the same, i.e., $\mathrm{spec}(AB) \cap \R_{\neq 0}= \mathrm{spec}(BA)\cap \R_{\neq 0}.$ In other words, if $ \lambda\neq 0$ is an eigenvalue of $AB$ with multiplicity $m$, then it is also an eigenvalue of $BA$ with multiplicity $m,$ thus
 $\trace{AB} =\trace{BA}.$ 
\end{proposition}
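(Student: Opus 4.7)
The plan is to reduce everything to the classical fact that the characteristic polynomials of $AB$ and $BA$ differ only by a power of $x$. Assuming $A$ is $m \times n$ and $B$ is $n \times m$ (so that both products are square), I would first establish Sylvester's determinant identity: for every $x$,
\[
x^n \det(xI_m - AB) = x^m \det(xI_n - BA).
\]
To prove this, I would compute the determinant of the block matrix
\[
M(x) = \begin{pmatrix} xI_m & A \\ B & I_n \end{pmatrix}
\]
in two ways using the Schur complement formula. Using the bottom-right block $I_n$ gives $\det M(x) = \det(xI_m - AB)$, and using the top-left block $xI_m$ (invertible for $x \neq 0$) gives $\det M(x) = x^m \det(I_n - BA/x) = x^{m-n} \det(xI_n - BA)$. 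Multiplying through by $x^n$ yields the identity for $x \neq 0$, and since both sides are polynomials agreeing on infinitely many points, they agree identically.

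From the identity, I would then extract the multiplicity statement. If $\lambda \neq 0$, the multiplicity of $\lambda$ as a root of $x^n \det(xI_m - AB)$ equals its multiplicity as a root of $\det(xI_m - AB)$ (the factor $x^n$ contributes only to the root at $0$), and likewise on the right-hand side. Hence the algebraic multiplicities of $\lambda$ in $AB$ and $BA$ coincide, which establishes $\spec(AB) \cap \R_{\neq 0} = \spec(BA) \cap \R_{\neq 0}$ with matching multiplicities.

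Finally, the trace identity follows by direct computation: $\tr(AB) = \sum_{i,j} A_{ij} B_{ji} = \tr(BA)$, which holds whenever both products are well-defined and square, independent of the spectral argument. Alternatively, it is a corollary of comparing the subleading coefficients in the characteristic polynomial identity above. There is no real obstacle here; the only care needed is to verify the Schur complement manipulation and to invoke the polynomial identity principle to pass from $x \neq 0$ to all $x$.
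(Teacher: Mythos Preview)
Your proposal is correct and complete. It differs from the paper's proof, which is more elementary: the paper simply observes that if $v_1,\dots,v_m$ are linearly independent eigenvectors of $AB$ for a nonzero eigenvalue $\lambda$, then $Bv_1,\dots,Bv_m$ are linearly independent eigenvectors of $BA$ for $\lambda$ (linear independence being preserved because $\lambda\neq 0$ forces $\sum c_i Bv_i=0 \Rightarrow \sum c_i v_i=0$ after applying $A$). By symmetry this shows the \emph{geometric} multiplicities coincide. Your Sylvester--Schur complement argument is the classical route and yields the stronger conclusion that \emph{algebraic} multiplicities coincide, which is in fact what is needed if one wants to deduce $\tr(AB)=\tr(BA)$ purely spectrally (though as you note, the trace identity is immediate from the entrywise formula anyway). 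So your approach is slightly less elementary but more informative; the paper's is shorter but, strictly speaking, only pins down geometric multiplicities.
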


\begin{proof}
It's enough to check that, if $v_1, \dots, v_m$ are linearly independent eigenvectors of $AB$ corresponding to an eigenvalue $\lambda \neq 0,$ then $Bv_1, \dots, Bv_m$ are linearly independent eigenvectors of $BA$ corresponding to the eigenvalue  $\lambda.$ First, observe that, for each $v_i,$ $ BA (B v_i) = B (AB v_i) = \lambda (Bv_i).$ Suppose $\set{Bv_i}_i$ are not linearly independent; then there exist nontrivial coefficients $\set{c_i}_i$ such that $ \sum_i c_i (B v_i)=0,$ but this means $ \lambda \sum_{i} c_i v_i = AB \sum c_i v_i = A (B\sum c_i v_i) = 0,$ thus $ \sum c_i v_i =0,$ contradicting the assumption that $\set{v_i}_{i=1}^m$ are linearly independent.
\end{proof}

\subsection{Dirichlet form and spectral gap of the Davies generator}
We begin by reviewing some basic properties of the subspace of observables \( V_\omega \) (as defined in \cref{def:V omega}), and then establish the relationship between the Lindbladian spectral gap \( \lambda_{\mathcal{L}} \) and the spectral gap \( \lambda_{\mathcal{L},\omega} \) in \( V_\omega \), as stated in \cref{eq:spectral gap is minimize over all subspace}. These properties have been observed in \cite{temme13}, but we include the proofs for the sake of completeness.
\begin{proposition} \label{prop:product rule}
    If $f\in V_\omega$ and $g\in V_\omega'$ then $ f g \in V_{\omega + \omega'}.$
\end{proposition}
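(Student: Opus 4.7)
The plan is to give a short direct proof via the projector decomposition, with an alternative characterization of $V_\omega$ in mind as a sanity check.

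First, I would unpack the definition: $f\in V_\omega$ means $f = \sum_{\lambda_1 - \lambda_2 = \omega} \Pi_{\lambda_1} f \Pi_{\lambda_2}$, and similarly $g = \sum_{\mu_1 - \mu_2 = \omega'} \Pi_{\mu_1} g \Pi_{\mu_2}$. Multiplying the two sums and using the orthogonality relation $\Pi_{\lambda_2} \Pi_{\mu_1} = \delta_{\lambda_2, \mu_1} \Pi_{\lambda_2}$, only the diagonal terms $\lambda_2 = \mu_1$ survive, giving
\[
fg = \sum_{\substack{\lambda_1 - \lambda_2 = \omega \\ \lambda_2 - \mu_2 = \omega'}} \Pi_{\lambda_1} f \Pi_{\lambda_2} g \Pi_{\mu_2}.
\]
Since each surviving term satisfies $\lambda_1 - \mu_2 = \omega + \omega'$, the product $fg$ lies in the image of the map defining $V_{\omega+\omega'}$.

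As a cross-check, I would note the equivalent characterization $f\in V_\omega \iff [H,f] = \omega f$ (immediate from computing $Hf - fH$ on the projector decomposition). The Leibniz rule then gives $[H, fg] = [H,f]g + f[H,g] = (\omega + \omega') fg$, recovering the same conclusion. I do not anticipate any obstacle here — the argument is essentially bookkeeping about which pairs of projectors are orthogonal — so the only care needed is to make the indexing in the product clear.
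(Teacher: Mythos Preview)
Your proposal is correct and takes essentially the same approach as the paper: both expand $f$ and $g$ in the projector decomposition, use $\Pi_\lambda \Pi_{\lambda'} = \delta_{\lambda,\lambda'}\Pi_\lambda$ to collapse the product, and observe that the surviving terms all have outer projectors differing by $\omega+\omega'$. Your additional commutator-characterization cross-check ($[H,f]=\omega f$ plus the Leibniz rule) is not in the paper but is a valid alternative route.
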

\begin{proof}
Note that $ f = \sum_{\lambda} \Pi_{\lambda+\omega} f \Pi_{\lambda}$ and $g = \sum_{\lambda'} \Pi_{\lambda'+\omega'} g \Pi_{\lambda'} $ and $\Pi_{\lambda} \Pi_{\lambda'} = \Pi_{\lambda} \delta_{\lambda = \lambda'} $, then
\begin{align*}
    fg = ( \sum_{\lambda} \Pi_{\lambda+\omega} f \Pi_{\lambda}) (\sum_{\lambda'} \Pi_{\lambda'+\omega'} g \Pi_{\lambda'} ) = \sum_{\lambda} \Pi_{\lambda+ \omega} f \Pi_{\lambda} g \Pi_{\lambda -\omega'} \in V_{\omega+\omega'}.\tag*{\qedhere}
\end{align*}
\end{proof}
\begin{proposition}\label{prop:orthogonality of different subspace}
If $ h\in V_\omega$ with $\omega \neq 0$ then $\trace{h} =0.$
    Consequently, $f\in V_\omega$ and $g\in V_{\omega'}$ with $\omega -\omega' \neq 0$ then $ \langle f, g\rangle_\rho = 0.$
\end{proposition}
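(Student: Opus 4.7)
The plan is to prove the two statements in order. For the first, given $h \in V_\omega$ with $\omega \neq 0$, I would start from the defining representation
\[
h = \sum_{\lambda_1 - \lambda_2 = \omega} \Pi_{\lambda_1} h \Pi_{\lambda_2},
\]
take the trace, and apply cyclicity to rewrite each term as $\tr(\Pi_{\lambda_2}\Pi_{\lambda_1} h)$. Since $\omega \neq 0$ forces $\lambda_1 \neq \lambda_2$ in every surviving summand, the projectors land on orthogonal eigenspaces of $H$, so $\Pi_{\lambda_2}\Pi_{\lambda_1} = 0$ and each contribution vanishes.

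For the second statement, I would reduce to the first via the product rule already established in Proposition \ref{prop:product rule}. Expanding the KMS inner product gives $\langle f, g\rangle_\rho = \tr(\rho^{1/2} f \rho^{1/2} g^\dagger)$. Since $\rho$ is a function of $H$, the square root $\rho^{1/2}$ commutes with $H$ and therefore lies in $V_0$ (the decomposition collapses to $\rho^{1/2} = \sum_\lambda \Pi_\lambda \rho^{1/2}\Pi_\lambda$). Taking adjoints of the projector formula defining $V_{\omega'}$ shows that $g \in V_{\omega'}$ implies $g^\dagger \in V_{-\omega'}$. Applying Proposition \ref{prop:product rule} three times, the product $\rho^{1/2} f \rho^{1/2} g^\dagger$ lies in $V_{0+\omega+0+(-\omega')} = V_{\omega-\omega'}$. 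Since $\omega - \omega' \neq 0$ by hypothesis, the first statement forces its trace, and hence $\langle f, g\rangle_\rho$, to vanish.

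I do not anticipate a substantive obstacle: both steps are direct consequences of the eigenspace projector decomposition and the previously established product rule. The only minor point to verify carefully is the adjoint behavior $V_{\omega'}^\dag = V_{-\omega'}$, which is immediate from relabeling $(\lambda_1,\lambda_2) \mapsto (\lambda_2,\lambda_1)$ in the defining sum.
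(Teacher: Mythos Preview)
Your proposal is correct and follows essentially the same approach as the paper's proof: both use cyclicity of the trace together with $\Pi_{\lambda_1}\Pi_{\lambda_2}=0$ for the first claim, and then invoke the product rule (Proposition~\ref{prop:product rule}) to place $\rho^{1/2} f \rho^{1/2} g^\dag$ (equivalently, $f\rho^{1/2} g^\dag \rho^{1/2}$ via cyclicity) in $V_{\omega-\omega'}$ for the second. Your explicit verification that $\rho^{1/2}\in V_0$ and $g^\dag\in V_{-\omega'}$ makes the application of the product rule slightly more transparent than in the paper, but the argument is the same.
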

\begin{proof}
For any $ h \in V_{\tilde{\omega}}$ with $\tilde{\omega} \neq 0,$ we can write
    $ h = \sum_{\lambda} \Pi_{\lambda+\tilde{\omega}} h \Pi_{\lambda} $ and since $\Pi_{\lambda} \Pi_{\lambda+ \tilde{\omega}}  = 0,$
    \[\trace{h} = \sum_{\lambda} \trace{\Pi_{\lambda+\tilde{\omega}} h \Pi_{\lambda}} =\sum_{\lambda} \trace{  \Pi_\lambda \Pi_{\lambda+\tilde{\omega}}h} = 0.\]
    By \cref{prop:product rule}, $f \rho^{1/2} g^\dag \rho^{1/2} \in V_{\omega -\omega'},$ and $\omega -\omega' \neq 0$ so $\langle f, g\rangle_\rho = \trace{f \rho^{1/2} g^\dag \rho^{1/2}}= 0.$ 
\end{proof}
\begin{proposition}\label{prop:V omega is invariant subspace of Davies generator}
Consider a Hamiltonian $H$ and a Davies generator $\mathcal{L}$ associated with $H.$ Then, $V_\omega$  is an invariant subspace of $\mathcal{L}$; i.e., $ \mathcal{L}(V_\omega)\subseteq V_\omega.$
\end{proposition}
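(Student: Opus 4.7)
The plan is to show that each summand $\mathcal{L}_{\tilde{\omega},S}(f)$ lies in $V_\omega$ whenever $f\in V_\omega$, from which the full claim $\mathcal{L}(V_\omega)\subseteq V_\omega$ follows by linearity. The key observation is that the ``Bohr components'' $S(\tilde{\omega})$ of a jump operator are themselves elements of $V_{\tilde{\omega}}$, so the result is essentially a bookkeeping exercise using the product rule from \cref{prop:product rule}.

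First I would verify the two facts about $S(\tilde{\omega})$: directly from \cref{def:V omega}, $S(\tilde{\omega})=\sum_{\lambda_1-\lambda_2=\tilde{\omega}}\Pi_{\lambda_1}S\Pi_{\lambda_2}$ is by definition the image of $S$ under the $f\mapsto f(\tilde{\omega})$ map, so $S(\tilde{\omega})\in V_{\tilde{\omega}}$. Taking adjoints (using that each $\Pi_\lambda$ is a self-adjoint projection) gives $S(\tilde{\omega})^\dag=\sum_{\lambda_1-\lambda_2=\tilde{\omega}}\Pi_{\lambda_2}S^\dag\Pi_{\lambda_1}$, and reindexing by $\mu_1=\lambda_2,\mu_2=\lambda_1$ yields $\mu_1-\mu_2=-\tilde{\omega}$, so $S(\tilde{\omega})^\dag\in V_{-\tilde{\omega}}$.

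Next, take any $f\in V_\omega$. Applying \cref{prop:product rule} twice,
\[
S(\tilde{\omega})^\dag f S(\tilde{\omega})\in V_{-\tilde{\omega}+\omega+\tilde{\omega}}=V_\omega,
\]
and
\[
S(\tilde{\omega})^\dag S(\tilde{\omega})\in V_0,\qquad S(\tilde{\omega})^\dag S(\tilde{\omega})\,f,\; f\,S(\tilde{\omega})^\dag S(\tilde{\omega})\in V_\omega.
\]
Since $V_\omega$ is a linear subspace, the anticommutator $\{S(\tilde{\omega})^\dag S(\tilde{\omega}),f\}$ also lies in $V_\omega$, and hence so does
\[
\mathcal{L}_{\tilde{\omega},S}(f)=G(\tilde{\omega})\Bigl(S(\tilde{\omega})^\dag f S(\tilde{\omega})-\tfrac{1}{2}\{S(\tilde{\omega})^\dag S(\tilde{\omega}),f\}\Bigr).
\]
Summing over $\tilde{\omega}$ and $S\in\mathcal{S}$ gives $\mathcal{L}(f)\in V_\omega$, as desired.

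There is essentially no obstacle here: the only thing to be careful about is correctly tracking the Bohr labels through the triple product and anticommutator, which is why \cref{prop:product rule} is exactly the right lemma. Everything else is definitional.
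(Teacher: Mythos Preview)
Your proof is correct and follows essentially the same approach as the paper's: both reduce to showing each summand $\mathcal{L}_{\tilde{\omega},S}(f)\in V_\omega$ by invoking \cref{prop:product rule} together with the facts $S(\tilde{\omega})\in V_{\tilde{\omega}}$ and $S(\tilde{\omega})^\dag\in V_{-\tilde{\omega}}$. The paper states this in a single sentence, while you spell out the details, but the argument is the same.
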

\begin{proof}
    Fix $f \in V_\omega.$ We only need to show that for any Bohr frequency $\omega',$ $\mathcal{L}_{\omega',S}(f) \in V_{\omega},$ which is true by \cref{prop:product rule} and the fact that $S(\omega')\in V_{\omega'}, S(\omega')^\dag \in V_{-\omega'}.$
\end{proof}
The following relate the spectral gap of the Lindbladian to the spectral gap on the invariant subspaces $ V_\omega$.
\begin{proposition}\label{prop:relate spectral gap of Lindbladian to spectral gap in subspace Vomega}
    Let $\lambda_{\mathcal{L},\omega}$ be as defined in \cref{eq:spectral gap on subspace}. Then,
    \[\lambda_\mathcal{L} = \min_\omega \lambda_{\mathcal{L},\omega}.\]
\end{proposition}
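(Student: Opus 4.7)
The plan is to decompose an arbitrary observable $f$ along the invariant subspaces $V_\omega$ and show that both the Dirichlet form and the variance split additively along this decomposition, so that the Rayleigh quotient of $f$ is a weighted average of the Rayleigh quotients of its components.

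First I would write $f = \sum_\omega f_\omega$, where $f_\omega := \sum_{\lambda_1 - \lambda_2 = \omega} \Pi_{\lambda_1} f \Pi_{\lambda_2}\in V_\omega$; the completeness relation $\sum_\lambda \Pi_\lambda = I$ makes this a genuine decomposition. By \cref{prop:V omega is invariant subspace of Davies generator}, $\mathcal{L}(f_\omega) \in V_\omega$, and by \cref{prop:orthogonality of different subspace}, the subspaces $V_\omega$ are mutually orthogonal in the KMS inner product. Consequently
\[
\mathcal{E}_{\mathcal{L}}(f) = -\langle \mathcal{L}(f), f\rangle_\rho = -\sum_{\omega, \omega'} \langle \mathcal{L}(f_\omega), f_{\omega'}\rangle_\rho = \sum_\omega \mathcal{E}_{\mathcal{L}}(f_\omega),
\]
and $\|f\|_\rho^2 = \sum_\omega \|f_\omega\|_\rho^2$ by the same orthogonality.

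Next I would handle the linear term in the variance. Since $\rho \in V_0$ (it commutes with $H$), \cref{prop:product rule} gives $\rho f_\omega \in V_\omega$, and for $\omega \neq 0$, \cref{prop:orthogonality of different subspace} forces $\tr(\rho f_\omega) = 0$. Thus $\tr(\rho f) = \tr(\rho f_0)$; similarly $\tr(\rho f^\dagger) = \tr(\rho f_0^\dagger)$ via $f_\omega^\dagger \in V_{-\omega}$. Combining with $\Var(f_\omega) = \|f_\omega\|_\rho^2$ for $\omega \neq 0$ and $\Var(f_0) = \|f_0\|_\rho^2 - \tr(\rho f_0)\tr(\rho f_0^\dagger)$, I obtain $\Var(f) = \sum_\omega \Var(f_\omega)$.

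With both quantities additive, I can conclude:
\[
\mathcal{E}_{\mathcal{L}}(f) = \sum_\omega \mathcal{E}_{\mathcal{L}}(f_\omega) \geq \sum_\omega \lambda_{\mathcal{L},\omega} \Var(f_\omega) \geq \bigl(\min_\omega \lambda_{\mathcal{L},\omega}\bigr) \Var(f),
\]
giving $\lambda_{\mathcal{L}} \geq \min_\omega \lambda_{\mathcal{L},\omega}$. The opposite inequality is immediate from $V_\omega \subseteq \C^{N\times N}$, since minimizing over a larger set can only decrease the value: $\lambda_{\mathcal{L}} \leq \lambda_{\mathcal{L},\omega}$ for every $\omega$. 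I do not expect any serious obstacle here; the only subtlety is carefully tracking the linear terms in the variance and noting that they vanish for $\omega \neq 0$, which follows cleanly from the orthogonality and product-rule propositions established earlier.
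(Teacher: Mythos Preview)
Your proposal is correct and follows essentially the same approach as the paper: the paper packages the additivity of $\mathcal{E}_{\mathcal{L}}$ and $\Var$ along the decomposition $f=\sum_\omega f(\omega)$ into a separate proposition (using exactly the orthogonality and invariance results you cite, including the observation that $\tr(\rho f(\omega))=0$ for $\omega\neq 0$), and then derives the equality of spectral gaps by the same weighted-average argument you give.
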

To prove \cref{prop:relate spectral gap of Lindbladian to spectral gap in subspace Vomega}, we need the following proposition.
\begin{proposition}\label{prop:decomposition of dirichlet form and variance}
For any arbitrary observable $f\in \C^{N\times N},$ we have
\begin{equation}
    f = \sum_\omega f(\omega)\nonumber
\end{equation} where the sum is over Bohr frequencies $\omega$ and $f(\omega)$ is as defined in \cref{def:V omega}. Moreover,
 \begin{equation}
     \mathcal{E}_{\mathcal{L}}(f) = \sum_\omega \mathcal{E}_{\mathcal{L}}(f(\omega)) \nonumber
 \quad
 and 
 \quad
     \Var(f) = \sum_\omega \Var(f(\omega)). \nonumber
 \end{equation}    
\end{proposition}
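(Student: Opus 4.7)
The plan is to deduce all three identities from two facts already established: the orthogonality of the subspaces $V_\omega$ (\cref{prop:orthogonality of different subspace}) and their invariance under $\mathcal{L}$ (\cref{prop:V omega is invariant subspace of Davies generator}), together with the resolution of identity $\sum_\lambda \Pi_\lambda = I$.

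First I would dispatch the decomposition $f = \sum_\omega f(\omega)$ by swapping the order of summation:
\[\sum_\omega f(\omega) = \sum_\omega \sum_{\lambda_1 - \lambda_2 = \omega} \Pi_{\lambda_1} f \Pi_{\lambda_2} = \Bigl(\sum_{\lambda_1} \Pi_{\lambda_1}\Bigr) f \Bigl(\sum_{\lambda_2} \Pi_{\lambda_2}\Bigr) = f.\]
For the Dirichlet form, I would expand $\mathcal{E}_{\mathcal{L}}(f) = -\langle \mathcal{L}(f), f\rangle_\rho = -\sum_{\omega, \omega'} \langle \mathcal{L}(f(\omega)), f(\omega')\rangle_\rho$. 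By invariance, $\mathcal{L}(f(\omega)) \in V_\omega$, so by \cref{prop:orthogonality of different subspace} only the diagonal terms $\omega = \omega'$ survive, giving $\sum_\omega \mathcal{E}_{\mathcal{L}}(f(\omega))$.

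The variance requires slightly more care because of the $\trace{\rho f}\trace{\rho f^\dagger}$ correction. I would split $\Var(f) = \langle f, f\rangle_\rho - \trace{\rho f}\trace{\rho f^\dagger}$; the inner product decomposes immediately as $\sum_\omega \langle f(\omega), f(\omega)\rangle_\rho$ by orthogonality. For the trace terms, the key observation is that $\rho \in V_0$ since it commutes with $H$; hence by \cref{prop:product rule}, $\rho f(\omega) \in V_\omega$, and by \cref{prop:orthogonality of different subspace} its trace vanishes unless $\omega = 0$. Thus $\trace{\rho f} = \trace{\rho f(0)}$, and analogously $\trace{\rho f^\dagger} = \trace{\rho f(0)^\dagger}$ after noting from the definition that $(f(\omega))^\dagger = f^\dagger(-\omega)$, so that $(f(0))^\dagger = f^\dagger(0)$. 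The same vanishing argument forces $\Var(f(\omega)) = \langle f(\omega), f(\omega)\rangle_\rho$ for every $\omega \ne 0$, and summing the diagonal contributions recovers $\Var(f)$.

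There is no genuine obstacle here: the entire argument is bookkeeping that becomes routine once the orthogonality of the $V_\omega$'s is in hand. The only place requiring a moment's attention is checking that the quadratic trace correction in the variance lives entirely in the $\omega = 0$ component, which is precisely where $\rho \in V_0$ is used.
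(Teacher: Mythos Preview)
Your proposal is correct and matches the paper's proof essentially step for step: the paper also uses the resolution of identity for the decomposition, then invariance plus orthogonality (\cref{prop:V omega is invariant subspace of Davies generator,prop:orthogonality of different subspace}) to diagonalize the Dirichlet form, and finally the vanishing of $\trace{\rho f(\omega)}$ for $\omega\neq 0$ to localize the trace correction in the variance to the $\omega=0$ block. Your justification of that last point via $\rho\in V_0$ and the product rule is slightly more explicit than the paper's, but the argument is the same.
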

\begin{proof}
For the first statement, since $ I = \sum_{\lambda} \Pi_\lambda,$ we can write
\[ f=(\sum_{\lambda_1} \Pi_{\lambda_1}) f (\sum_{\lambda_2} \Pi_{\lambda_2}) = \sum_{\lambda_1, \lambda_2} \Pi_{\lambda_1} f \Pi_{\lambda_2} = \sum_{\omega} \sum_{\lambda_1, \lambda_2:\lambda_1 -\lambda_2 =\omega} \Pi_{\lambda_1} f \Pi_{\lambda_2} =\sum_{\omega}  f(\omega)\]
    By \cref{prop:V omega is invariant subspace of Davies generator,prop:orthogonality of different subspace},
\[ \mathcal{E}_{\mathcal{L}}(f) = -\left\langle \sum_\omega f(\omega),  \sum_\omega\mathcal{L}(f(\omega))\right\rangle_\rho =-\sum_\omega \langle f(\omega), \mathcal{L}(f(\omega))\rangle_\rho =\sum_\omega \mathcal{E}_{\mathcal{L}} (f(\omega)).\]
Note that by \cref{prop:orthogonality of different subspace}, $ \trace{\rho f} = \trace{\rho f(0)}$ since $\trace{\rho f(\omega)} = 0\forall \omega \neq 0,$ and $\langle f(\omega), f(\omega')\rangle_\rho =0$ if $\omega \neq \omega'.$ We thus obtain: 
\begin{align*}
    \Var(f) &= \langle f, f\rangle_\rho - \trace{\rho f} \trace{\rho f^\dag} =   \langle f, f\rangle_\rho - \trace{\rho f{(0)}} \trace{\rho (f{(0)})^\dag} \\
    &= \sum_\omega \langle f(\omega) ,f(\omega) \rangle_\rho - \trace{\rho f{(0)}} \trace{\rho (f{(0)})^\dag} \\
    &= \sum_\omega \Var( f(\omega) ). \tag*{\qedhere}
\end{align*}
\end{proof}

\begin{proof}[Proof of \cref{prop:relate spectral gap of Lindbladian to spectral gap in subspace Vomega}]
Obviously, for all $\omega$, \[\lambda_{\mathcal{L},\omega}=\min_{f\in V_\omega} \lambda(f) \geq \min_f \lambda(f) = \lambda_{\mathcal{L}},\quad
\text{thus} \quad \min_\omega \lambda_{\mathcal{L},\omega} \geq \lambda_{\mathcal{L}}.\]
Let $f$ be such that $\lambda(f) =\frac{\mathcal{E}_{\mathcal{L}}(f)}{\Var(f) }=\lambda_\mathcal{L} . $ Using \cref{prop:decomposition of dirichlet form and variance}, write $f =\sum_\omega f(\omega)$ with $f(\omega) \in V_\omega.$ Then 
\begin{align*}
    \lambda_{\mathcal{L}} = \frac{\sum_\omega \mathcal{E}_{\mathcal{L}}(f(\omega))}{\sum_\omega \Var(f(\omega)) } \geq \min_\omega \lambda_{\mathcal{L}}(f(\omega)) \geq \min_\omega \lambda_{\mathcal{L},\omega}. \tag*{\qedhere}
\end{align*}
\end{proof}

\begin{proposition} \label{prop:operator in V omega commutation with rho}
    Recalling that  $\rho \equiv \rho_\beta =\frac{\exp(-\beta H)}{\tr(\exp(-\beta H))}$. For any observable $f\in V_\omega$ and $m\in \R,$
    \[ \rho^m f = f \rho^m \exp(-\beta m \omega).\]
\end{proposition}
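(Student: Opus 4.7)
The plan is to use the spectral decomposition of $H$ (and hence of $\rho$) in terms of the eigenspace projectors $\Pi_\lambda$, and then to expand $f$ via its $V_\omega$ decomposition. Since $\rho = \exp(-\beta H)/Z$ with $Z = \tr(\exp(-\beta H))$ is a function of $H$, functional calculus gives
\[
\rho^m = Z^{-m}\sum_{\lambda} e^{-\beta m \lambda} \Pi_\lambda,
\]
so in particular $\rho^m$ commutes with every $\Pi_\lambda$, and $\rho^m \Pi_\lambda = Z^{-m} e^{-\beta m \lambda} \Pi_\lambda = \Pi_\lambda \rho^m$.

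Next, I would use that $f\in V_\omega$ means $f = \sum_{\lambda_1 - \lambda_2 = \omega} \Pi_{\lambda_1} f \Pi_{\lambda_2}$ (by \cref{def:V omega}, together with $\Pi_{\lambda_1} \Pi_\lambda \Pi_{\lambda_2} = \delta_{\lambda,\lambda_1}\delta_{\lambda,\lambda_2}\Pi_\lambda = 0$ when $\lambda_1 \neq \lambda_2$, so $f(\omega) = \sum_{\lambda_1-\lambda_2 = \omega}\Pi_{\lambda_1} f \Pi_{\lambda_2}$ is a projection and $f \in V_\omega$ equals its own projection). Then multiplying on the left gives
\[
\rho^m f = \sum_{\lambda_1 - \lambda_2 = \omega} (\rho^m \Pi_{\lambda_1}) f \Pi_{\lambda_2} = Z^{-m}\sum_{\lambda_1 - \lambda_2 = \omega} e^{-\beta m \lambda_1}\,\Pi_{\lambda_1} f \Pi_{\lambda_2},
\]
whereas multiplying on the right gives the analogous sum with $e^{-\beta m \lambda_2}$.

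The final step is simply to compare: on the constraint $\lambda_1 - \lambda_2 = \omega$ we have $e^{-\beta m \lambda_1} = e^{-\beta m \omega} e^{-\beta m \lambda_2}$, hence $\rho^m f = e^{-\beta m \omega} f \rho^m$, which is the claimed identity. There is no real obstacle here; the only point to be careful about is ensuring that the $V_\omega$ decomposition of $f$ is used correctly and that the commutation $[\rho^m, \Pi_\lambda]=0$ is justified via functional calculus, both of which are routine.
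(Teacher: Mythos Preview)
Your proposal is correct and follows essentially the same argument as the paper's proof: both use the block decomposition $f=\sum_{\lambda_1-\lambda_2=\omega}\Pi_{\lambda_1}f\Pi_{\lambda_2}$ together with the fact that $\rho^m$ (equivalently $e^{-\beta m H}$) acts as the scalar $Z^{-m}e^{-\beta m\lambda}$ on the $\lambda$-eigenspace, then compare the exponents on the two sides using $\lambda_1=\lambda_2+\omega$. The only cosmetic difference is that the paper works with $e^{-\beta m H}$ and a single summation index $\lambda$ (writing $\Pi_{\lambda+\omega}f\Pi_\lambda$), while you carry the normalization $Z^{-m}$ and index by the pair $(\lambda_1,\lambda_2)$.
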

\begin{proof}
    For $f\in V_\omega$, we have $f = f(\omega) = \sum_{\lambda} \Pi_{\lambda+\omega} f \Pi_{\lambda},$ so that 
    \begin{align*}
        e^{-\beta m H} f &= \sum_{\lambda}  e^{-\beta m H}   \Pi_{\lambda+\omega} f \Pi_{\lambda} = \sum_{\lambda} e^{-\beta m (\lambda+ \omega) }\Pi_{\lambda+\omega} f \Pi_{\lambda} \\
        &= e^{-\beta m  \omega } \sum_{\lambda} \Pi_{\lambda+\omega} f \Pi_{\lambda}  e^{-\beta  m\lambda} =e^{-\beta m  \omega } \sum_{\lambda} \Pi_{\lambda+\omega} f \Pi_{\lambda} e^{-\beta m H} = e^{-\beta m  \omega } f   e^{-\beta m H};
    \end{align*}
    thus 
    \[ \rho^m f = e^{-\beta m  \omega } f \rho^m. \qedhere \]
\end{proof}
\begin{proposition} \label{prop:observable in V0 commute with $H$}
    $f\in V_0$ if and only if $f$ commutes with $H.$
\end{proposition}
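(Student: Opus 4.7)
The plan is to verify both directions by directly expanding everything in the spectral decomposition $H = \sum_\lambda \lambda \Pi_\lambda$ of the Hamiltonian, using the fact that the eigenprojectors satisfy $\Pi_{\lambda}\Pi_{\lambda'} = \delta_{\lambda,\lambda'}\Pi_\lambda$ and $\sum_\lambda \Pi_\lambda = I$.

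For the forward direction, suppose $f \in V_0$. By definition of $V_0$ (\cref{def:V omega} with $\omega=0$), $f$ lies in the image of $g \mapsto g(0) = \sum_\lambda \Pi_\lambda g \Pi_\lambda$; but this map is idempotent (since $\Pi_\lambda \Pi_{\lambda'}=\delta_{\lambda,\lambda'}\Pi_\lambda$), so membership in the image is equivalent to being a fixed point, giving $f = \sum_\lambda \Pi_\lambda f \Pi_\lambda$. Then
\[
Hf = \sum_\lambda \lambda\, \Pi_\lambda f \Pi_\lambda = fH,
\]
which is the desired commutation. (Alternatively, one could invoke \cref{prop:operator in V omega commutation with rho} with $m=1$, $\omega=0$ to get $\rho f = f\rho$, but this detour is only safe for $\beta\neq 0$, so the direct eigenspace computation is preferable.)

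For the reverse direction, suppose $[H,f]=0$. Sandwich the identity $Hf = fH$ between two eigenprojectors: for any eigenvalues $\lambda_1,\lambda_2$,
\[
\lambda_1 \Pi_{\lambda_1} f \Pi_{\lambda_2} = \Pi_{\lambda_1} H f \Pi_{\lambda_2} = \Pi_{\lambda_1} f H \Pi_{\lambda_2} = \lambda_2\, \Pi_{\lambda_1} f \Pi_{\lambda_2},
\]
so $(\lambda_1-\lambda_2)\Pi_{\lambda_1} f \Pi_{\lambda_2} = 0$, forcing $\Pi_{\lambda_1} f \Pi_{\lambda_2} = 0$ whenever $\lambda_1 \neq \lambda_2$. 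Using $I = \sum_\lambda \Pi_\lambda$ on both sides of $f$, we conclude
\[
f = \sum_{\lambda_1,\lambda_2} \Pi_{\lambda_1} f \Pi_{\lambda_2} = \sum_\lambda \Pi_\lambda f \Pi_\lambda = f(0) \in V_0.
\]

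There is no real obstacle: the whole statement is a one-line exercise once one unpacks the definition of $V_\omega$ and uses that distinct eigenspaces of a Hermitian operator are mutually orthogonal (equivalently, annihilate each other under the projector product). I would present the argument in the two short paragraphs above, with perhaps a sentence observing that the forward map $f\mapsto f(0)$ is the orthogonal projection (with respect to Hilbert--Schmidt) onto the commutant of $H$, which is a clean conceptual summary of the whole proposition.
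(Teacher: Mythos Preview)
Your proof is correct. The forward direction is essentially identical to the paper's (with the nice added remark that $g\mapsto g(0)$ is idempotent, which cleanly justifies $f=f(0)$).

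For the reverse direction your approach differs from the paper's. The paper splits $f$ into its Hermitian and anti-Hermitian parts $\frac{f+f^\dag}{2}$ and $\frac{i(f-f^\dag)}{2}$, observes each commutes with $H$, and then invokes simultaneous diagonalization of commuting Hermitian matrices to write each as $\sum_j \gamma_j \ket{u_j}\bra{u_j}$ in some eigenbasis of $H$, concluding $f\in V_0$. Your argument instead sandwiches $Hf=fH$ between eigenprojectors to obtain $(\lambda_1-\lambda_2)\Pi_{\lambda_1}f\Pi_{\lambda_2}=0$ directly, which is shorter, avoids the Hermitian/anti-Hermitian split entirely, and works uniformly for arbitrary (not necessarily normal) $f$. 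The paper's route has the minor conceptual payoff of exhibiting an explicit eigenbasis of $H$ diagonalizing the Hermitian parts of $f$, which is in the spirit of how $V_0$ is used later (e.g.\ in \cref{prop:classical chain V0}), but for the proposition at hand your projector argument is the more economical one.
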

\begin{proof}
    Consider $f \in V_0,$ then $f = \sum_{\lambda} \Pi_{\lambda} f \Pi_{\lambda}$ and $H \Pi_{\lambda} = \lambda\Pi_{\lambda} = \Pi_{\lambda} H$ thus
    \[ f H = \sum_{\lambda} \Pi_{\lambda} f \Pi_{\lambda} H  = \sum_{\lambda} \lambda  \Pi_{\lambda} f \Pi_{\lambda} =   \sum_{\lambda} H \Pi_{\lambda} f \Pi_{\lambda}  = H f.  \]
    Consider $f$ that commutes with $H.$ Easy to see that $f^\dag$ also commutes with $H,$ since $ f^\dag H = (H f)^\dag = (fH)^\dag = H f^\dag.$ Thus, $\frac{f+f^\dag}{2}$ and $\frac{i(f-f^\dag)}{2}$ also commute with $H.$ Since $\frac{f+f^\dag}{2}$ is Hermitian and commutes with $H,$ $\frac{f+f^\dag}{2}$ and $H$ are simultaneously diagonalizable by some eigenbasis $\{u_j\}$ of $H,$ so 
    \[ \frac{f+f^\dag}{2} = \sum_{j}\gamma_j \ket{u_j}\bra{u_j}. \]
    Similarly, for some eigenbasis $\{u'_i\}$ of $H$,
    \[\frac{i(f-f^\dag)}{2} = \sum_{j}\gamma'_j \ket{u'_j}\bra{u'_j},\]
    thus $ f = \sum_j \gamma_j \ket{u_j}\bra{u_j} - i\sum_j \gamma'_j \ket{u'_j}\bra{u'_j}$ so $f \in V_0.$
\end{proof}

We note that the Dirichlet form can be written as a sum of norms of commutators. Previously, \cite[Proposition 2.5]{Carlen2018NoncommutativeCO} showed a similar result for the special case $G(\omega) = \exp(-\frac{\beta\omega}{2}).$ Related identities for other specific choices of the transition rate function $G$ can be found in \cite{rouze2025efficient, chen2025quantum}. For completeness, we include a proof in \cref{sec:missing proofs}.
\begin{proposition}[Dirichlet in divergence form]\label{prop:dirichlet form as sum of commutators}


    For any Hamiltonian $H,$ and Davies generator $\mathcal{L}$ as defined in \cref{def:davies dissipative part},
    the Dirichlet form $\mathcal{E}_{\mathcal{L}}(f) =-\langle  \mathcal{L}(f),f\rangle_\rho$ 
   can be rewritten as \[\mathcal{E}_{\mathcal{L}}(f)= \frac{1}{2}\sum_{S\in \mathcal{S}, \omega} \tilde{G}(\omega)\|[S(\omega), f]\|_\rho^2, \] where $\tilde{G}(\omega) = G(\omega) e^{\frac{\beta \omega }{2}}.$ 
\end{proposition}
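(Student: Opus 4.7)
The plan is to prove the identity by a direct expansion of both sides, matching terms using three ingredients: the intertwining relation $\rho^m A = A\rho^m e^{-\beta m\omega}$ for $A\in V_\omega$ from \cref{prop:operator in V omega commutation with rho}, the detailed balance condition $G(\omega) = G(-\omega)e^{-\beta\omega}$, and the self-adjointness $\mathcal{S}^\dag = \mathcal{S}$ of the jump set, which will allow the substitution $(S,\omega)\to(S^\dag,-\omega)$ inside the sum.

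First I would expand the Dirichlet form $\mathcal{E}_\mathcal{L}(f) = -\sum_{S,\omega}\langle \mathcal{L}_{\omega, S}(f), f\rangle_\rho$ using the equivalent form $\mathcal{L}_{\omega,S}(f) = G(\omega)\, S(\omega)^\dag f S(\omega) - \tfrac{G(\omega)}{2}\{S(\omega)^\dag S(\omega), f\}$ into three inner products per pair $(S,\omega)$. In parallel, I would expand $\|[S(\omega),f]\|_\rho^2$ by bilinearity as $\|S(\omega)f\|_\rho^2 + \|fS(\omega)\|_\rho^2 - 2\operatorname{Re}\langle S(\omega)f, fS(\omega)\rangle_\rho$, and then apply the intertwining relation together with cyclicity of the trace to simplify each piece. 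A careful computation gives
\[
\|S(\omega)f\|_\rho^2 = e^{-\beta\omega/2}\langle S(\omega)^\dag S(\omega) f, f\rangle_\rho, \qquad \|fS(\omega)\|_\rho^2 = e^{\beta\omega/2}\langle fS(\omega)S(\omega)^\dag, f\rangle_\rho,
\]
and $\operatorname{Re}\langle S(\omega)f, fS(\omega)\rangle_\rho = e^{-\beta\omega/2}\operatorname{Re}\langle S(\omega)^\dag f S(\omega), f\rangle_\rho$. The asymmetric factors $e^{\pm\beta\omega/2}$ reflect the two possible orderings of $\rho^{1/2}$ with respect to $S(\omega)$.

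After multiplying by $\tfrac12\tilde G(\omega) = \tfrac12 G(\omega) e^{\beta\omega/2}$ and summing over $(S,\omega)$, the only term that does not match the Dirichlet expansion is $\tfrac12 G(\omega) e^{\beta\omega}\langle fS(\omega)S(\omega)^\dag, f\rangle_\rho$, which features $S(\omega)S(\omega)^\dag$ instead of the $S(\omega)^\dag S(\omega)$ appearing in the anticommutator piece. I would then reindex this sum via $(S,\omega)\to(S^\dag,-\omega)$: under this involution $(S^\dag)(-\omega)((S^\dag)(-\omega))^\dag = S(\omega)^\dag S(\omega)$, while the coefficient transforms via detailed balance as $G(-\omega)e^{-\beta\omega} = G(\omega)$. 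The reindexed contribution is $\tfrac12 G(\omega)\langle fS(\omega)^\dag S(\omega), f\rangle_\rho$, matching the anticommutator term in the Dirichlet expansion exactly.

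The two sides then agree modulo the replacement of $\langle S(\omega)^\dag f S(\omega), f\rangle_\rho$ in the Dirichlet form by its real part in the commutator expansion, so their difference equals $-i\sum_{S,\omega} G(\omega)\operatorname{Im}\langle S(\omega)^\dag f S(\omega), f\rangle_\rho$, which is purely imaginary. However, both sides are manifestly real: the Dirichlet form is real by KMS reversibility (\cref{eq:kms reversible condition}), and the commutator-norm side is a real-weighted sum of non-negative norms. A number that is simultaneously real and purely imaginary must vanish, completing the proof. The main obstacle is the careful bookkeeping of the $e^{\pm\beta\omega/2}$ factors produced by the intertwining relation and, more subtly, the recognition that the $S(\omega)S(\omega)^\dag$ versus $S(\omega)^\dag S(\omega)$ discrepancy is resolved only globally in the sum via the involution $(S,\omega)\to(S^\dag,-\omega)$, which essentially uses both $\mathcal{S}^\dag=\mathcal{S}$ and detailed balance.
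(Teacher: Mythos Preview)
Your proof is correct and uses the same three ingredients as the paper (the intertwining relation of \cref{prop:operator in V omega commutation with rho}, detailed balance, and $\mathcal{S}^\dag=\mathcal{S}$), but the bookkeeping is organized differently. The paper pairs $\mathcal{L}_{\omega,S}$ with $\mathcal{L}_{-\omega,S^\dag}$ from the outset and works with the commutator form $\mathcal{L}_{\omega,S}(f)=\tfrac{G(\omega)}{2}\bigl(S(\omega)^\dag[f,S(\omega)]+[S(\omega)^\dag,f]S(\omega)\bigr)$; after moving $\rho^{1/2}$ through $S(\omega)$ and regrouping, the four trace terms assemble \emph{exactly} into $-\tilde G(\omega)\|[S(\omega),f]\|_\rho^2-\tilde G(-\omega)\|[S^\dag(-\omega),f]\|_\rho^2$, with no residual. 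Your route---expand each side separately, then reindex one term via $(S,\omega)\mapsto(S^\dag,-\omega)$---leaves the purely imaginary discrepancy $i\sum_{S,\omega}G(\omega)\operatorname{Im}\langle S(\omega)^\dag f S(\omega),f\rangle_\rho$, which you correctly kill by observing both sides are real. This extra step is sound but avoidable: pairing at the start makes the imaginary parts cancel algebraically rather than by a reality argument. Either way the proof goes through.
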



We can now show that the spectral gap is unchanged if restricted to Hermitian observables.
\begin{proposition}\label{prop:compare spectral gap with hermitian spectral gap in V0}

    Let $\lambda_{\mathcal{L}, 0}, \lambda_{\mathcal{L}, 0}^{\mathrm{H}}$ be as defined in \cref{eq:spectral gap on subspace,eq:spectral gap hermitian} respectively. Then
    \[\lambda_{\mathcal{L}, 0} =\lambda_{\mathcal{L}, 0}^{\mathrm{H}}. \]
\end{proposition}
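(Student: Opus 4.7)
The plan is to show both directions. The inequality $\lambda_{\mathcal{L},0}^{\mathrm{H}} \geq \lambda_{\mathcal{L},0}$ is immediate since the Hermitian spectral gap is a minimum over a subset of observables in $V_0$. For the reverse inequality, given any $f \in V_0$, I would decompose $f = f_1 + i f_2$ where $f_1 = \tfrac{1}{2}(f + f^\dagger)$ and $f_2 = \tfrac{1}{2i}(f - f^\dagger)$ are Hermitian. Since $f \in V_0$ commutes with $H$ by \cref{prop:observable in V0 commute with $H$}, so does $f^\dagger$, and hence $f_1, f_2 \in V_0$ as well.

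The crux is to establish the two identities
\begin{equation*}
\mathrm{Var}(f) = \mathrm{Var}(f_1) + \mathrm{Var}(f_2), \qquad \mathcal{E}_{\mathcal{L}}(f) = \mathcal{E}_{\mathcal{L}}(f_1) + \mathcal{E}_{\mathcal{L}}(f_2),
\end{equation*}
and then conclude by the mediant inequality
\begin{equation*}
\lambda_{\mathcal{L}}(f) = \frac{\mathcal{E}_{\mathcal{L}}(f_1) + \mathcal{E}_{\mathcal{L}}(f_2)}{\mathrm{Var}(f_1) + \mathrm{Var}(f_2)} \geq \min\bigl(\lambda_{\mathcal{L}}(f_1),\, \lambda_{\mathcal{L}}(f_2)\bigr) \geq \lambda_{\mathcal{L}, 0}^{\mathrm{H}}.
\end{equation*}

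For the variance identity, I would expand $\langle f, f \rangle_\rho$ using sesquilinearity and observe that the cross terms $i\langle f_2, f_1\rangle_\rho - i \langle f_1, f_2\rangle_\rho$ vanish because the KMS inner product of two Hermitian operators is real (which follows directly from conjugate symmetry $\overline{\langle A, B\rangle_\rho} = \langle B, A\rangle_\rho$ combined with cyclicity of trace). The terms $\tr(\rho f)\tr(\rho f^\dagger)$ expand similarly using that $\tr(\rho f_j)$ is real for Hermitian $f_j$, producing only diagonal contributions.

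The Dirichlet form identity is the main technical step. It requires two observations: first, that $\mathcal{L}$ preserves Hermiticity (so $\mathcal{L}(f_1), \mathcal{L}(f_2)$ are Hermitian and in $V_0$), which follows directly from \cref{def:davies generator} since each $\mathcal{L}_{\omega, S}(f)$ with the jump-operator-closure assumption is a sum of terms of the form $S(\omega)^\dag f S(\omega)$ minus a symmetrized anticommutator, each preserving Hermiticity when summed over the self-adjoint jump-operator set (using the detailed balance pairing $G(\omega) = G(-\omega)e^{-\beta \omega}$). Second, the cross terms $\langle \mathcal{L}(f_2), f_1 \rangle_\rho$ and $\langle \mathcal{L}(f_1), f_2\rangle_\rho$ must be shown equal so they cancel when combined with the prefactors $\pm i$. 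Using KMS reversibility \eqref{eq:kms reversible condition} and conjugate symmetry of $\langle\cdot,\cdot\rangle_\rho$, these two quantities are each equal to the common real number $\langle \mathcal{L}(f_1), f_2\rangle_\rho$, so they cancel. I expect verifying this cross-term cancellation to be the main (mild) obstacle, requiring careful bookkeeping of sesquilinearity conventions, but it is ultimately a routine consequence of Hermiticity preservation and KMS reversibility.
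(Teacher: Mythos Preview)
Your proposal is correct and follows the same overall architecture as the paper: decompose $f\in V_0$ into Hermitian and anti-Hermitian parts, prove that both the variance and the Dirichlet form split additively, and conclude by the mediant inequality.

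The one genuine difference is in how you establish the Dirichlet-form identity $\mathcal{E}_{\mathcal{L}}(f)=\mathcal{E}_{\mathcal{L}}(f_1)+\mathcal{E}_{\mathcal{L}}(f_2)$. The paper first rewrites the Dirichlet form in divergence form (\cref{prop:dirichlet form as sum of commutators}), invokes the auxiliary fact $\mathcal{E}_{\mathcal{L}}(f)=\mathcal{E}_{\mathcal{L}}(f^\dagger)$, and then applies the parallelogram identity to the commutator norms. Your route is more direct: you observe that $\mathcal{L}$ preserves Hermiticity (which is immediate term-by-term from the Lindblad form, and in fact does not even require the self-adjoint jump set or detailed balance), so $\langle \mathcal{L}(f_1),f_2\rangle_\rho$ is the KMS pairing of two Hermitian operators and hence real; KMS reversibility then gives $\langle \mathcal{L}(f_2),f_1\rangle_\rho=\langle \mathcal{L}(f_1),f_2\rangle_\rho$, and the cross terms $\pm i\langle\mathcal{L}(f_j),f_k\rangle_\rho$ cancel. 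This avoids \cref{prop:dirichlet form as sum of commutators} entirely and is slightly more elementary, at the cost of appealing to Hermiticity preservation, which the paper never states explicitly. Both arguments are short; yours is closer in spirit to the alternative eigenvector proof the paper sketches in the remark following \cref{prop:compare spectral gap with hermitian spectral gap}.
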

We need the following proposition.
\begin{proposition}\label{prop:compare spectral gap with hermitian spectral gap helper}
With the same setup as \cref{prop:dirichlet form as sum of commutators}, 
\[\mathcal{E}_{\mathcal{L}}(f) = \mathcal{E}_{\mathcal{L}}(f^\dag) \quad \text{and} \quad \Var(f) = \Var(f^\dag). \]
\end{proposition}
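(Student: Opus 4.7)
The plan is to verify both equalities by direct computation, using the divergence form of the Dirichlet form provided in \cref{prop:dirichlet form as sum of commutators}, together with three elementary symmetry facts: (i) the KMS inner product satisfies $\langle A,B\rangle_\rho = \overline{\langle B,A\rangle_\rho}$ and, in particular, $\|A\|_\rho = \|A^\dag\|_\rho$; (ii) the map $S\mapsto S(\omega)$ satisfies $S(\omega)^\dag = S^\dag(-\omega)$ directly from \cref{def:davies generator}; and (iii) under the detailed balance condition $G(-\omega)=G(\omega)e^{\beta\omega}$, the rescaled rate $\tilde G(\omega) = G(\omega) e^{\beta\omega/2}$ is an even function of $\omega$.

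For the variance, I would first expand
\[
\Var(f^\dag) = \langle f^\dag, f^\dag\rangle_\rho - \tr(\rho f^\dag)\,\tr(\rho f).
\]
The product of traces is manifestly symmetric in $f\leftrightarrow f^\dag$. For the inner-product term I would write $\langle f^\dag, f^\dag\rangle_\rho = \tr(\rho^{1/2} f^\dag \rho^{1/2} f)$ and use the cyclicity of the trace to conclude that this equals $\tr(\rho^{1/2} f \rho^{1/2} f^\dag) = \langle f,f\rangle_\rho$, yielding $\Var(f^\dag)=\Var(f)$.

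For the Dirichlet form, the key computation is
\[
[S(\omega),f^\dag] = -\bigl([S(\omega)^\dag, f]\bigr)^\dag,
\]
which is immediate from $(AB-BA)^\dag = B^\dag A^\dag - A^\dag B^\dag$. Combined with fact (i), this gives $\|[S(\omega),f^\dag]\|_\rho^2 = \|[S(\omega)^\dag, f]\|_\rho^2$, and then fact (ii) identifies the right-hand side with $\|[S^\dag(-\omega), f]\|_\rho^2$. Substituting into the divergence form,
\[
2\mathcal{E}_{\mathcal{L}}(f^\dag) = \sum_{S\in\mathcal{S},\,\omega} \tilde G(\omega)\,\|[S^\dag(-\omega),f]\|_\rho^2,
\]
I would reindex the sum via $S'=S^\dag$ (legitimate because $\mathcal{S}^\dag=\mathcal{S}$) and $\omega'=-\omega$, and apply fact (iii) to obtain
\[
2\mathcal{E}_{\mathcal{L}}(f^\dag) = \sum_{S'\in\mathcal{S},\,\omega'} \tilde G(\omega')\,\|[S'(\omega'),f]\|_\rho^2 = 2\mathcal{E}_{\mathcal{L}}(f).
\]

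There is no real obstacle here; the proof is a bookkeeping check. The only subtlety worth stating carefully is the identity $S(\omega)^\dag = S^\dag(-\omega)$ from \cref{def:davies generator}, which is what lets the reindexing close under the assumptions $\mathcal{S}^\dag=\mathcal{S}$ and the evenness of $\tilde G$ guaranteed by detailed balance.
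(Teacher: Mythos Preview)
Your proposal is correct and essentially identical to the paper's proof: both use the identity $S(\omega)^\dag=S^\dag(-\omega)$ together with $\|A\|_\rho=\|A^\dag\|_\rho$ to match each commutator term with its counterpart after swapping $f\leftrightarrow f^\dag$, then reindex using $\mathcal{S}^\dag=\mathcal{S}$ and the evenness of $\tilde G$. The variance argument via cyclicity of the trace is also the same.
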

\begin{proof}
    Since $S^\dag(-\omega)  = S(\omega)^\dag, $ we have 
    \[ [S^\dag(-\omega) , f^\dag] = -[S(\omega), f]^\dag \] and thus for $A: =\rho^{1/4} [S(\omega), f] \rho^{1/4}$,
    \[ \|[S(\omega),f]\|_\rho^2 = \|A\|_{\mathrm{F}}^2 = \|A^\dag \|_{\mathrm{F}}^2 = \|[S^\dag(-\omega) , f^\dag]\|_\rho^2.\]
    Combining this with $\tilde{G}(\omega ) = \tilde{G}(-\omega),$ we have
    \[\mathcal{E}_{\mathcal{L}}(f) = \frac{1}{2}\sum_{S\in \mathcal{S},\omega} \tilde{G}(\omega )\|[S(\omega),f]\|_\rho^2 = \frac{1}{2}\sum_{S\in \mathcal{S},\omega} \tilde{G}(-\omega )\|[S^\dag(-\omega),f^\dag]\|_\rho^2 = \mathcal{E}_{\mathcal{L}}(f^\dag). \]
    Next, 
    \[ \Var(f) = \trace{f \rho^{1/2} f^\dag \rho^{1/2} } - \trace{\rho f}\trace{\rho f^\dag} = \trace{f^\dag \rho^{1/2} f\rho^{1/2} } -\trace{\rho f^\dag} \trace{\rho f} = \Var(f^\dag).\qedhere \]
\end{proof}
\begin{proof}[Proof of \cref{prop:compare spectral gap with hermitian spectral gap in V0}]
Obviously $ \lambda_{\mathcal{L}, 0}^{\mathrm{H}}\geq\lambda_{\mathcal{L}, 0}. $
We prove $ \lambda_{\mathcal{L}, 0}\geq \lambda_{\mathcal{L}, 0}^{\mathrm{H}}.$ Fix $f \in V_0$ such that $\frac{\mathcal{E}_{\mathcal{L}}(f)}{\Var(f)} = \lambda_{\mathcal{L},0}.$ 
We show that 
\[\mathcal{E}_{\mathcal{L}}(f) = \mathcal{E}_{\mathcal{L}}\left( \frac{f + f^\dag}{2}\right)+ \mathcal{E}_{\mathcal{L}}\left(\frac{i(f-f^\dag)}{2}\right).\]
Indeed, by \cref{prop:dirichlet form as sum of commutators,prop:compare spectral gap with hermitian spectral gap helper}, the LHS can be rewritten as
\begin{align*}
  \mathcal{E}_{\mathcal{L}}(f) =  \frac{\mathcal{E}_{\mathcal{L}}(f) + \mathcal{E}_{\mathcal{L}}(f^\dag)}{2} &= \frac{1}{2}\sum_{S\in \mathcal{S},\omega} \tilde{G}(\omega) \frac{\|[S(\omega), f]\|_\rho^2 + \|[S(\omega), f^\dag]\|_\rho^2}{2}\\
    &=_{(1)}\frac{1}{2} \sum_{S\in \mathcal{S},\omega} \tilde{G}(\omega)  \left( \left\|\left[S(\omega), \frac{f + f^\dag}{2}\right]\right\|_\rho^2 + \left\|\left[S(\omega), \frac{i(f-f^\dag)}{2}\right]\right\|_\rho^2\right)\\
    &= \mathcal{E}_{\mathcal{L}}\left(\frac{f + f^\dag}{2}\right)+ \mathcal{E}_{\mathcal{L}}\left(\frac{i(f - f^\dag)}{2}\right)
\end{align*}
where (1) is due to the identity $ \trace{A A^\dag } + \trace{B B^\dag} = \frac{1}{2}(\trace{(A+B) (A+B)^\dag}  + \trace{(A-B)(A-B)^\dag}), $ where $A = \rho^{1/4} [S(\omega) , f]\rho^{1/4}$ and $ B = \rho^{1/4} [S(\omega) , f^\dag]\rho^{1/4}.$

Next, we show
\begin{equation}\label{eq:compare variance of nonhermitian and hermitian}
    \Var\left(\frac{f+f^\dag}{2}\right) + \Var\left(\frac{i(f-f^\dag)}{2}\right) = \Var(f) \nonumber
\end{equation}
and using the fact that $ \frac{f+f^\dag}{2}, \frac{i(f-f^\dag)}{2} $  are Hermitian and in $V_0$, conclude that 
\[ \mathcal{E}_{\mathcal{L}}(f) = \mathcal{E}_{\mathcal{L}}\left( \frac{f + f^\dag}{2}\right)+ \mathcal{E}_{\mathcal{L}}\left(\frac{i(f-f^\dag)}{2}\right) \geq  \lambda_0^{\mathrm{H}} \left(\Var \left( \frac{f + f^\dag}{2}\right) + \Var\left(\frac{i(f-f^\dag)}{2}\right)\right) = \lambda_{\mathcal{L},0}^{\mathrm{H}} \Var(f),\]
which implies $\lambda_{\mathcal{L},0} \geq \lambda_{\mathcal{L},0}^{\mathrm{H}}$ due to the choice of $f.$

Indeed, 
\begin{align*}
     &\Var\left(\frac{f+f^\dag}{2}\right) + \Var\left(\frac{i(f-f^\dag)}{2}\right) \\
     &\qquad = \frac{1}{4}\left(\langle f + f^\dag, f+f^\dag\rangle_\rho - \trace{\rho (f+ f^\dag) }^2 +  \langle f - f^\dag, f-f^\dag \rangle_\rho - \trace{\rho (f- f^\dag) } \trace{\rho(f^\dag- f) }\right)\\
     &\qquad = \frac{1}{2} (\langle f , f\rangle\rho + \langle f^\dag , f^\dag\rangle\rho) - \trace{ \rho f } \trace{\rho f^\dag} \\
     &\qquad = \Var(f). \tag*{\qedhere}
\end{align*}
\end{proof}
With a similar proof,\footnote{Take arbitrary $f$ such that $ \lambda(f) = \lambda_{\mathcal{L}}$ instead of $ f\in V_0$ such that $\lambda(f) = \lambda_{\mathcal{L},0}$.} we obtain the following. 
\begin{proposition}
    \label{prop:compare spectral gap with hermitian spectral gap}
    Let $\lambda_{\mathcal{L}}$, $ \lambda_{\mathcal{L}}^{\mathrm{H}}$ be as defined in \cref{eq:spectral gap,eq:spectral gap hermitian} respectively. Then
    \[\lambda_{\mathcal{L}} =\lambda_{\mathcal{L}}^{\mathrm{H}}. \]
\end{proposition}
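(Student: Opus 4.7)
The plan is to follow the proof of \cref{prop:compare spectral gap with hermitian spectral gap in V0} essentially verbatim, observing that it never actually used the assumption $f \in V_0$. The inequality $\lambda_{\mathcal{L}}^{\mathrm{H}} \geq \lambda_{\mathcal{L}}$ is immediate, since the minimum in \cref{eq:spectral gap hermitian} is taken over a subset of the observables considered in \cref{eq:spectral gap}.

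For the reverse direction, I would take an arbitrary (not necessarily Hermitian) observable $f$ achieving $\lambda_{\mathcal{L}}(f) = \lambda_{\mathcal{L}}$, and decompose it via its Hermitian and anti-Hermitian parts: set $A := \tfrac{f + f^\dag}{2}$ and $B := \tfrac{i(f-f^\dag)}{2}$, both of which are Hermitian. The goal is to establish the two identities
\[
\mathcal{E}_{\mathcal{L}}(f) = \mathcal{E}_{\mathcal{L}}(A) + \mathcal{E}_{\mathcal{L}}(B), \qquad \Var(f) = \Var(A) + \Var(B).
\]
For the Dirichlet form identity, I would invoke \cref{prop:dirichlet form as sum of commutators} to write $\mathcal{E}_{\mathcal{L}}(f)$ as a weighted sum of squared commutator norms $\|[S(\omega), f]\|_\rho^2$, use \cref{prop:compare spectral gap with hermitian spectral gap helper} to equate $\mathcal{E}_{\mathcal{L}}(f)$ and $\mathcal{E}_{\mathcal{L}}(f^\dag)$ so that $\mathcal{E}_{\mathcal{L}}(f) = \tfrac{1}{2}(\mathcal{E}_{\mathcal{L}}(f) + \mathcal{E}_{\mathcal{L}}(f^\dag))$, and then apply the parallelogram identity $\tr(XX^\dag) + \tr(YY^\dag) = \tfrac{1}{2}(\tr((X+Y)(X+Y)^\dag) + \tr((X-Y)(X-Y)^\dag))$ with $X = \rho^{1/4}[S(\omega),f]\rho^{1/4}$ and $Y = \rho^{1/4}[S(\omega),f^\dag]\rho^{1/4}$. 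The variance identity comes from directly expanding $\Var(A) + \Var(B)$ using $\langle \cdot, \cdot\rangle_\rho$ and the observation that the cross terms $\langle f, f^\dag\rangle_\rho$ and $\trace{\rho f}\trace{\rho f}$ cancel.

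Given both identities, Hermiticity of $A$ and $B$ yields $\mathcal{E}_{\mathcal{L}}(A) \geq \lambda_{\mathcal{L}}^{\mathrm{H}}\Var(A)$ and $\mathcal{E}_{\mathcal{L}}(B) \geq \lambda_{\mathcal{L}}^{\mathrm{H}}\Var(B)$, which sum to $\mathcal{E}_{\mathcal{L}}(f) \geq \lambda_{\mathcal{L}}^{\mathrm{H}}\Var(f)$, hence $\lambda_{\mathcal{L}} \geq \lambda_{\mathcal{L}}^{\mathrm{H}}$.

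I do not anticipate any real obstacle: the key point is simply that all inputs to the $V_0$ proof (the divergence-form expression for the Dirichlet form, its invariance under $f \mapsto f^\dag$, the parallelogram identity, and the direct expansion of variance) hold for arbitrary observables, not just those in $V_0$. The $V_0$ hypothesis in \cref{prop:compare spectral gap with hermitian spectral gap in V0} was only used to ensure $A, B \in V_0$ so that one could invoke $\lambda_{\mathcal{L},0}^{\mathrm{H}}$; dropping it and invoking $\lambda_{\mathcal{L}}^{\mathrm{H}}$ instead yields the claim.
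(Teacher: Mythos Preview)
Your proposal is correct and matches the paper's own approach exactly: the paper states that the proof of \cref{prop:compare spectral gap with hermitian spectral gap} is the same as that of \cref{prop:compare spectral gap with hermitian spectral gap in V0}, taking an arbitrary $f$ with $\lambda_{\mathcal{L}}(f)=\lambda_{\mathcal{L}}$ instead of $f\in V_0$ with $\lambda_{\mathcal{L}}(f)=\lambda_{\mathcal{L},0}$. Your observation that the $V_0$ hypothesis was only used to ensure $A,B\in V_0$ (so as to invoke $\lambda_{\mathcal{L},0}^{\mathrm{H}}$) is precisely the point.
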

\begin{remark}
    An alternative proof of \cref{prop:compare spectral gap with hermitian spectral gap,prop:compare spectral gap with hermitian spectral gap in V0} is by viewing $\lambda_{\mathcal{L}}$ as the second smallest eigenvalue of $ \mathcal{L},$ and $ \lambda_{\mathcal{L}, 0} $ as the second smallest eigenvalue of the linear operator obtained by restricted $\mathcal{L}$ to the invariant subspace $V_0.$ If $f$ is an eigenvector of $\mathcal{L}$ with eigenvalue $\lambda_{\mathcal{L}} ,$ then so is $ f^\dag$ by  \cref{prop:compare spectral gap with hermitian spectral gap helper} and the minimax principle. The linear combination $\frac{f + f^\dag}{2} $ is Hermitian, and is also an eigenvector $\mathcal{L}$ with eigenvalue $\lambda_{\mathcal{L}},$ yielding that $\lambda_{\mathcal{L}} = \lambda_{\mathcal{L}}^{\mathrm{H}}. $ Similarly, if $f\in V_0$ is an eigenvector of $\mathcal{L}$ with eigenvalue $\lambda_{\mathcal{L},0},$ then the Hermitian observable $\frac{f + f^\dag}{2} \in V_0$ is also an an eigenvector $\mathcal{L}$ with eigenvalue $\lambda_{\mathcal{L},0},$ implying that $\lambda_{\mathcal{L},0} = \lambda_{\mathcal{L},0}^{\mathrm{H}}. $
\end{remark}

Some formulations of the Davies generator include the coherent Hamiltonian evolution explicitly, writing the generator as  
\[
\tilde{\mathcal{L}}(f) = i[H, f] + \mathcal{L}(f),
\]
where \( \mathcal{L} \) is as \cref{def:davies generator}. The following proposition clarifies that the inclusion of the coherent term \( i[H, f] \) has no effect on the Dirichlet form, and thus does not affect the spectral gap.

\begin{proposition}\label{prop:incoherent term is not important}
Let \( \tilde{\mathcal{L}}(f) = i[H, f] + \mathcal{L}(f) \), where \( \mathcal{L} \) is the Davies generator as defined in \cref{def:davies generator}, and let \( \rho \) be the stationary state. Then:
\begin{enumerate}
    \item For all observables \( f \), the real part of the Dirichlet form is unaffected by the coherent term:
    \[
    \operatorname{Re} ( \langle f, \tilde{\mathcal{L}}(f) \rangle_\rho) = \langle f, \mathcal{L}(f) \rangle_\rho.
    \]
    \item In particular, for any Hermitian observable \( f = f^\dag \), the Dirichlet forms of \( \tilde{\mathcal{L}} \) and \( \mathcal{L} \) coincide:
    \[
    \langle f, \tilde{\mathcal{L}}(f) \rangle_\rho = \langle f, \mathcal{L}(f) \rangle_\rho.
    \]
\end{enumerate}
\end{proposition}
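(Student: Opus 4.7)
The plan is to show that $\tilde{\mathcal{L}} - \mathcal{L}$, given by $f \mapsto i[H,f]$, is skew-adjoint with respect to the KMS inner product, so that it contributes only purely imaginary values to $\langle f,\tilde{\mathcal{L}}(f)\rangle_\rho$. Combined with the reality of $\langle f,\mathcal{L}(f)\rangle_\rho$ (which follows by taking $g=f$ in the KMS reversibility \cref{eq:kms reversible condition} of $\mathcal{L}$), this immediately yields item~(1). Item~(2) will then follow from the additional observation that for Hermitian $f$ the coherent term is simultaneously real, hence zero.

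The key computation is the identity
\[
\langle f, i[H,g]\rangle_\rho = -\langle i[H,f], g\rangle_\rho
\]
for arbitrary $f, g$. To prove it, I would unpack the left side as
\[
\langle f, i[H,g]\rangle_\rho = -i\,\tr\bigl(\rho^{1/2} f \rho^{1/2}(g^\dag H - H g^\dag)\bigr),
\]
using $[H,g]^\dag = g^\dag H - Hg^\dag$. Since $\rho$ is a function of $H$, we have $[H,\rho^{1/2}]=0$, so the $H$ in each of the two terms can be slid past the $\rho^{1/2}$ factors and then moved cyclically to produce
\[
\langle f, i[H,g]\rangle_\rho = -i\,\tr\bigl(\rho^{1/2}[H,f]\rho^{1/2} g^\dag\bigr) = -\langle i[H,f], g\rangle_\rho.
\]
Setting $g = f$ and using the conjugate symmetry of $\langle\cdot,\cdot\rangle_\rho$ then yields $\langle f, i[H,f]\rangle_\rho = -\overline{\langle f, i[H,f]\rangle_\rho}$, so the quantity is purely imaginary. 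Taking real parts in $\tilde{\mathcal{L}} = \mathcal{L} + i[H,\cdot]$ gives item~(1).

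For item~(2), suppose $f^\dag = f$. Then $[H,f]$ is anti-Hermitian and $i[H,f]$ is Hermitian, and $\langle f, i[H,f]\rangle_\rho = \tr(\rho^{1/2} f \rho^{1/2}\cdot i[H,f])$ is a trace of a product of three Hermitian operators. A short check using cyclicity of the trace and taking the adjoint shows that this trace equals its own complex conjugate, hence is real. Combined with the purely imaginary conclusion from item~(1), it must vanish, proving item~(2).

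The only mild obstacle is keeping track of signs and daggers under the definition $\langle A, B\rangle_\rho = \tr(\rho^{1/2} A \rho^{1/2} B^\dag)$; there is no deeper obstruction, since everything reduces to cyclicity of the trace together with the single commutation relation $[H,\rho^{1/2}] = 0$. Morally, Hamiltonian evolution generated by $H$ acts unitarily on $(\C^{N\times N}, \langle \cdot,\cdot\rangle_\rho)$ and hence cannot produce a real quadratic form.
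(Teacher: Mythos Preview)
Your proof is correct and follows essentially the same approach as the paper: both arguments reduce everything to the commutation $[H,\rho^{1/2}]=0$ together with cyclicity of the trace, and both establish that the coherent contribution is purely imaginary. The only cosmetic difference is in item~(2): the paper observes directly that when $f=f^\dag$ the two terms $\tr(\tilde f^\dag H\tilde f)$ and $\tr(\tilde f H\tilde f^\dag)$ (with $\tilde f=\rho^{1/4}f\rho^{1/4}$) coincide, whereas you argue ``purely imaginary and real, hence zero''; note that in your phrasing the relevant grouping is $\tr\bigl((\rho^{1/2}f\rho^{1/2})\cdot(i[H,f])\bigr)$ as a product of \emph{two} Hermitian operators, not three, which is what makes the reality check go through.
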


\begin{proof}
    We only need to check that 
    \[ \langle [H,f], f\rangle_\rho \in \R \quad \text{and} \quad \text{for all $f$ such that $f=f^\dag$, } \langle [H,f], f\rangle_\rho = 0. \]
    Let $\tilde{f} = \rho^{1/4} f \rho^{1/4}.$
    Since $\rho $ commutes with $H,$ we can rewrite
    \begin{align*}
        \langle [H,f], f\rangle_\rho =\trace{\rho^{1/4}H f \rho^{1/2} f^\dag \rho^{1/4}} -\trace{\rho^{1/4}f H \rho^{1/2} f^\dag \rho^{1/4}} = \trace{\tilde{f}^\dag H \tilde{f}  }-\trace{ \tilde{f}H\tilde{f}^\dag } = \trace{\tilde{f}^\dag H \tilde{f} - \tilde{f}H\tilde{f}^\dag} \in\R
    \end{align*}
    since $\tilde{f}^\dag H \tilde{f} - \tilde{f}H\tilde{f}^\dag$ is Hermitian, thus has real trace. If $f = f^\dag$ then $\tilde{f}=\tilde{f}^\dag,$ so the above yields \[ \langle [H,f], f\rangle_\rho = 0.\qedhere\]
\end{proof}

\section{Proof of main comparison theorems} \label{sec:general spectral comparison}

In this section, we prove \cref{thm:general comparison theorem dependent only on length of arithmetic sequence,thm:compare spectral gap V omega and V 0}.

For a Bohr frequency $\omega$, and an operator $f\in V_\omega,$ we let 
\begin{equation}\label{eq:hermitianization of f}
    g := e^{\frac{\beta\omega}{4}} (f f^\dag)^{1/2} \quad \text{and} \quad h := e^{-\frac{\beta\omega}{4}} (f^\dag f)^{1/2}
\end{equation}
\begin{theorem}[Comparing Dirichlet forms] \label{thm:comparison of dirichlet forms}
    For any $f\in V_\omega,$ and $g,h$ as defined in \cref{eq:hermitianization of f}.
      For any Davies generator $\mathcal{L},$
\[2 \mathcal{E}_\mathcal{L}(f)\geq \mathcal{E}_\mathcal{L}(g) + \mathcal{E}_\mathcal{L}(h).\]

\end{theorem}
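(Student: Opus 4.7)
The plan is to express $2\mathcal{E}_{\mathcal{L}}(f) - \mathcal{E}_{\mathcal{L}}(g) - \mathcal{E}_{\mathcal{L}}(h)$ in divergence form, use a canonical block-SVD of $f$ to simplify the integrand, and then exploit detailed balance together with the symmetry $\mathcal{S}^\dag = \mathcal{S}$ to reduce the whole expression to a manifest sum of squares. By \cref{prop:dirichlet form as sum of commutators},
\[
2\mathcal{E}_{\mathcal{L}}(f) - \mathcal{E}_{\mathcal{L}}(g) - \mathcal{E}_{\mathcal{L}}(h) = \tfrac{1}{2}\sum_{S \in \mathcal{S},\,\omega'} \tilde{G}(\omega')\,\Delta_{S,\omega'}, \quad \Delta_{S,\omega'} := 2\|[S(\omega'),f]\|_\rho^2 - \|[S(\omega'),g]\|_\rho^2 - \|[S(\omega'),h]\|_\rho^2.
\]
Small examples show that $\Delta_{S,\omega'}$ can be negative pointwise, so the summation structure is essential.

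Since $f \in V_\omega$, each block $\Pi_{\lambda+\omega}f\Pi_\lambda$ maps the $\lambda$-eigenspace of $H$ into the $(\lambda+\omega)$-eigenspace. Taking an SVD of each block and concatenating, I write $f = \sum_i \sigma_i |u_i\rangle\langle v_i|$ with $\{|u_i\rangle\}, \{|v_i\rangle\}$ orthonormal families of eigenvectors of $H$ obeying $H|u_i\rangle = E_i|u_i\rangle$ and $H|v_i\rangle = (E_i - \omega)|v_i\rangle$. Then $g = e^{\beta\omega/4}\sum_i\sigma_i|u_i\rangle\langle u_i|$ and $h = e^{-\beta\omega/4}\sum_i\sigma_i|v_i\rangle\langle v_i|$ are Hermitian elements of $V_0$ (via \cref{prop:observable in V0 commute with $H$}). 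Defining $\lambda_i := \sigma_i e^{\beta\omega/4}\rho(u_i)^{1/2}$ and using $\rho(v_i) = e^{\beta\omega}\rho(u_i)$, the operators $\tilde{X} := \rho^{1/4}X\rho^{1/4}$ take the forms $\tilde{f} = \sum_i\lambda_i|u_i\rangle\langle v_i|$, $\tilde{g} = \sum_i\lambda_i|u_i\rangle\langle u_i|$, $\tilde{h} = \sum_i\lambda_i|v_i\rangle\langle v_i|$, with the key identities $\tilde{f}\tilde{f}^\dag = \tilde{g}^2$ and $\tilde{f}^\dag\tilde{f} = \tilde{h}^2$.

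Applying \cref{prop:operator in V omega commutation with rho} to push $\rho^{1/4}$ past both $S(\omega') \in V_{\omega'}$ and each $X \in \{f,g,h\}$ yields the uniform identity $\rho^{1/4}[S(\omega'),X]\rho^{1/4} = e^{-\beta\omega'/4}S(\omega')\tilde{X} - e^{\beta\omega'/4}\tilde{X}S(\omega')$. Expanding the Frobenius norm, the $e^{\pm\beta\omega'/4}$ factors in the cross term cancel, and the identities $\tilde{f}\tilde{f}^\dag = \tilde{g}^2$, $\tilde{f}^\dag\tilde{f} = \tilde{h}^2$ simplify the quadratic terms of $\Delta_{S,\omega'}$. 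This decomposes $\Delta_{S,\omega'} = Q_{S,\omega'} + C_{S,\omega'}$, where direct expansion in the bases $\{|u_i\rangle\}, \{|v_i\rangle\}$ gives the manifest sum of squares
\[
C_{S,\omega'} = 2\sum_{i,j}\lambda_i\lambda_j\,\bigl|\langle u_j|S(\omega')|u_i\rangle - \langle v_j|S(\omega')|v_i\rangle\bigr|^2 \geq 0,
\]
while the quadratic remainder is
\[
Q_{S,\omega'} = e^{-\beta\omega'/2}\bigl(\|S(\omega')\tilde{g}\|_F^2 - \|S(\omega')\tilde{h}\|_F^2\bigr) + e^{\beta\omega'/2}\bigl(\|\tilde{h}S(\omega')\|_F^2 - \|\tilde{g}S(\omega')\|_F^2\bigr).
\]

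The final step is to show $\sum_{S,\omega'}\tilde{G}(\omega')\,Q_{S,\omega'} = 0$. Writing $\tilde{G}(\omega') = G(\omega')e^{\beta\omega'/2}$ and invoking detailed balance $G(\omega')e^{\beta\omega'} = G(-\omega')$ turns the weights of the two halves of $Q$ into $G(\omega')$ and $G(-\omega')$. Reindexing $\omega' \mapsto -\omega'$ in the second half, using $S(-\omega')^\dag = S^\dag(\omega')$ (so $\|\tilde{X}S(-\omega')\|_F = \|S^\dag(\omega')\tilde{X}\|_F$ for Hermitian $\tilde{X}$), and using $\mathcal{S}^\dag = \mathcal{S}$ to relabel $S \leftrightarrow S^\dag$, the two halves cancel exactly. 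Therefore $2\mathcal{E}_\mathcal{L}(f) - \mathcal{E}_\mathcal{L}(g) - \mathcal{E}_\mathcal{L}(h) = \tfrac{1}{2}\sum_{S,\omega'}\tilde{G}(\omega')C_{S,\omega'} \geq 0$. The main obstacle I anticipate is the careful bookkeeping of the $e^{\pm\beta\omega'/4}$ and $e^{\pm\beta\omega/4}$ factors arising from commuting $\rho^{1/4}$ past operators in different $V_\omega$-subspaces, and in particular verifying that detailed balance plus $\mathcal{S}^\dag = \mathcal{S}$ precisely annihilates $Q_{S,\omega'}$ under summation (which is the only non-pointwise step in the argument).
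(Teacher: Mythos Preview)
Your proof is correct and follows essentially the same route as the paper's. The paper packages the argument by pairing each term $(S,\tilde\omega)$ with $(S^\dag,-\tilde\omega)$ and showing that the resulting symmetrized quantity $T(\tilde\omega)=\tfrac12(\Delta_{S,\tilde\omega}+\Delta_{S^\dag,-\tilde\omega})$ is nonnegative; within each such pair the quadratic pieces (your $Q$) cancel and what remains is exactly your sum-of-squares $C$ (this is the content of their Proposition~3.3). Your decomposition $\Delta=Q+C$ together with $\sum_{S,\omega'}\tilde G(\omega')Q_{S,\omega'}=0$ via the substitution $\omega'\mapsto-\omega'$, $S\mapsto S^\dag$ is the same cancellation written globally rather than pairwise, so in particular your remark that ``the summation structure is essential'' is slightly stronger than needed: the pairing already makes each $T(\tilde\omega)\ge 0$.
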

To prove \cref{thm:comparison of dirichlet forms}, we need the following propositions.
\begin{proposition} \label{prop:commutator decomposition}
Fix an arbitrary observable $ \phi.$
   Let $\tilde{\phi} := \rho^{1/2} \phi \rho^{1/2}$. Then for any Bohr frequency $\tilde{\omega}$,
    \[\|[S(\tilde{\omega}), \phi]\|_\rho^2 = e^{-\frac{\beta \tilde{\omega}}{2}} 
 \trace{S(\tilde{\omega}) \tilde{\phi} \tilde{\phi}^\dag S(\tilde{\omega})^\dag } + e^{\frac{\beta \tilde{\omega}}{2}} 
 \text{tr}(S(\tilde{\omega})^\dag \tilde{\phi}^\dag \tilde{\phi} S(\tilde{\omega})) -  \trace{S(\tilde{\omega}) \tilde{\phi} S(\tilde{\omega})^\dag \tilde{\phi}^\dag} -  \trace{S(\tilde{\omega}) \tilde{\phi}^\dag S(\tilde{\omega})^\dag \tilde{\phi}}.  \]
\end{proposition}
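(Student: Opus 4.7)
The plan is a direct calculation. I would unfold the definition $\|\cdot\|_\rho^2 = \|\rho^{1/4}(\cdot)\rho^{1/4}\|_{\mathrm{F}}^2$, expand the commutator, and then push the $\rho^{1/4}$ factors through $S(\tilde{\omega})$ using \cref{prop:operator in V omega commutation with rho}. Since $S(\tilde{\omega}) \in V_{\tilde{\omega}}$, that proposition applied with $m = 1/4$ yields
\[\rho^{1/4}S(\tilde{\omega}) = e^{-\beta\tilde{\omega}/4}\,S(\tilde{\omega})\,\rho^{1/4}, \qquad S(\tilde{\omega})\rho^{1/4} = e^{\beta\tilde{\omega}/4}\,\rho^{1/4}\,S(\tilde{\omega}).\]
Collecting $\tilde{\phi} := \rho^{1/4}\phi\rho^{1/4}$ (the convention for which the stated identity is consistent with the $\|\cdot\|_\rho$ norm as defined in the Preliminaries), the commutator under the $\mathrm{F}$-norm becomes
\[\rho^{1/4}[S(\tilde{\omega}),\phi]\rho^{1/4} = e^{-\beta\tilde{\omega}/4}\,S(\tilde{\omega})\,\tilde{\phi} \;-\; e^{\beta\tilde{\omega}/4}\,\tilde{\phi}\,S(\tilde{\omega}).\]

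From here I would apply $\|A - B\|_{\mathrm{F}}^2 = \tr(A^\dag A) + \tr(B^\dag B) - \tr(A^\dag B) - \tr(B^\dag A)$ with $A = e^{-\beta\tilde{\omega}/4}S(\tilde{\omega})\tilde{\phi}$ and $B = e^{\beta\tilde{\omega}/4}\tilde{\phi}S(\tilde{\omega})$. The two diagonal terms carry prefactors $e^{\mp\beta\tilde{\omega}/2}$ and, after one cyclic rotation of the trace, reproduce exactly the first two terms on the right-hand side of the claimed identity. In the two cross terms the exponential prefactors multiply to $e^{-\beta\tilde{\omega}/4}\cdot e^{\beta\tilde{\omega}/4} = 1$; a single cyclic rotation placing $S(\tilde{\omega})$ on the left turns them into $\tr(S(\tilde{\omega})\tilde{\phi}^\dag S(\tilde{\omega})^\dag\tilde{\phi})$ and $\tr(S(\tilde{\omega})\tilde{\phi}S(\tilde{\omega})^\dag\tilde{\phi}^\dag)$, and these carry the correct minus signs to match the final two terms in the identity. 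Summing the four contributions finishes the proof.

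Because the argument is pure bookkeeping, there is no genuine obstacle. The two places where a small error would be easy to make are (i) the sign of the exponent when commuting $\rho^{1/4}$ past $S(\tilde{\omega})$, which hinges on $S(\tilde{\omega})$ lying in $V_{\tilde{\omega}}$ rather than $V_{-\tilde{\omega}}$ and must be read off carefully from \cref{prop:operator in V omega commutation with rho}; and (ii) matching the precise cyclic ordering of $S(\tilde{\omega})$, $S(\tilde{\omega})^\dag$, $\tilde{\phi}$, $\tilde{\phi}^\dag$ inside each of the four traces, so that after rearrangement every term has exactly the form displayed in the proposition.
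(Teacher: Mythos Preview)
Your proof is correct and follows essentially the same approach as the paper: both expand the $\rho$-norm of the commutator and use \cref{prop:operator in V omega commutation with rho} to push powers of $\rho$ through $S(\tilde{\omega})$, differing only in the order of these two steps. You also correctly noted that the statement's $\tilde{\phi} := \rho^{1/2}\phi\rho^{1/2}$ should read $\tilde{\phi} := \rho^{1/4}\phi\rho^{1/4}$, as confirmed by the paper's own proof and by its later use in the proof of \cref{thm:comparison of dirichlet forms}.
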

\begin{proof}
We begin by writing
    \begin{align}
        \|[S(\tilde{\omega}), f]\|_\rho^2&= \trace{ \rho^{1/4} (S(\tilde{\omega}) \phi - \phi S(\tilde{\omega})) \rho^{1/2} (\phi^\dag S(\tilde{\omega})^\dag -  S(\tilde{\omega})^\dag \phi^\dag) \rho^{1/4}}\nonumber\\
        &= \trace{\rho^{1/4} S(\tilde{\omega}) \phi \rho^{1/2} \phi^\dag S(\tilde{\omega})^\dag \rho^{1/4} } + \trace{\rho^{1/4} \phi S(\tilde{\omega})  \rho^{1/2} S(\tilde{\omega})^\dag  \phi^\dag\rho^{1/4} } \nonumber\\
        &\quad- \trace{\rho^{1/4} S(\tilde{\omega}) \phi \rho^{1/2} S(\tilde{\omega})^\dag \phi^\dag \rho^{1/4}} -\trace{\rho^{1/4} \phi S(\tilde{\omega}) \rho^{1/2} \phi^\dag   S(\tilde{\omega})^\dag \rho^{1/4}}.\label{eq:rho norm commutator}
    \end{align}

Using \cref{prop:operator in V omega commutation with rho} for $S(\tilde{\omega}) \in V_{\tilde{\omega}}$ and $S(\tilde{\omega})^\dag \in V_{-\tilde{\omega}}$ and the fact that $\trace{AB}= \trace{BA}$ from \cref{prop:eigenvalue multiplicities of AB and BA}, we have
\begin{align*}
    \trace{\rho^{1/4} S(\tilde{\omega}) \phi \rho^{1/2} \phi^\dag S(\tilde{\omega})^\dag \rho^{1/4} }  &= e^{-\frac{\beta\tilde{\omega}}{2}} \trace{ S(\tilde{\omega}) \rho^{1/4}\phi \rho^{1/2} \phi^\dag \rho^{1/4}S(\tilde{\omega})^\dag}  = e^{-\frac{\beta\tilde{\omega}}{2}} \trace{S(\tilde{\omega}) \tilde{\phi} \tilde{\phi}^\dag S(\tilde{\omega}) ^\dag } , \\
    \trace{\rho^{1/4} \phi S(\tilde{\omega})  \rho^{1/2} S(\tilde{\omega})^\dag  \phi^\dag\rho^{1/4} }  &=  e^{\frac{\beta\tilde{\omega}}{2}}\trace{\rho^{1/4} \phi \rho^{1/4} S(\tilde{\omega})   S(\tilde{\omega})^\dag  \rho^{1/4} \phi^\dag\rho^{1/4} } =  e^{\frac{\beta\tilde{\omega}}{2}}\trace{S(\tilde{\omega})^\dag \tilde{\phi}^\dag \tilde{\phi} S(\tilde{\omega})},\\
    \trace{\rho^{1/4} S(\tilde{\omega}) \phi \rho^{1/2} S(\tilde{\omega})^\dag \phi^\dag \rho^{1/4}} &= \trace{ S(\tilde{\omega}) \rho^{1/4} \phi \rho^{1/4} S(\tilde{\omega})^\dag \rho^{1/4} \phi^\dag \rho^{1/4}} = \trace{S(\tilde{\omega}) \tilde{\phi} S(\tilde{\omega})^\dag \tilde{\phi}^\dag},\\
    \trace{\rho^{1/4} \phi S(\tilde{\omega}) \rho^{1/2}  \phi^\dag   S(\tilde{\omega})^\dag\rho^{1/4}} &=\trace{\rho^{1/4} \phi \rho^{1/4} S(\tilde{\omega}) \rho^{1/4} \phi^\dag \rho^{1/4} S(\tilde{\omega})^\dag  } =  \trace{ \tilde{\phi} S(\tilde{\omega}) \tilde{\phi}^\dag S(\tilde{\omega})^\dag } = \trace{ S(\tilde{\omega}) \tilde{\phi}^\dag S(\tilde{\omega})^\dag \tilde{\phi}}. 
\end{align*}
Plugging each of the four equalities above into \cref{eq:rho norm commutator} gives the result. 
\end{proof}

\begin{proposition} \label{prop:crucial trace inequality}
   Let $ A, B$ be arbitrary matrices.
   Let $f$ be an arbitrary matrix, and let $g := (f f ^\dag)^{1/2}$ and $h := (f^\dag f)^{1/2}$. Then, $ \tr (A g A^\dag g) + \tr (Bh B^\dag h ) \geq \tr (A f B^\dag f^\dag) + \tr (B f^\dag A^\dag f).$

\end{proposition}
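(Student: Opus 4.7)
The plan is to reduce the four traces to a single manifestly nonnegative sum via the singular value decomposition of $f$. Write $f = U \Sigma V^\dag$ with $U, V$ unitary and $\Sigma = \mathrm{diag}(\sigma_1, \ldots, \sigma_N)$ having nonnegative entries. Then by uniqueness of the positive square root, $g = (ff^\dag)^{1/2} = U \Sigma U^\dag$ and $h = (f^\dag f)^{1/2} = V \Sigma V^\dag$. Introduce the conjugated matrices $\tilde A := U^\dag A U$ and $\tilde B := V^\dag B V$; these are the right objects because they express $A$ and $B$ in the bases that diagonalize $g$ and $h$ respectively.

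Applying cyclicity of trace, each of the four traces becomes a sum over matrix entries weighted by products $\sigma_i \sigma_j$. Concretely, $\tr(AgA^\dag g) = \tr(\tilde A \Sigma \tilde A^\dag \Sigma) = \sum_{i,j} |\tilde A_{ij}|^2 \sigma_i \sigma_j$, and similarly $\tr(BhB^\dag h) = \sum_{i,j} |\tilde B_{ij}|^2 \sigma_i \sigma_j$. For the cross terms, $\tr(AfB^\dag f^\dag) = \tr(\tilde A \Sigma \tilde B^\dag \Sigma) = \sum_{i,j} \tilde A_{ij} \overline{\tilde B_{ij}}\, \sigma_i \sigma_j$, and $\tr(Bf^\dag A^\dag f) = \sum_{i,j} \overline{\tilde A_{ij}} \tilde B_{ij}\, \sigma_i \sigma_j$ is its complex conjugate.

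Subtracting, the difference between the two sides of the claimed inequality reduces to
\[
\sum_{i,j} \bigl( |\tilde A_{ij}|^2 + |\tilde B_{ij}|^2 - 2\mathrm{Re}(\tilde A_{ij} \overline{\tilde B_{ij}}) \bigr) \sigma_i \sigma_j \;=\; \sum_{i,j} |\tilde A_{ij} - \tilde B_{ij}|^2 \sigma_i \sigma_j \;\geq\; 0,
\]
since the $\sigma_i$ are nonnegative. The only bookkeeping point is keeping track of which unitary ($U$ or $V$) conjugates which of $A, B$: the two different conjugations in the definitions of $\tilde A$ and $\tilde B$ are precisely what encode the asymmetry between $g$ and $h$, and they force the cross term $\tr(AfB^\dag f^\dag)$ to pair $\tilde A$ with $\tilde B$ rather than with itself, producing the square $|\tilde A_{ij} - \tilde B_{ij}|^2$. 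I do not anticipate any real obstacle beyond this bookkeeping.
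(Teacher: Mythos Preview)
Your proof is correct and is essentially the same as the paper's. The paper constructs the SVD by hand from the eigendecomposition of $ff^\dag$ (writing $g=\sum_i \sqrt{\lambda_i}\,u_iu_i^\dag$, $f=\sum_i \sqrt{\lambda_i}\,u_i\tilde v_i^\dag$, $h=\sum_i \sqrt{\lambda_i}\,\tilde v_i\tilde v_i^\dag$) and arrives at the identical nonnegative sum $\sum_{i,j}\sqrt{\lambda_i\lambda_j}\,|u_i^\dag A u_j - \tilde v_i^\dag B \tilde v_j|^2$, which is exactly your $\sum_{i,j}\sigma_i\sigma_j\,|\tilde A_{ij}-\tilde B_{ij}|^2$.
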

\begin{proof}
Pick an orthonormal eigenbasis $u_1, \dots, u_m$ of $f f^\dag = g^2$ with corresponding eigenvalues $\lambda_1 \geq \lambda_2 \geq \cdots\geq\lambda_k > 0 =\lambda_{k+1} =\cdots =\lambda_m = 0.$
Then, 
\[ g = \sum_{i=1}^k \sqrt{\lambda_i} u_i u_i^\dag. \]
Let $ v_i = f^\dag u_i$. If $ i\geq k+1,$ then $f^\dag u_i =0 $ since $ \|f^\dag u_i\|_2^2 =  u_i^\dag f f^\dag u_i = 0 .$ 
Thus
\[ f = \sum_{i= 1}^m u_i u_i^\dag f  = \sum_{i=1}^k u_i v_i^\dag \]
and since $\{u_i\}_{i=1}^m$ is an orthonormal eigenbasis, $ \sum_{i=1}^m u_i u_i^\dag = I$ and
\[ f^\dag f = f^\dag (\sum_{i=1}^m u_i u_i^\dag )f  = \sum_{i=1}^k v_i v_i^\dag. \]
For $i\in [k], \|v_i\|_2^2 = u_i^\dag f f^\dag u_i = \lambda_i > 0$, so we can rewrite $ f^\dag f = \sum_{i=1}^k \lambda_i \cdot \frac{v_i}{\sqrt{\lambda_i}} \cdot \frac{v_i^\dag}{\sqrt{\lambda_i}}$ and thus 
\[h =(f^\dag f)^{1/2} = \sum_{i=1}^k \sqrt{\lambda_i}  \cdot\frac{v_i}{\sqrt{\lambda_i}} \cdot \frac{v_i^\dag}{\sqrt{\lambda_i}}.\]
Hence, for $i\in[k]$, letting $\tilde{v}_i = \frac{v_i}{\sqrt{\lambda_i}}$:
\begin{equation}\label{eq:svd decomposition of f g h}
 g = \sum_{i=1}^k \sqrt{\lambda_i} u_i u_i^\dag, \quad f = \sum_{i=1}^k \sqrt{\lambda_i} u_i \tilde{v}_i^\dag, \quad h = \sum_{i=1}^k \sqrt{\lambda_i} \tilde{v}_i \tilde{v}_i^\dag.
\end{equation}
Note also that the $\tilde{v}_i $ have unit norm and are pairwise orthogonal, since $v_i^\dag v_j = u_i^\dag f f ^\dag u_j =0$ if $i\neq j$.
We can rewrite
\begin{align*}
         &\tr (A g A^\dag g) + \tr (Bh B^\dag h ) - \tr (A f B^\dag f^\dag) - \tr (B f^\dag A^\dag f)\\
         &\qquad= \sum_{i, j} \sqrt{\lambda_i\lambda_j} u_i^\dag A u_j u_j^\dag A^\dag u_i +   \sum_{i, j} \sqrt{\lambda_i\lambda_j} \tilde{v}_i^\dag B \tilde{v}_j \tilde{v}_j^\dag B^\dag \tilde{v}_i  - \sum_{i,j}  \sqrt{\lambda_i\lambda_j}  u_i^\dag A u_j \tilde{v}_j^\dag B^\dag \tilde{v}_i - \sum_{i,j} \sqrt{\lambda_i\lambda_j}   \tilde{v}_i^\dag B \tilde{v}_j u_j^\dag A^\dag u_i \\
         &\qquad= \sum_{i,j} \sqrt{\lambda_i\lambda_j}  | u_i^\dag A u_j- \tilde{v}_i^\dag B \tilde{v}_j |^2  \geq 0. 
         \end{align*}
\end{proof}

\begin{proof}[Proof of \cref{thm:comparison of dirichlet forms}]

We write the Dirichlet forms of $f,g,h$ as sum of commutators as in \cref{prop:dirichlet form as sum of commutators}. For each Bohr frequency $\tilde{\omega}$, introduce 
\begin{align*}
T(\tilde{\omega}) :=\| [S(\tilde{\omega}), f ]\|_\rho^2 +  \| [S^\dag(-\tilde{\omega}), f ]\|_\rho^2 - \frac{1}{2} (  \| [S(\tilde{\omega}), g ]\|_\rho^2 +  \| [S^\dag(-\tilde{\omega}), g ]\|_\rho^2  + \| [S(\tilde{\omega}), h ]\|_\rho^2 +  \| [S^\dag(-\tilde{\omega}), h ]\|_\rho^2 ).
\end{align*}
Using \cref{prop:dirichlet form as sum of commutators} and the fact that $ \tilde{G}(\tilde{\omega}) = G(\tilde{\omega}) e^{\frac{\beta \tilde{\omega}}{2}} =\tilde{G}(-\tilde{\omega})$, we can rewrite
\[2 \mathcal{E}_\mathcal{L}(f) =\sum_{S\in \mathcal{S},\tilde{\omega}} \tilde{G}(\tilde{\omega}) \| [S(\tilde{\omega}), f ]\|_\rho^2   = \frac{1}{2} ( \sum_{S\in \mathcal{S},\tilde{\omega}} \tilde{G}(\tilde{\omega}) \| [S(\tilde{\omega}), f ]\|_\rho^2 + \sum_{S^\dag\in \mathcal{S^\dag},-\tilde{\omega}} \tilde{G}(\tilde{\omega}) \| [S^\dag(-\tilde{\omega}), f ]\|_\rho^2) \]
hence
\begin{align*}
     2 \mathcal{E}_\mathcal{L}(f) - (\mathcal{E}_\mathcal{L}(g) + \mathcal{E}_\mathcal{L}(h)) =  \frac{1}{2}\sum_{\tilde{\omega}} \tilde{G}(\tilde{\omega})T(\tilde{\omega})
\end{align*}

We will show that $T(\tilde{\omega})\ge0$ and then conclude that
\[2 \mathcal{E}_\mathcal{L}(f) - (\mathcal{E}_\mathcal{L}(g) + \mathcal{E}_\mathcal{L}(h)) \geq 0\]
Let $\tilde{f} = \rho^{1/4} f\rho^{1/4}, \tilde{g} = \rho^{1/4} g\rho^{1/4}, \tilde{h} = \rho^{1/4} h\rho^{1/4}$ 
as in \cref{prop:commutator decomposition}. By \cref{prop:operator in V omega commutation with rho}, since $f\in V_\omega$ and $g,h\in V_0$ and are Hermitian, 
\begin{align*}
    \tilde{g}^2 &=\tilde{g} \tilde{g}^\dag = \rho^{1/4} g \rho^{1/2} g^\dag \rho^{1/4} = \rho^{1/2} g^2 \rho^{1/2} = e^{-\frac{\beta\omega}{2}} \rho^{1/2} f f^\dag  \rho^{1/2} = \rho^{1/4} f \rho^{1/2} f^\dag \rho^{1/2} = \tilde{f}\tilde{f}^\dag,\\
    \tilde{h}^2 &= \tilde{h} \tilde{h}^\dag = \rho^{1/4} h \rho^{1/2} h^\dag \rho^{1/4} = \rho^{1/2} h^2 \rho^{1/2} = e^{\frac{\beta\omega}{2}} \rho^{1/2} f^\dag f  \rho^{1/2} = \rho^{1/4} f^\dag \rho^{1/2} f \rho^{1/2} = \tilde{f}^\dag\tilde{f}. 
\end{align*}
Using \cref{prop:commutator decomposition}, and note that $S^\dag(-\tilde{\omega}) =S(\tilde{\omega})^\dag,$ we can rewrite
\begin{align*}
     \|[S(\tilde{\omega}), f]\|_\rho^2 &= e^{-\frac{\beta \tilde{\omega}}{2}} 
 \trace{S(\tilde{\omega}) \tilde{f} \tilde{f}^\dag S(\tilde{\omega})^\dag } + e^{\frac{\beta \tilde{\omega}}{2}} 
\trace{S(\tilde{\omega})^\dag \tilde{f}^\dag \tilde{f} S(\tilde{\omega})} -  \trace{S(\tilde{\omega}) \tilde{f} S(\tilde{\omega})^\dag \tilde{f}^\dag} -  \trace{S(\tilde{\omega}) \tilde{f}^\dag S(\tilde{\omega})^\dag \tilde{f}}.\\
  \|[S^\dag(-\tilde{\omega}), f]\|_\rho^2
  &= e^{\frac{\beta \tilde{\omega}}{2}} 
 \trace{S(\tilde{\omega})^\dag \tilde{f} \tilde{f}^\dag S(\tilde{\omega}) } + e^{-\frac{\beta \tilde{\omega}}{2}} 
\trace{S(\tilde{\omega}) \tilde{f}^\dag \tilde{f} S(\tilde{\omega})^\dag} -  \trace{S(\tilde{\omega})^\dag \tilde{f} S(\tilde{\omega}) \tilde{f}^\dag} -  \trace{S(\tilde{\omega})^\dag \tilde{f}^\dag S(\tilde{\omega}) \tilde{f}}.\\
\|[S(\tilde{\omega}), g]\|_\rho^2 &= e^{-\frac{\beta \tilde{\omega}}{2}} 
 \trace{S(\tilde{\omega}) \tilde{g} \tilde{g}^\dag S(\tilde{\omega})^\dag } + e^{\frac{\beta \tilde{\omega}}{2}} 
\trace{S(\tilde{\omega})^\dag \tilde{g}^\dag \tilde{g} S(\tilde{\omega})} -  \trace{S(\tilde{\omega}) \tilde{g} S(\tilde{\omega})^\dag \tilde{g}^\dag} -  \trace{S(\tilde{\omega}) \tilde{g}^\dag S(\tilde{\omega})^\dag \tilde{g}}\\
&= e^{-\frac{\beta \tilde{\omega}}{2}} 
 \trace{S(\tilde{\omega}) \tilde{f} \tilde{f}^\dag S(\tilde{\omega})^\dag } + e^{\frac{\beta \tilde{\omega}}{2}} 
\trace{S(\tilde{\omega})^\dag \tilde{f} \tilde{f}^\dag S(\tilde{\omega})} -  \trace{S(\tilde{\omega}) \tilde{g} S(\tilde{\omega})^\dag \tilde{g}} -  \trace{S(\tilde{\omega}) \tilde{g} S(\tilde{\omega})^\dag \tilde{g}}\\
&= e^{-\frac{\beta \tilde{\omega}}{2}} 
 \trace{S(\tilde{\omega}) \tilde{f} \tilde{f}^\dag S(\tilde{\omega})^\dag } + e^{\frac{\beta \tilde{\omega}}{2}} 
\trace{S(\tilde{\omega})^\dag \tilde{f} \tilde{f}^\dag S(\tilde{\omega})} -  2 \trace{S(\tilde{\omega}) \tilde{g} S(\tilde{\omega})^\dag \tilde{g}} 
\intertext{where we use $ \tilde{g} =\tilde{g}^\dag$ and $ \tilde{g}^2 = \tilde{f} \tilde{f}^\dag.$  Similarly,}
\|[S^\dag(-\tilde{\omega}), g]\|_\rho^2 &= e^{\frac{\beta \tilde{\omega}}{2}} 
 \trace{S(\tilde{\omega})^\dag \tilde{g} \tilde{g}^\dag S(\tilde{\omega}) } + e^{-\frac{\beta \tilde{\omega}}{2}} 
\trace{S(\tilde{\omega}) \tilde{g}^\dag \tilde{g} S(\tilde{\omega})^\dag} -  \trace{S(\tilde{\omega})^\dag \tilde{g} S(\tilde{\omega}) \tilde{g}^\dag} -  \trace{S(\tilde{\omega})^\dag \tilde{g}^\dag S(\tilde{\omega}) \tilde{g}}\\
&=e^{\frac{\beta \tilde{\omega}}{2}} 
 \trace{S(\tilde{\omega})^\dag \tilde{f} \tilde{f}^\dag S(\tilde{\omega}) } + e^{-\frac{\beta \tilde{\omega}}{2}} 
\trace{S(\tilde{\omega}) \tilde{f} \tilde{f}^\dag S(\tilde{\omega})^\dag} -  2 \trace{ S(\tilde{\omega}) \tilde{g} S(\tilde{\omega})^\dag \tilde{g}} \\
\intertext{since by \cref{prop:eigenvalue multiplicities of AB and BA},  $ \trace{ (S(\tilde{\omega}) \tilde{g} ) (S(\tilde{\omega})^\dag \tilde{g} ) } =  \trace{( S(\tilde{\omega})^\dag \tilde{g} ) (S(\tilde{\omega})\tilde{g}) }.$ Analogously, using $ \tilde{h} =\tilde{h}^\dag$ and $ \tilde{h}^2 = \tilde{f}^\dag \tilde{f},$ we obtain:}
\|[S(\tilde{\omega}), h]\|_\rho^2 &= e^{-\frac{\beta \tilde{\omega}}{2}} 
 \trace{S(\tilde{\omega}) \tilde{f}^\dag \tilde{f} S(\tilde{\omega})^\dag } + e^{\frac{\beta \tilde{\omega}}{2}} 
\trace{S(\tilde{\omega})^\dag  \tilde{f}^\dag \tilde{f} S(\tilde{\omega})} -  2\trace{S(\tilde{\omega}) \tilde{h} S(\tilde{\omega})^\dag \tilde{h}}\\ 
\|[S^\dag(-\tilde{\omega}), h]\|_\rho^2 &= e^{\frac{\beta \tilde{\omega}}{2}} 
 \trace{S(\tilde{\omega})^\dag \tilde{f}^\dag \tilde{f} S(\tilde{\omega}) } + e^{-\frac{\beta \tilde{\omega}}{2}} 
\trace{S(\tilde{\omega}) \tilde{f}^\dag \tilde{f} S(\tilde{\omega})^\dag} -  2\trace{S(\tilde{\omega}) \tilde{h} S(\tilde{\omega})^\dag \tilde{h}}
\end{align*}

Let
\begin{align*}
 M_1 :&=  e^{-\frac{\beta \tilde{\omega}}{2}} 
 \trace{S(\tilde{\omega}) \tilde{f} \tilde{f}^\dag S(\tilde{\omega})^\dag } + e^{\frac{\beta \tilde{\omega}}{2}} 
 \trace{S(\tilde{\omega})^\dag \tilde{f} \tilde{f}^\dag S(\tilde{\omega}) } \\
 M_2 :&=  e^{\frac{\beta \tilde{\omega}}{2}} 
\trace{S(\tilde{\omega})^\dag \tilde{f}^\dag \tilde{f} S(\tilde{\omega})}  + e^{-\frac{\beta \tilde{\omega}}{2}} 
\trace{S(\tilde{\omega}) \tilde{f}^\dag \tilde{f} S(\tilde{\omega})^\dag}
\end{align*}
Observe that \cref{prop:eigenvalue multiplicities of AB and BA},  $\trace{S(\tilde{\omega}) \tilde{f} S(\tilde{\omega})^\dag \tilde{f}^\dag} = \trace{(S(\tilde{\omega}) \tilde{f}) (S(\tilde{\omega})^\dag \tilde{f}^\dag)} =  \trace{(S(\tilde{\omega})^\dag \tilde{f}^\dag)(S(\tilde{\omega}) \tilde{f}) } = \trace{S(\tilde{\omega})^\dag \tilde{f}^\dag S(\tilde{\omega}) \tilde{f}} ,$ and similarly,  $\trace{S(\tilde{\omega}) \tilde{f}^\dag S(\tilde{\omega})^\dag \tilde{f}} = \trace{S(\tilde{\omega})^\dag \tilde{f} S(\tilde{\omega}) \tilde{f}^\dag},$ so we can rewrite
\begin{align*}
     \|[S(\tilde{\omega}), f]\|_\rho^2 + \|[S^\dag(-\tilde{\omega}), f]\|_\rho^2 &= M_1 + M_2 - 2 \trace{S(\tilde{\omega}) \tilde{f} S(\tilde{\omega})^\dag \tilde{f}^\dag} - 2 \trace{S(\tilde{\omega}) \tilde{f}^\dag S(\tilde{\omega})^\dag \tilde{f}} \\
      \|[S(\tilde{\omega}), g]\|_\rho^2 + \|[S^\dag(-\tilde{\omega}), g]\|_\rho^2 &= 2 M_1- 4 \trace{S(\tilde{\omega}) \tilde{g} S(\tilde{\omega})^\dag \tilde{g}}\\
       \|[S(\tilde{\omega}), h]\|_\rho^2 + \|[S^\dag(-\tilde{\omega}), h]\|_\rho^2 &= 2 M_2 - 4 \trace{S(\tilde{\omega}) \tilde{h} S(\tilde{\omega})^\dag \tilde{h}}
\end{align*}
thus
\begin{align*}
    T(\tilde{\omega}) =     
    2 (\trace{S(\tilde{\omega}) \tilde{g} S(\tilde{\omega})^\dag \tilde{g}} + \trace{S(\tilde{\omega}) \tilde{h} S(\tilde{\omega})^\dag \tilde{h}} -\trace{S(\tilde{\omega}) \tilde{f} S(\tilde{\omega})^\dag \tilde{f}^\dag} -\trace{S(\tilde{\omega}) \tilde{f}^\dag S(\tilde{\omega})^\dag \tilde{f}}) \geq 0
\end{align*}
where the inequality is due to \cref{prop:crucial trace inequality}, and the fact that $ \tilde{g} =c$ and $\tilde{h} =  (\tilde{f}^\dag \tilde{f})^{1/2} $.
\end{proof}
\begin{theorem}[Comparing variances]\label{thm:comparison of variances}
    Take $\omega\neq0$ and let $D\equiv D_{H,\omega}$ be the length of the longest difference-$\omega$ arithmetic sequence in the spectrum of the Hamiltonian $H$. For any $f \in V_\omega$, and $g,h$ as defined in \cref{eq:hermitianization of f}. 
    Then,
\[\Var(g) + \Var(h) \geq \frac{1}{D}\Var(f).\]
\end{theorem}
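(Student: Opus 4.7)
The plan is to build on the SVD-style decomposition already worked out in the proof of \cref{prop:crucial trace inequality}: writing \(f=\sum_{i=1}^{k}\sqrt{\lambda_i}\,u_i v_i^\dag\) with \(\lambda_i>0\) and orthonormal systems \(\{u_i\},\{v_i\}\), we have \(g=e^{\beta\omega/4}\sum_i\sqrt{\lambda_i}\,u_iu_i^\dag\) and \(h=e^{-\beta\omega/4}\sum_i\sqrt{\lambda_i}\,v_iv_i^\dag\). Because \(ff^\dag,\,f^\dag f\in V_0\) (\cref{prop:product rule}) commute with \(H\), I will refine the SVD so that each \(u_i\) is simultaneously an eigenvector of \(H\); then \(v_i\propto f^\dag u_i\) automatically satisfies \(Hv_i=(H(u_i)-\omega)v_i\), since \(f^\dag\in V_{-\omega}\) shifts \(H\)-eigenspaces by \(-\omega\). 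Writing \(\rho(u_i):=\langle u_i|\rho|u_i\rangle\), this gives \(\rho(v_i)=e^{\beta\omega}\rho(u_i)\).

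With this decomposition and \(\tr(\rho f)=0\) (from \cref{prop:orthogonality of different subspace}, using \(\omega\neq 0\)), a direct computation yields \(\Var(f)=e^{\beta\omega/2}\sum_i \lambda_i\rho(u_i)\) and \(\Var(g)=e^{\beta\omega/2}[\sum_i\lambda_i\rho(u_i)-(\sum_i\sqrt{\lambda_i}\rho(u_i))^2]\), with an analogous expression for \(\Var(h)\). One application of Cauchy--Schwarz, \((\sum_i\sqrt{\lambda_i}\rho(u_i))^2 \le (\sum_i\lambda_i\rho(u_i))(\sum_i\rho(u_i))\), then gives
\[\Var(g)\ \ge\ \Big(1-\textstyle\sum_i\rho(u_i)\Big)\Var(f),\qquad \Var(h)\ \ge\ \Big(1-\textstyle\sum_i\rho(v_i)\Big)\Var(f).\]
Since replacing \(f\) by \(f^\dag\in V_{-\omega}\) swaps the roles of \(g,h\) and negates \(\omega\) (and \(D\) is symmetric under \(\omega\leftrightarrow-\omega\)), I may assume \(\beta\omega\ge 0\). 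It then suffices to prove the absolute bound \(1-\sum_i\rho(u_i)\ge 1/D\), which gives \(\Var(g)+\Var(h)\ge\Var(g)\ge(1/D)\Var(f)\).

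For the combinatorics, let \(m(\gamma):=\dim\ker(H-\gamma I)\), \(r(\gamma):=|\{i:H(u_i)=\gamma\}|\), and \(\hat\gamma:=e^{-\beta\gamma}/\tr(e^{-\beta H})\), so that \(\sum_\gamma m(\gamma)\hat\gamma=1\) and \(1-\sum_i\rho(u_i)=\sum_\gamma(m(\gamma)-r(\gamma))\hat\gamma\). Orthonormality of \(\{u_i\}\) and \(\{v_i\}\) combined with \(H(v_i)=H(u_i)-\omega\) yields the two bounds \(r(\gamma)\le m(\gamma)\) and \(r(\gamma)\le m(\gamma-\omega)\). I then partition the distinct values in \(\{H(u_i)\}\) into maximal arithmetic progressions with common difference \(-\omega\). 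For such a maximal piece \(\gamma_1,\gamma_2=\gamma_1-\omega,\ldots,\gamma_d\), the extension \(\gamma_{d+1}:=\gamma_d-\omega\) must lie in \(\spec(H)\) (witnessed by any \(v_i\) with \(H(u_i)=\gamma_d\)) but outside \(\{H(u_i)\}\) by maximality, so \(r(\gamma_{d+1})=0\); and since \(\gamma_1,\ldots,\gamma_{d+1}\) is a proper \((d+1)\)-term AP in \(\spec(H)\), the hypothesis forces \(d+1\le D\).

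The key per-piece inequality I will prove is
\[\sum_{j=1}^{d+1}m(\gamma_j)\hat\gamma_j-\sum_{j=1}^{d}r(\gamma_j)\hat\gamma_j\ \ge\ \max_{j^*\in[d+1]}m(\gamma_{j^*})\hat\gamma_{j^*}\ \ge\ \frac{1}{d+1}\sum_{j=1}^{d+1}m(\gamma_j)\hat\gamma_j\ \ge\ \frac{1}{D}\sum_{j=1}^{d+1}m(\gamma_j)\hat\gamma_j,\]
via a re-pairing: for a chosen \(j^*\), pair \(r(\gamma_j)\) with \(m(\gamma_j)\) for \(j<j^*\) and with \(m(\gamma_{j+1})\) for \(j\ge j^*\), leaving \(m(\gamma_{j^*})\hat\gamma_{j^*}\) unpaired. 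Each paired difference is nonnegative thanks to \(r\le m\), \(r(\gamma_j)\le m(\gamma_{j+1})\), and -- crucially -- \(\hat\gamma_{j+1}=e^{\beta\omega}\hat\gamma_j\ge\hat\gamma_j\) when \(\beta\omega\ge 0\). Summing these piecewise bounds over all maximal pieces, and using \((m(\gamma)-r(\gamma))\hat\gamma\ge(1/D)\,m(\gamma)\hat\gamma\) trivially on eigenvalues \(\gamma\in\spec(H)\) not touched by any piece (where \(r(\gamma)=0\)), I conclude \(1-\sum_i\rho(u_i)\ge(1/D)\sum_\gamma m(\gamma)\hat\gamma=1/D\). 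The main obstacle I expect is executing this re-pairing carefully: naively matching the \((d+1)\) \(m\)-terms against the \(d\) \(r\)-terms leaves an unpaired term pinned to an endpoint and so only produces \(m(\gamma_1)\hat\gamma_1\) or \(m(\gamma_{d+1})\hat\gamma_{d+1}\) on the right-hand side; the freedom to slide \(j^*\), which upgrades the bound to \(\max_j m(\gamma_j)\hat\gamma_j\) and hence to the pigeonhole average, stems precisely from the monotonicity \(\hat\gamma_{j+1}\ge\hat\gamma_j\) used in the WLOG reduction.
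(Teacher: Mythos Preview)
Your proposal is correct and follows essentially the same approach as the paper's proof: the same SVD-style decomposition refined to be simultaneously diagonal with $H$, the same Cauchy--Schwarz/H\"older step yielding $\Var(g)\ge(1-\sum_i\rho(u_i))\Var(f)$, the same WLOG reduction to $\beta\omega\ge 0$, and the same combinatorial re-pairing argument on maximal difference-$\omega$ progressions with the sliding index $j^*$. The only cosmetic difference is that the paper handles the case $\beta\omega\le 0$ by bounding $\Var(h)$ directly rather than by replacing $f$ with $f^\dag$.
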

\begin{proof}

    Note that $g\in V_0$ and thus $g$ commutes with $H$ (see \cref{prop:observable in V0 commute with $H$}) and is Hermitian, so using the same argument as in \cref{eq:svd decomposition of f g h} in the proof of \cref{prop:crucial trace inequality},
    there exists an orthonormal eigenbasis $\set{u_i}$ of $H$ with corresponding eigenvalues $H(u_i) = \bra{u_i} H\ket{u_i}$ such that 
    \begin{align} \label{eq:expansion of f g h}
      g = e^{\frac{\beta\omega }{4}} \sum_{i=1}^k \lambda_i \ket{u_i} \bra{u_i},  && f = \sum_{i=1}^k \lambda_i \ket{u_i} \bra{v_i}, && h = e^{-\frac{\beta\omega }{4}} \sum_{i=1}^k \lambda_i \ket{v_i}\bra{v_i}  
    \end{align}
where $\lambda_i \in \R_{>0}$ for each $ i\in [k]$ and $\ket{v_i} = \frac{f^\dag \ket{u_i}}{\lambda_i}.$ Observe that the vectors $v_i$ are pairwise orthogonal, since $v_i^\dag v_j = (\lambda_i \lambda_j)^{-1}\cdot  u_i^\dag f f ^\dag u_j =  (\lambda_i \lambda_j)^{-1} u_i^\dag g^2 u_j^\dag = 0$ if $i\neq j.$ In addition, for each $i\in [k]$, the vector $v_i$ has unit norm. Note also that $v_i$ is an eigenvector of $H$
 with eigenvalues $H(v_i) =\bra{v_i}H \ket{v_i} =H(u_i) - \omega,$ since $f = f(\omega) = \sum_{E_1,E_2 :E_1 -E_2 =\omega} \Pi_{E_1} f \Pi_{E_2} $ and $\Pi_{E_1} \ket{u_i}=0$ unless $E_1= H(u_i),$ we have
 \[\ket{v_i} =  \lambda_i^{-1} f^\dag \ket{u_i}  =  \lambda_i^{-1} \Pi_{H(u_i)-\omega} f^\dag \Pi_{H(u_i)} \ket{u_i} =  \Pi_{H(u_i)-\omega} \ket{v_i}  \]
 so
 \[ H \ket{v_i} =  (H(u_i)-\omega) \ket{v_i}.\]

Let $\rho(u_i) = \bra{u_i}\rho\ket{u_i}$ and $\rho(v_i) = \bra{v_i}\rho \ket{v_i}$ be the eigenvalue of $\rho$ corresponding to eigenvectors $u_i$ and $v_i$. Then, $ \rho(v_i) = \rho(u_i) e^{\beta \omega}.$
     We write
\begin{equation}\label{eq:variance of f}
    \Var(f) = \sum_{i=1}^k \sqrt{\rho(u_i)\rho(v_i)} \lambda_i^2 =  e^{\frac{\beta\omega }{2}}  \sum_{i=1}^k \rho(u_i) \lambda_i^2  = e^{-\frac{\beta\omega }{2}}  \sum_{i=1}^k \rho(v_i) \lambda_i^2 
\end{equation}
    \begin{equation} \label{eq:holder ineq to simplify for g}
        \Var(g) =e^{\frac{\beta\omega }{2}} 
\left( \sum_{i=1}^k \rho(u_i) \lambda_i^2 -\left(\sum_{i=1}^k \rho(u_i) \lambda_i \right)^2 \right) \geq \left(1- \sum_{i=1}^k \rho(u_i)\right) \left( e^{\frac{\beta\omega }{2}}  \sum_{i=1}^k \rho(u_i) \lambda_i^2\right) =\left(1- \sum_{i=1}^k \rho(u_i)\right) \Var(f)
    \end{equation}
    where the inequality is due to H\"older's inequality.
    Similarly, we have
    \begin{equation} \label{eq:holder ineq to simplify for h}
        \Var(h) =e^{-\frac{\beta\omega }{2}} 
\left( \sum_{i=1}^k \rho(v_i) \lambda_i^2 -\left(\sum_{i=1}^k \rho(v_i) \lambda_i \right)^2 \right) \geq \left(1- \sum_{i=1}^k \rho(v_i)\right) \Var(f). 
\end{equation}
    
    We first consider the case $\beta \omega \geq 0.$ 
    We show that 
    \begin{equation}\label{ineq:eigenvalue lower bound for variance comparison}
          1 - \sum_{i=1}^k \rho(u_i) \geq \frac{1}{D}
    \end{equation}
    which together with \cref{eq:holder ineq to simplify for g} gives the desired conclusion that $\Var(g) \geq \frac{1}{D}\Var(f)$.


   Let the distinct values appearing in the sequence $ (H(u_i))_{i=1}^k$ be $\alpha_1, \dots, \alpha_q.$ 
    Partition the set $Q := \set{\alpha_1, \dots, \alpha_q}$ into a family $\mathcal{I}$ of (disjoint) maximally long arithmetic sequences where in each sequence, the difference between two consecutive terms is $\omega.$ Such a decomposition can be obtained by iteratively removing the maximally long arithmetic progression from the set $Q.$
    
    We rewrite
    \[Q = \bigcup_{i\in \mathcal{I}} \set{\alpha_{i,1}, \dots, \alpha_{i,d_i}}   \]
    where for each $i\in \mathcal{I},$ $ \alpha_{i,1}, \dots, \alpha_{i,d_i}$ satisfies $ \alpha_{i,j} = \alpha_{i,1}-\omega(j-1)$ for all $j\in [d_i].$
Note that $ \alpha_{i,d_i+1}:= \alpha_{i,d_i}-\omega$ is also an eigenvalue of $\rho,$ since if $ u_{\hat{j}}$ is an eigenvector of $H$ with eigenvalue $\alpha_{i,d_i}$ that appears in \cref{eq:expansion of f g h}, then $f^\dag u_{\hat{j}}$ is an eigenvector of $H$ with eigenvalue $\alpha_{i,d_i}-\omega. $ Thus the spectrum of $H$ contains the length-$(d_i+1)$ arithmetic progression $ \alpha_{i,1}, \dots, \alpha_{i, {d_i+1}}$ with difference between consecutive terms being $\omega.$ Note also that $ \alpha_{i,d_i+1}\not\in Q,$ otherwise this contradicts the maximality of $ \alpha_{i,1}, \dots, \alpha_{i,d_i}.$

For an eigenvalue $\alpha$ of $H, $ let $\hat{\alpha} := \frac{\exp(-\beta \alpha)}{\trace{\exp(-\beta H)}}\geq 0$ be the corresponding eigenvalue of $ \rho,$ $m(\alpha)$ be its multiplicity in the spectrum of $H$, and $r(\alpha)$ be its multiplicity in the sequence $ (H(u_i))_{i=1}^k.$ Obviously, $r(\alpha)\leq m(\alpha).$

We show, for each $i\in \mathcal{I},$
\begin{equation} \label{ineq:eigenvalue sum bound by arithmetic sequence}
     m( \alpha_{i, {d_i+1}})\hat{\alpha}_{i, {d_i+1}}  + \sum_{j=1}^{d_i} (m(\alpha_{i,j}) -r(\alpha_{i,j}))\hat{\alpha}_{i,j} \geq \frac{1}{d_i+1}\sum_{j=1}^{d_i+1} m(\alpha_{i,j}) \hat{\alpha}_{i,j} \geq \frac{1}{D}\sum_{j=1}^{d_i+1} m(\alpha_{i,j}) \hat{\alpha}_{i,j} 
\end{equation}
where the last inequality is because $ d_i+1\leq D,$ due to the assumption on the spectrum of $H$ and the fact that $H$ contains the length-$(d_i+1)$ arithmetic progression $ \alpha_{i,1}, \dots, \alpha_{i, {d_i+1}}$ with difference between consecutive terms being $\omega.$ 

Before showing \cref{ineq:eigenvalue sum bound by arithmetic sequence}, we check that it implies the desired inequality \cref{ineq:eigenvalue lower bound for variance comparison}.
       \begin{align*} 
     1- \sum_{j=1}^k \rho(u_j) &=_{(1)} \sum_\alpha m(\alpha) \hat{\alpha} - \sum_{i\in \mathcal{I}, j \in [d_i]} r(\alpha_{i,j})\hat{\alpha}_{i,j}  \\
     &=\sum_{i\in \mathcal{I} } \sum_{j=1}^{d_i+1} m(\alpha_{i,j} )\hat{\alpha}_{i,j} + \sum_{\alpha \not\in \set{\alpha_{i,j}:i\in \mathcal{I}, j \in [d_{i}+1] }} m(\alpha) \hat{\alpha}  -\sum_{i \in \mathcal{I} } \sum_{j=1}^{d_i} r(\alpha_{i,j}) \hat{\alpha}_{i,j} \\
     &= \sum_{i \in \mathcal{I} } \left( m( \alpha_{i, {d_i+1}})\hat{\alpha}_{i, {d_i+1}}  + \sum_{j=1}^{d_i} (m(\alpha_{i,j}) -r(\alpha_{i,j}))\hat{\alpha}_{i,j} \right)  + \sum_{\alpha \not\in \set{\alpha_{i,j}:i\in \mathcal{I}, j \in [d_{i}+1] }} m(\alpha) \hat{\alpha} \\
     &\geq_{(2)} \frac{1}{D} \sum_{\alpha}m(\alpha) \hat{\alpha} \\
     &= \frac{1}{D}
    \end{align*} 
    where (1) and the final equality is by $1=\trace{\rho} = \sum_{\alpha}m(\alpha) \hat{\alpha} ,$ and (2) is by \cref{ineq:eigenvalue sum bound by arithmetic sequence}.

Now, we show \cref{ineq:eigenvalue sum bound by arithmetic sequence}.    
    Recall that for all $j\in [d_i]$, $r(\alpha_{i,j}) \leq m(\alpha_{i,j}). $  Furthermore, for all $j\in [d_i]$, $r(\alpha_{i,j}) \leq m(\alpha_{i,j+1}).$ Indeed, fix one such $j.$ Let $u_{h_1}, \dots, u_{h_r}$ be the eigenvectors of $H$ from \cref{eq:expansion of f g h} with eigenvalue $ \alpha_{i,j}$. Then $ f^\dag u_{h_1}, \dots, f^\dag u_{h_r} $  are pairwise orthogonal (and thus linearly independent) eigenvectors of $H$ with eigenvalue $ \alpha_{i,j} - \omega = \alpha_{i,{j+1}},$ hence $ r(\alpha_{i,j}) \leq m(\alpha_{i,j+1}).$

    For $j^*= \argmax_{j\in [d_i+1]} m(\alpha_{i,j})  \hat{\alpha}_{i,j},$ we have
    \begin{align*}
    &m(\alpha_{i,d_i+1})\hat{\alpha}_{i, d_i+1}+ \sum_{j=1}^{d_i} (m(\alpha_{i,j}) - r(\alpha_{i,j})) \hat{\alpha}_{i,j} \\
        &\qquad = \sum_{j=1}^{j^*-1} (m(\alpha_{i,j}) - r(\alpha_{i,j})) \hat{\alpha}_{i,j}+ m(\alpha_{i,j^*}) \hat{\alpha}_{i,j^*} + \sum_{j=j^*}^{d_i} (m(\alpha_{i,j+1}) \hat{\alpha}_{i,j+1} -r(\alpha_{i,j}) \hat{\alpha}_{i,j}) \\
        &\qquad \geq m(\alpha_{i,j^*}) \hat{\alpha}_{i,j^*}
    \end{align*}
    where the inequality is due to $ (m(\alpha_{i,j}) -r(\alpha_{i,j}))\hat{\alpha}_{i,j}\geq 0$, and 
    \[m(\alpha_{i,j+1}) \hat{\alpha}_{i,j+1} -r(\alpha_{i,j}) \hat{\alpha}_{i,j} = \hat{\alpha}_{i,j}(e^{\beta \omega} m(\alpha_{i,j+1}) - r(\alpha_{i,j}) ) \geq \hat{\alpha}_{i,j}(m(\alpha_{i,j+1}) - r(\alpha_{i,j}) )\geq 0, \]
    where we use the assumption $\beta \omega \geq 0$ and the fact that $ m(\alpha_{i,j+1}) \geq r(\alpha_{i,j}).$ Thus we have obtained
    \[ m(\alpha_{i,d_i+1}) \hat{\alpha}_{i,d_i+1}+ \sum_{j=1}^{d_i} (m(\alpha_{i,j}) - r(\alpha_{i,j})) \hat{\alpha}_{i,j} \geq m(\alpha_{i,j^*}) \hat{\alpha}_{i,j^*} \geq \frac{1}{d_i+1} \sum_{j=1}^{d_i+1} m(\alpha_{i,j})\hat{\alpha}_{i,j}. \]
and conclude that $ \Var(g) \geq\frac{1}{D}\Var(f),$ which implies  $\Var(g) + \Var(h) \geq \frac{1}{D} \Var(f).$

    The case $ \beta \omega \leq 0 $ follows similarly, where we prove $\Var(h) \geq \frac{1}{D} \Var(f).$ Thus, in either case, we have 
    \[\Var(g) + \Var(h) \geq \frac{1}{D} \Var(f).\qedhere\]
 \end{proof}
We have the following alternative comparison between the variances.
\begin{lemma}\label{lem:alternative comparison of variances}
    For any $f \in V_\omega$, and $g,h$ as defined in \cref{eq:hermitianization of f}. Then,
\[\Var(g) + \Var(h) \geq (1-e^{-|\beta \omega|})\Var(f).\]
\end{lemma}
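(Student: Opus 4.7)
\textbf{Proof plan for \cref{lem:alternative comparison of variances}.} The plan is to recycle the spectral decomposition and Hölder-based bound that were already established inside the proof of \cref{thm:comparison of variances}, and then route them through the trivial bound $\tr(\rho)=1$ rather than the arithmetic-progression counting argument.

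First, I would reuse the joint spectral decomposition from \cref{eq:expansion of f g h}, which gives orthonormal sets $\{u_i\}_{i=1}^k$ and $\{v_i\}_{i=1}^k$ of eigenvectors of $H$ (and hence of $\rho$) together with positive scalars $\lambda_i$ such that
\[
 g = e^{\beta\omega/4}\sum_{i=1}^k \lambda_i \ket{u_i}\bra{u_i},\qquad f = \sum_{i=1}^k \lambda_i \ket{u_i}\bra{v_i},\qquad h = e^{-\beta\omega/4}\sum_{i=1}^k \lambda_i \ket{v_i}\bra{v_i},
\]
with $\rho(v_i)=\rho(u_i)\,e^{\beta\omega}$. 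From there, the same Hölder inequality applications that produced \cref{eq:holder ineq to simplify for g,eq:holder ineq to simplify for h} still apply verbatim and yield
\[
\Var(g)\geq \Bigl(1-\sum_{i=1}^k \rho(u_i)\Bigr)\Var(f),\qquad \Var(h)\geq\Bigl(1-\sum_{i=1}^k \rho(v_i)\Bigr)\Var(f).
\]

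Next, instead of bounding $1-\sum_i \rho(u_i)$ using arithmetic-progression structure, I would use only that the $u_i$ (resp.\ $v_i$) are orthonormal and that $\tr(\rho)=1$. This immediately gives $\sum_i \rho(u_i)\leq 1$ and $\sum_i \rho(v_i)\leq 1$. Combining with the relation $\rho(v_i)=e^{\beta\omega}\rho(u_i)$ yields $\sum_i\rho(u_i)\leq e^{-\beta\omega}$ and $\sum_i\rho(v_i)\leq e^{\beta\omega}$, so in every case
\[
\min\Bigl(\sum_i\rho(u_i),\sum_i\rho(v_i)\Bigr)\leq e^{-|\beta\omega|}.
\]
Thus if $\beta\omega\geq 0$ we get $\Var(g)\geq(1-e^{-|\beta\omega|})\Var(f)$, while if $\beta\omega\leq 0$ we get $\Var(h)\geq(1-e^{-|\beta\omega|})\Var(f)$. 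Since $\Var(g),\Var(h)\geq 0$, adding them yields the claimed bound $\Var(g)+\Var(h)\geq(1-e^{-|\beta\omega|})\Var(f)$.

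There is no real obstacle here: the technical content, namely the spectral decomposition producing orthonormal eigenvectors of $\rho$ and the Hölder step, has already been done in \cref{thm:comparison of variances}; the only novelty is replacing the $1/D$ bound by the elementary $1-e^{-|\beta\omega|}$ bound obtained directly from normalization of $\rho$ and the ratio $e^{\beta\omega}$ between $\rho(v_i)$ and $\rho(u_i)$.
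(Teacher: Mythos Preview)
Your proposal is correct and matches the paper's own proof essentially line for line: both recycle the Hölder bounds \cref{eq:holder ineq to simplify for g,eq:holder ineq to simplify for h} from the proof of \cref{thm:comparison of variances}, use the relation $\rho(v_i)=e^{\beta\omega}\rho(u_i)$ together with $\sum_i\rho(u_i),\sum_i\rho(v_i)\le\tr(\rho)=1$, and then split on the sign of $\beta\omega$ to conclude.
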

\begin{proof}
   From the proof of \cref{thm:comparison of variances}, we have
    \[ \Var(g) \geq \left(1- \sum_{i=1}^k \rho(u_i)\right) \Var(f)\quad \text{and} \quad \Var(h) \geq \left(1- \sum_{i=1}^k \rho(v_i)\right) \Var(f)\]
    where for $i\in [k],$ $u_i, v_i$ are eigenvectors of the stationary state $\rho$ with corresponding eigenvalues $ \rho(u_i), \rho(v_i)$ respectively, and $ \rho(v_i) = \rho(u_i)e^{\beta\omega}. $ Moreover, $ \set{u_i}_{i=1}^k$ are linearly independent and $\set{v_i}_{i=1}^k$ are linearly independent, so $ \max \set{\sum_{i=1}^k\rho(u_i),\sum_{i=1}^k\rho(v_i)} \leq \trace{\rho} = 1$.
    If $\beta \omega \geq 0$ then 
    \[e^{\beta\omega } \sum_{i=1}^k\rho(u_i) 
 = \sum_{i=1}^k\rho(v_i) \leq 1 \]
 so \[ \Var(g) = \left(1-\sum_{i=1}^k\rho(u_i) \right) \Var(f) \geq \left(1- e^{-|\beta \omega|}\right) \Var(f).\]
 Similarly, if $\beta \omega \leq 0$ then
 \[e^{-\beta\omega } \sum_{i=1}^k\rho(v_i)=\sum_{i=1}^k\rho(u_i)\leq 1\Rightarrow \Var(h) = \left(1-\sum_{i=1}^k\rho(v_i) \right) \Var(f) \geq \left(1- e^{-|\beta \omega|}\right) \Var(f). \qedhere\]
\end{proof}
\begin{proof}[Proof of \cref{thm:compare spectral gap V omega and V 0}]
    Fix an arbitrary $\omega\neq 0$ and an arbitrary $f \in V_\omega.$ Let $g,h $ be as defined in \cref{thm:comparison of dirichlet forms}; then, $g, h \in V_0.$ 
     We have
    \[ 2 \mathcal{E}(f) \geq_{(1)} \mathcal{E}(g)  + \mathcal{E}(h) \geq_{(2)} \lambda_{\mathcal{L},0} (\Var(g) + \Var(h)) \geq_{(3)} \lambda_{\mathcal{L},0}\max(\frac{1}{D}, 1 -e^{-|\beta \omega|}) \Var(f) \]
    where (1) is by \cref{thm:comparison of dirichlet forms}, (2) by definition of $\lambda_{\mathcal{L},0},$ and (3) by \cref{thm:comparison of variances,lem:alternative comparison of variances}. Thus
\[ \lambda_{\mathcal{L},\omega} =\min_{f \in V_\omega} \frac{\mathcal{E}(f)}{\Var(f)} \geq \frac{1}{2}\max(\frac{1}{D}, 1 -e^{-|\beta \omega|}) \lambda_{\mathcal{L},0}. \qedhere\]
\end{proof}
\begin{proof}[Proof of \cref{thm:general comparison theorem dependent only on length of arithmetic sequence}]
    By \cref{prop:relate spectral gap of Lindbladian to spectral gap in subspace Vomega},
    \[ \lambda_{\mathcal{L}, 0}\geq \lambda_{\mathcal{L}} = \min_\omega \lambda_{\mathcal{L}, \omega} \geq \frac{1}{2D} \lambda_{\mathcal{L},0}\]
    where the last inequality follows from: for all $\omega \neq 0$, $\lambda_{\mathcal{L},\omega} \geq \frac{1}{2D} \lambda_{\mathcal{L},0}$, by \cref{thm:compare spectral gap V omega and V 0}.
\end{proof}
\section{Relating quantum to classical spectral gap}\label{sec:quantum to classical spectral gap}
In this section, we prove \cref{prop:classical chain V0,thm:quantum vs classical spectral gap comparison}, which relate the quantum spectral gap to the spectral gap of certain classical Markov generators.

First, we prove \cref{prop:classical chain V0}. By \cref{prop:compare spectral gap with hermitian spectral gap in V0}, $\lambda_{\mathcal{L},0} =\lambda_{\mathcal{L},0}^{\mathrm{H}} ,$ so we only need to show the following.
\begin{proposition}
  For any Hamiltonian $H$ and Davies generator $\mathcal{L},$
  \begin{equation}\label{eq:V0 vs classical} \lambda_{\mathcal{L},0}^{\mathrm{H}} = \min_{U} \lambda_{\mathcal{L},U,\mathrm{cl}}\end{equation}
  where the minimum is taken over the orthonormal eigenbasis $U$ of the Hamiltonian $H,$ and $\lambda_{\mathcal{L},U,\mathrm{cl}}$ is the classical spectral gap associated with $\mathcal{L}$ and the orthonormal eigenbasis $U,$ as defined in \cref{def:classical Markov generator}.

\end{proposition}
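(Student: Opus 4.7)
The plan is to prove the equality by constructing an explicit correspondence between Hermitian observables in $V_0$ and real-valued functions on some orthonormal eigenbasis of $H$, under which both the variance and the Dirichlet form are preserved exactly.

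First, I would note that any Hermitian $f\in V_0$ commutes with $H$ by \cref{prop:observable in V0 commute with $H$}. Since $f$ and $H$ are commuting Hermitian operators, they are simultaneously diagonalizable, so there is an orthonormal eigenbasis $U_f=\{u_i\}$ of $H$ in which $f$ is also diagonal, giving
\[ f = \sum_i F(u_i)\,\ket{u_i}\!\bra{u_i},\qquad F\colon U_f\to\R.\]
Conversely, for any orthonormal eigenbasis $U$ of $H$ and any $F\colon U\to\R$, the operator $\sum_i F(u_i)\,\ket{u_i}\!\bra{u_i}$ is Hermitian and commutes with $H$, hence lies in $V_0$. This gives a two-way dictionary between the sets being minimized over.

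Second, I would verify that under this correspondence both the numerator and denominator of the Rayleigh quotient match. Since $\rho$ commutes with $H$, it is diagonal in $U_f$ with entries $\pi(u_i)$, so a direct computation gives $\trace{\rho f}=\mathbb{E}_\pi F$ and $\langle f,f\rangle_\rho=\trace{\rho f^2}=\mathbb{E}_\pi F^2$, yielding $\Var(f)=\mathrm{Var}_\pi(F)$. For the Dirichlet form, the key calculation is to compute $\bra{u_i}\mathcal{L}(\ket{u_k}\!\bra{u_k})\ket{u_i}$ directly from \cref{def:davies generator}. Using that $S(\omega)\ket{u_i}$ lies in the energy-$(E_i+\omega)$ eigenspace together with the assumption $\mathcal{S}^\dag=\mathcal{S}$, I would show
\[\bra{u_i}\mathcal{L}(\ket{u_k}\!\bra{u_k})\ket{u_i}=\begin{cases}P_{\mathcal{L},U_f}[u_i\to u_k],& k\neq i,\\ -\sum_{k'\neq i}P_{\mathcal{L},U_f}[u_i\to u_{k'}],& k=i.\end{cases}\]
Since $\rho^{1/2}f\rho^{1/2}$ is diagonal in $U_f$, the trace defining $-\langle \mathcal{L}(f),f\rangle_\rho$ extracts only the diagonal entries of $\mathcal{L}(f)$ in that basis; summing the above identity against $\pi(u_i)F(u_i)$ then exactly reproduces the classical Dirichlet form $\mathcal{E}_{\mathcal{L},U_f,\mathrm{cl}}(F)$.

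With the correspondence in hand, both inequalities follow directly. For any Hermitian $f\in V_0$ with $\Var(f)>0$, the matched moments give $\lambda_\mathcal{L}(f)=\lambda_{\mathcal{L},U_f,\mathrm{cl}}(F)\geq \lambda_{\mathcal{L},U_f,\mathrm{cl}}\geq \min_U\lambda_{\mathcal{L},U,\mathrm{cl}}$; minimizing over $f$ gives $\lambda_{\mathcal{L},0}^{\mathrm{H}}\geq \min_U \lambda_{\mathcal{L},U,\mathrm{cl}}$. Conversely, for any eigenbasis $U$ and any $F$ with $\mathrm{Var}_\pi(F)>0$, the associated $f\in V_0$ is Hermitian with matching Rayleigh quotient, so $\lambda_{\mathcal{L},U,\mathrm{cl}}(F)\geq \lambda_{\mathcal{L},0}^{\mathrm{H}}$; minimizing over $F$ and $U$ yields the reverse inequality. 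To additionally identify the minimizing $\hat U$, one takes $\hat U=U_{f^*}$ where $f^*$ is a Hermitian minimizer in $V_0$: this gives $\lambda_{\mathcal{L},\hat U,\mathrm{cl}}\leq \lambda_{\mathcal{L},\hat U,\mathrm{cl}}(F_{f^*})=\lambda_{\mathcal{L},0}^{\mathrm{H}}$, so $\hat U$ achieves the minimum.

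The main obstacle I expect is the matrix-element computation underlying the Dirichlet form match, particularly in the degenerate case when $H$ has repeated eigenvalues. There one must carefully separate, within the anticommutator $\{S(0)^\dag S(0),\ket{u_k}\!\bra{u_k}\}$, the contribution from $m=i$ (which cancels the ``gain'' term $G(0)|\bra{u_i}S\ket{u_i}|^2$) from contributions with $E_m=E_i$ but $m\neq i$ (which correctly feed into outgoing transition rates within the degenerate eigenspace of $U_f$), and use $\mathcal{S}=\mathcal{S}^\dag$ to identify $\sum_S|\bra{u_k}S\ket{u_i}|^2$ with $\sum_S|\bra{u_i}S\ket{u_k}|^2$, matching the definition of $P_{\mathcal{L},U_f}[u_i\to u_k]$.
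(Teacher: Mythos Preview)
Your proposal is correct and follows essentially the same approach as the paper: both set up the simultaneous diagonalization of $f$ and $H$, match the variances directly, and compute the diagonal matrix elements $\langle \mathcal{L}(\ket{u_k}\!\bra{u_k}),\ket{u_i}\!\bra{u_i}\rangle$ to identify the quantum Dirichlet form with the classical one. Your treatment is slightly more explicit about the degenerate case and the role of $\mathcal{S}=\mathcal{S}^\dag$, which the paper leaves implicit in its ``direct computation''.
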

\begin{proof}
Consider an arbitrary Hermitian $f\in V_0.$ 
Note that
$f$ commutes with $H$ and is Hermitian, so $f$ and $H$ are simultaneously diagonalizable by a certain orthonormal eigenbasis $U$ of $H.$ Let $u_i$ be the $i$-th eigenvector in $U,$ $\pi(u_i) = \bra{u_i} \rho \ket{u_i}$ and write $f = \sum f_i u_i u_i^\dag$ 
with $ f_i \in \R.$
Define $F: U\to \R$ where $F(u_i) =f_i.$ We will show that the Dirichlet form of $f$ with respect to $\mathcal{L}$ (respectively, the variance of $f$) is equal to the Dirichlet form of $F$ (respectively, the variance of $F$) with respect to the classical chain $P_U$ defined in \cref{def:classical Markov generator}. Concretely,
\[\mathcal{E}_\mathcal{L} (f)= \mathcal{E}_{\mathcal{L}, U,\mathrm{cl}} (F) \quad \text{and} \quad  \Var_\rho (f) = \Var_\pi (F).\]
Hence,
\[\lambda_{\mathcal{L},0}^{\mathrm{H}} = \min_{U} \min_{\substack{f =\sum_i  f_iu_i u_i^\dag, f_i\in \R\forall i}} \frac{\mathcal{E}_\mathcal{L} (f)}{\Var_\rho (f) } =  \min_{U} \min_{F: U\to \R} \frac{ \mathcal{E}_{\mathcal{L}, U,\mathrm{cl}} (F)}{\Var_\pi (F)} = \min_U \lambda_{\mathcal{L},U,\mathrm{cl}} \]
where the outside minimum is taken over orthonormal eigenbases $U$ of $H.$

Note that $\{\ket{u_i}\bra{u_j}\}_{i,j\in [N]}$ forms an orthonormal basis of $\C^{N\times N}$ with respect to the inner product $\langle A, B\rangle = \trace{A B^\dag}.$ Note that this is distinct from the KMS-product $\langle A, B\rangle_\rho = \trace{\rho^{1/2} A \rho^{1/2} B^\dag}.$ Since $\mathcal{L}$ is a linear map, we can write
\[\mathcal{L} (f) = \sum_{i,j,k}  f_i \langle \mathcal{L} (\ket{u_i}\bra{u_i}),\ket{u_j}\bra{u_k}\rangle \ket{u_j}\bra{u_k}. \]
We can also directly compute that for $i\neq j,$ and $E_i:=\bra{u_i} H \ket{u_i}, E_j: = \bra{u_j} H \ket{u_j}$,
\[\langle \mathcal{L} (\ket{u_j}\bra{u_j}),\ket{u_i}\bra{u_i}\rangle = G(E_j - E_i) \sum_{S\in \mathcal{S}}|\bra{u_i} S \ket{u_j} |^2 =P_{\mathcal{L},U}[u_i \to u_j] \]
and 
\begin{equation}\label{eq:return probability for MC}
    -\langle \mathcal{L} (\ket{u_i}\bra{u_i}),\ket{u_i}\bra{u_i} \rangle =\sum_{j:j\neq i}  \langle \mathcal{L} (\ket{u_j}\bra{u_j}),\ket{u_i}\bra{u_i}\rangle =\sum_{j:j\neq i} P_{\mathcal{L},U}[u_i\to u_j]. 
\end{equation}


\begin{equation} \label{eq:relate to classical spectral gap helper}
\begin{split}
    -\mathcal{E}_{\mathcal{L}} (f) =\langle \mathcal{L}(f),  f\rangle_\rho 
    &= \langle \sum_{i,j,k}  f_i \langle \mathcal{L} (\ket{u_i}\bra{u_i}),\ket{u_j}\bra{u_k}\rangle \ket{u_j}\bra{u_k}, \sum_{r} f_r \ket{u_r}\bra{u_r} \rangle_\rho\\
    &= \sum_{i,j,k,r} f_i f_r  \langle \mathcal{L} (\ket{u_i}\bra{u_i}),\ket{u_j}\bra{u_k}\rangle \langle  \ket{u_j}\bra{u_k}, \ket{u_r}\bra{u_r} \rangle_\rho
\end{split}
\end{equation}

The reversibility condition of the classical Markov generator follows from the reversibility of $\mathcal{L}$ (see \cref{eq:kms reversible condition}). Indeed,
\begin{equation}\label{eq:reversibility of classical MC}
   \pi(u_i) \langle \mathcal{L} (\ket{u_j}\bra{u_j}),\ket{u_i}\bra{u_i}\rangle  = \langle \mathcal{L} (\ket{u_j}\bra{u_j}),\ket{u_i}\bra{u_i}\rangle_\rho = \langle \mathcal{L} (\ket{u_i}\bra{u_i}),\ket{u_j}\bra{u_j}\rangle_\rho = \pi(u_j) \langle \mathcal{L} (\ket{u_i}\bra{u_i}),\ket{u_j}\bra{u_j}\rangle.  
\end{equation}
Since $u_r$ is an eigenvector of $ \rho$ with corresponding eigenvalue $\rho(u_r) = \bra{u_r} \rho \ket{u_r} =\pi(u_r),$ we obtain:
\begin{align*}
   \langle  \ket{u_j}\bra{u_k}, \ket{u_r}\bra{u_r} \rangle_\rho &= \trace{\ket{u_j}\bra{u_k} \rho^{1/2} \ket{u_r}\bra{u_r}  \rho^{1/2} } = \trace{\ket{u_j}\bra{u_k}   (\rho(u_r)^{1/2}\ket{u_r}\bra{u_r}\rho(u_r)^{1/2})   } \\
   &=  \rho(u_r) \trace{\ket{u_j}\bra{u_k}  \ket{u_r}\bra{u_r}   }  =   \rho(u_r)  \braket{u_k}{u_r}  \braket{u_r}{u_j} = \pi(u_r) \delta_{k=r} \delta_{j=r}
\end{align*}
Substituting into \cref{eq:relate to classical spectral gap helper} gives
\begin{align*}
    \mathcal{E}_{\mathcal{L}} (f) &=  -\sum_{i, j}f_i f_j \pi(u_j) \langle \mathcal{L} (\ket{u_i}\bra{u_i}),\ket{u_j}\bra{u_j}\rangle \\
    &= \sum_{i} f_i^2  \pi(u_i) \sum_{j:j\neq i} P_{\mathcal{L},U}[u_i\to u_j] -\sum_{i\neq j} f_i f_j \pi(u_i) P_{\mathcal{L},U}[u_i\to u_j]\\
    &=\frac{1}{2}\sum_{i\neq j}   (f_i^2 + f_j^2 - 2f_i f_j) \pi(u_i) P_{\mathcal{L},U}[u_i\to u_j]\\
    &= \frac{1}{2}\sum_{i\neq j} (f_i-f_j)^2 \pi(u_i)P_{\mathcal{L},U}[u_i \to u_j] = \mathcal{E}_{\mathcal{L}, U,\mathrm{cl}} (F)
\end{align*}
where in the second equality we use \cref{eq:return probability for MC} and in the third we use the reversibility of the classical Markov generator. 
Next, 
\begin{align*}
    \Var_\rho(f) &= \trace{\rho^{1/2} f \rho^{1/2} f^\dag} - \trace{\rho f}^2 = \sum_{i} f_i^2 \bra{u_i} \rho \ket{u_i}  - \left(\sum_{i} f_i\bra{u_i} \rho \ket{u_i}\right)^2\\
    &= \sum_i F(u_i)^2 \pi(u_i) - \left(\sum_{i} F(u_i)  \pi(u_i)\right)^2\\
    &= \Var_\pi(F). \tag*{\qedhere}
\end{align*}

\end{proof}

\begin{proof}[Proof of \cref{thm:quantum vs classical spectral gap comparison}]
    Let $\hat{U}$ be an orthonormal eigenbasis of $H$ satisfying $ \hat{U}=\arg\min_{U} \lambda_{\mathcal{L},U,\mathrm{cl}}$. Then, by \cref{prop:classical chain V0}, 
    \[\lambda_{\mathcal{L},0} = \lambda_{\mathcal{L},0}^{\mathrm{H}} = \lambda_{\mathcal{L},\hat{U},\mathrm{cl}}, \]
    and by \cref{thm:general comparison theorem dependent only on length of arithmetic sequence} 
    \[ \lambda_{\mathcal{L},\hat{U},\mathrm{cl}} \geq \lambda_{\mathcal{L}} \geq \frac{1}{2D} \lambda_{\mathcal{L},\hat{U},\mathrm{cl}}.\qedhere\]
\end{proof}

To prove \cref{cor:cheeger}, we use a generalized version of Cheeger's inequality that applies to general Markov generators. Although the proof is essentially the same as in the standard case, we include it in \cref{sec:missing proofs} for completeness.
\begin{proposition} \label{prop:generalized cheeger}
Consider a classical Markov generator on a state space \( X \), with transition rates \( P[x \to y] \geq 0 \) for all \( x \neq y \in X \). Assume that the generator is \emph{reversible} with respect to a probability distribution \( \pi \) on \( X \)\footnote{\( \pi \) is not assumed to be the unique stationary distribution of the process.}; that is,
\[
\forall x \neq y:\quad \pi(x) P[x \to y] = \pi(y) P[y \to x].
\]
Recall that the Dirichlet form is given by 
\( \mathcal{E}(f) = \frac{1}{2} \sum_{x \neq y} (f(x) - f(y))^2 \pi(x) P[x \to y] \), 
and the spectral gap is 
\( \lambda = \inf_{f : X \to \mathbb{R}} \frac{\mathcal{E}(f)}{\operatorname{Var}_\pi(f)} \).

Define the \emph{bottleneck ratio} \( \Phi := \min_{S \subseteq X,\ \pi(S) \leq 1/2} \frac{Q(S, S^c)}{\pi(S)} \), where \( Q(S, S^c) = \sum_{x \in S,\ y \notin S} \pi(x) P[x \to y] \).

Let \( \Lambda := \sup_{x \in X} \sum_{y \neq x} P[x \to y] \) denote the \emph{maximum total transition rate} out of any state.
Then the following inequality holds:
\[
\Phi^2 \leq 2 \Lambda \lambda.
\]
 \end{proposition}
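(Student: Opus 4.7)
The plan is to adapt the classical Cheeger-type argument to this continuous-time/unbounded-rate setting, with $\Lambda$ playing the role of the normalizing factor that converts rates into probabilities. I begin by choosing $f : X \to \R$ that attains (or nearly attains) the infimum defining $\lambda$, so $\mathcal{E}(f) = \lambda \Var_\pi(f)$. Let $m$ be a $\pi$-median of $f$ and set $\tilde f := f - m$, so that $\pi(\tilde f > 0) \leq 1/2$ and $\pi(\tilde f < 0) \leq 1/2$. Since the Dirichlet form depends only on differences, $\mathcal{E}(\tilde f) = \mathcal{E}(f)$. Write $\tilde f = \tilde f_+ - \tilde f_-$ for the positive and negative parts. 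A routine case analysis (the only nontrivial case is $\tilde f(x),\tilde f(y)$ of opposite signs, where the cross term $-2\tilde f(x)\tilde f(y) \geq 0$ provides slack) yields the pointwise inequality
\[
(\tilde f_+(x) - \tilde f_+(y))^2 + (\tilde f_-(x) - \tilde f_-(y))^2 \leq (\tilde f(x) - \tilde f(y))^2,
\]
and hence $\mathcal{E}(\tilde f_+) + \mathcal{E}(\tilde f_-) \leq \mathcal{E}(f)$. Moreover $\tilde f_\pm$ are nonnegative with $\pi(\supp \tilde f_\pm) \leq 1/2$.

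The main step is to prove, for every nonnegative $g : X \to \R$ with $\pi(\supp g) \leq 1/2$, the bound $\Phi^2 \|g\|_\pi^2 \leq 2\Lambda \mathcal{E}(g)$, where $\|g\|_\pi^2 := \sum_x g(x)^2 \pi(x)$. For the lower bound on the relevant discrepancy, use reversibility to symmetrize and then apply the coarea identity with $S_t := \{x : g(x)^2 \geq t\}$:
\[
\sum_{x \neq y} |g(x)^2 - g(y)^2|\, \pi(x) P[x \to y] \;=\; 2 \int_0^\infty Q(S_t, S_t^c)\, dt \;\geq\; 2\Phi \int_0^\infty \pi(S_t)\, dt \;=\; 2\Phi \|g\|_\pi^2,
\]
where the inequality uses $\pi(S_t) \leq \pi(\supp g) \leq 1/2$ to invoke the bottleneck ratio. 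For the matching upper bound, factor $|g(x)^2 - g(y)^2| = |g(x) - g(y)| \cdot (g(x) + g(y))$ and apply Cauchy--Schwarz: the first factor contributes $\sqrt{2\mathcal{E}(g)}$, while $(g(x) + g(y))^2 \leq 2(g(x)^2 + g(y)^2)$ together with the definition of $\Lambda$ and reversibility (used to rewrite $\sum_{x \neq y} g(y)^2 \pi(x) P[x \to y] = \sum_y g(y)^2 \pi(y) \sum_{x \neq y} P[y \to x] \leq \Lambda \|g\|_\pi^2$) bound the second factor by $\sqrt{4\Lambda \|g\|_\pi^2}$. Squaring gives $\Phi^2 \|g\|_\pi^2 \leq 2\Lambda \mathcal{E}(g)$.

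Applying this per-component bound to both $g = \tilde f_+$ and $g = \tilde f_-$ and summing yields
\[
\Phi^2 \|\tilde f\|_\pi^2 \;=\; \Phi^2 \bigl(\|\tilde f_+\|_\pi^2 + \|\tilde f_-\|_\pi^2\bigr) \;\leq\; 2\Lambda \bigl(\mathcal{E}(\tilde f_+) + \mathcal{E}(\tilde f_-)\bigr) \;\leq\; 2\Lambda\, \mathcal{E}(f) \;=\; 2\Lambda\, \lambda\, \Var_\pi(f).
\]
Finally, $\|\tilde f\|_\pi^2 = \Var_\pi(\tilde f) + \E_\pi[\tilde f]^2 \geq \Var_\pi(f)$ (since $\Var_\pi(\tilde f) = \Var_\pi(f)$), and the desired inequality $\Phi^2 \leq 2\Lambda \lambda$ follows. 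The principal bookkeeping obstacle is the Cauchy--Schwarz step: one must use reversibility to swap the roles of source and target when bounding the $g(y)^2$ term, so that both contributions are controlled by $\Lambda \|g\|_\pi^2$ rather than by an asymmetric expression in which $\Lambda$ is replaced by a sum of row sums over targets; once this is handled, the rest is a direct translation of the classical discrete-time argument.
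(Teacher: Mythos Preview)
Your proof is correct and follows essentially the same classical Cheeger argument as the paper: both use the coarea formula plus Cauchy--Schwarz on a nonnegative function supported on at most half the space, with $\Lambda$ controlling the $(g(x)+g(y))^2$ term. The only minor difference is in setting up that nonnegative function: the paper takes the positive part of the second eigenfunction and uses the eigenvalue equation to get $\mathcal{E}(f_+)\le\lambda\|f_+\|_\pi^2$ directly, whereas you subtract a median and split into $\tilde f_\pm$, using $\mathcal{E}(\tilde f_+)+\mathcal{E}(\tilde f_-)\le\mathcal{E}(f)$ together with $\|\tilde f\|_\pi^2\ge\Var_\pi(f)$---both are standard variants and lead to the identical core computation.
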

\begin{proof}[Proof of \cref{cor:cheeger}]

    By \cref{thm:quantum vs classical spectral gap comparison}, for a certain orthonormal eigenbasis $\hat{U} = \set{u_i}_i$ of $H,$
    \begin{equation}\label{eq:cheeger proof relating quantum and classical}
      \lambda_{\mathcal{L},\hat{U},\mathrm{cl}} \leq 2D  \lambda_{\mathcal{L}}.   
    \end{equation}

   We adopt the same notation as in \cref{def:classical Markov generator} and \cref{prop:classical chain V0}.

    Let $ \Lambda := \max_{u_i\in \hat{U}} \sum_{u_j:j\neq i} P_{\mathcal{L}, \hat{U}}[u_i\to u_j] . $ We prove that
\begin{equation}\label{eq:cheeger proof escaper rate}
  \Lambda\leq M  = \|G\|_\infty\cdot \|\sum_{S\in \mathcal{S}} S  S^\dag\|.  
\end{equation}
Indeed,
\begin{align*}
\sum_{u_j:j\neq i}P_{\mathcal{L}, \hat{U}}[u_i\to u_j] &= \sum_{u_j:j\neq i}\left( G(E_j -E_i) \sum_{S\in \mathcal{S}} \bra{u_i} S \ket{u_j}\bra{u_j} S^\dag \ket{u_i}\right) \\
&\leq \|G\|_\infty  \sum_{u_j}   \sum_{S\in \mathcal{S}} \bra{u_i} S \ket{u_j}\bra{u_j} S^\dag \ket{u_i}\\
&= \|G\|_\infty  \bra{u_i}  \left(\sum_{S\in \mathcal{S}} S (\sum_{u_j}\ket{u_j}\bra{u_j} ) S^\dag\right) \ket{u_i}\\
&= \|G\|_\infty \bra{u_i} (\sum_{S\in \mathcal{S}} S  S^\dag) \ket{u_i}\\
&\leq \|G\|_\infty\cdot \|\sum_{S\in \mathcal{S}} S  S^\dag\|
\end{align*}
As argued in \cref{prop:classical chain V0}, the classical Markov generator $P_{\mathcal{L},\hat{U}}$ is reversible wrt $\pi.$
    By Cheeger's inequality (see \cref{prop:generalized cheeger}), there exists a set $S\subseteq \hat{U}$ such that:
    \begin{equation}\label{eq:cheeger proof bottleneck S}
         \pi(S)\leq 1/2 \quad \text{and} \quad \sum_{u_i \in S, u_j\not\in S} \pi(u_i) P_{\mathcal{L}, \hat{U}}[u_i\to u_j] \leq \pi(S) \sqrt{2\Lambda  \lambda_{\mathcal{L},\hat{U},\mathrm{cl}} } .
    \end{equation}
     Let $F :\hat{U} \to \R$ be defined by $ F(u_i)= \textbf{1} [u_i \in S]$  and let $f = \sum F(u_i) \ket{u_i}\bra{u_i}.$ Observe that $f$ is a Hermitian projector that commutes with $H$ i.e. $f=f^\dag,$ $[f,H]=0$ and $f^2 = f.$
     By the same argument as in the proof of \cref{prop:classical chain V0} and a direct calculation, we have:
    \[\mathcal{E}_\mathcal{L} (f) = \mathcal{E}_{\mathcal{L}, U,\mathrm{cl}} (F) = \sum_{u_i \in S, u_j\not\in S} \pi(u_i) P_{\mathcal{L}, \hat{U}}[u_i\to u_j] \quad \text{and} \quad  \Var_\rho (f) = \Var_\pi (F) = (1-\pi(S)) \pi(S).\]
   Combining \cref{eq:cheeger proof relating quantum and classical,eq:cheeger proof bottleneck S,eq:cheeger proof escaper rate}, we get 
   \[ \mathcal{E}_\mathcal{L} (f) \leq \sqrt{2 \Lambda \lambda_{\mathcal{L},\hat{U},\mathrm{cl}} } \cdot \pi(S) \leq  2 \sqrt{2 M \lambda_{\mathcal{L},\hat{U},\mathrm{cl}} } \cdot \pi(S) (1-\pi(S)) = 2 \sqrt{2 M \lambda_{\mathcal{L},\hat{U},\mathrm{cl}} } \cdot\Var_\rho (f) \leq  4\sqrt{DM \lambda_{\mathcal{L}}} \cdot \Var_\rho (f)  . \]
\end{proof}

\section{Absence of long proper arithmetic progressions}\label{sec:no AP in perturbed Hamiltonian}

In this section, we prove \cref{thm:generic transverse field}, i.e., perturbing any fixed Hamiltonian by a generic external field results in a new Hamiltonian whose spectrum contains no 3-arithmetic progressions or repeated eigenvalues. We can also show similar results for other types of perturbations. For example, we show that perturbing any fixed Hamiltonian by a generic degree-$k$ perturbation yields a new Hamiltonian whose spectrum contains no 3-arithmetic progression.

We recall the definition of the Pauli matrices:
\begin{align*}
X &:= 
\begin{bmatrix}
0 & 1 \\
1 & 0
\end{bmatrix}, \quad
Y := 
\begin{bmatrix}
0 & -i \\
i & 0
\end{bmatrix}, \quad
Z := 
\begin{bmatrix}
1 & 0 \\
0 & -1
\end{bmatrix}, \quad I:=\begin{bmatrix}
1 & 0 \\
0 & 1
\end{bmatrix}.
\end{align*}
Let \( P \in \mathbb{C}^{2 \times 2} \) be any single-qubit operator (such as one of the Pauli matrices).  
Then the operator \( P_i \), acting on the \( i \)th qubit of an \( n \)-qubit system, is defined as:
\[
P_i := I^{\otimes (i-1)} \otimes P \otimes I^{\otimes (n - i)}.
\]
\begin{proposition}\label{thm:generic higher order term}
    For any Hamiltonian $H_0$ and $k\in\N$,  the spectrum of the Hamiltonian $H := H_0 + \sum_{S\in \binom{[n]}{k}} A_{S} \prod_{i\in S} X_i$ almost surely (with respect to the randomness of $A = (A_{S})_{S \in \binom{[n]}{k}}\in \R^{ \binom{[n]}{k}}$) does not have any 3-arithmetic progressions, in the sense that the set of $A\in \R^{ \binom{[n]}{k}}$ such that $ H$ has a $3$-arithmetic progressions has Lebesgue measure $0$.
\end{proposition}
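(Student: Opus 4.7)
The plan is to follow the polynomial/measure-zero strategy sketched for \cref{thm:generic transverse field}. Let $\lambda_1, \dots, \lambda_N$ be the eigenvalues of $H := H_0 + \sum_{S} A_S \prod_{i \in S} X_i$, and define
\[ F := \prod_{\substack{i,j,k \in [N] \\ \text{pairwise distinct}}} (\lambda_i + \lambda_j - 2\lambda_k). \]
Being symmetric in the $\lambda_i$, $F$ can be written as a polynomial in the coefficients of the characteristic polynomial of $H$, hence in the entries of $H$, and ultimately as a polynomial in $A = (A_S)_{S \in \binom{[n]}{k}}$. Moreover $F(A) = 0$ precisely when $\spec(H)$ contains a $3$-term arithmetic progression indexed by a triple of distinct eigenvalue indices. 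It thus suffices to exhibit a single $A^\star \in \R^{\binom{[n]}{k}}$ with $F(A^\star) \neq 0$: the zero set of a nonzero polynomial in $A$ has Lebesgue measure zero.

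To produce such a witness I would exploit that the operators $\prod_{i \in S} X_i$ all commute and are simultaneously diagonal in the $X^{\otimes n}$-eigenbasis $\{\ket{\epsilon}\}_{\epsilon \in \{\pm 1\}^n}$, with eigenvalues $\chi_S(\epsilon) := \prod_{i \in S} \epsilon_i$. Hence $W(A) := \sum_S A_S \prod_{i \in S} X_i$ has eigenvalues $w_A(\epsilon) := \sum_S A_S \chi_S(\epsilon)$. Picking the $A_S$ to be linearly independent over $\Q$, any relation $w_A(\epsilon^{(1)}) + w_A(\epsilon^{(3)}) = 2 w_A(\epsilon^{(2)})$ forces $\chi_S(\epsilon^{(1)}) + \chi_S(\epsilon^{(3)}) = 2\chi_S(\epsilon^{(2)})$ for every $S$ of size $k$; since $\chi_S \in \{\pm 1\}$, this collapses to $\chi_S(\epsilon^{(1)}) = \chi_S(\epsilon^{(2)}) = \chi_S(\epsilon^{(3)})$. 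Writing $\delta^{(r)}_i := \epsilon^{(r)}_i \epsilon^{(1)}_i$ and $T^{(r)} := \{i : \delta^{(r)}_i = -1\}$, this becomes $(-1)^{|S \cap T^{(r)}|} = 1$ for all $S \in \binom{[n]}{k}$. A short counting argument on the range of $|S \cap T|$ as $S$ varies then shows that, for $1 \le k \le n-1$, $T^{(r)}$ must lie in $\{\emptyset\}$ when $k$ is odd, and in $\{\emptyset, [n]\}$ when $k$ is even; in either case the $\epsilon^{(r)}$ are forced into a set of size at most two, ruling out three pairwise distinct sign vectors.

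With $A^\star$ in hand I would transfer the no-$3$-AP property from $W(A^\star)$ to $H_0 + t W(A^\star)$ via a scaling argument. Degenerate perturbation theory gives $\lambda_i(t) = t\mu_i + c_i + O(1/t)$ as $t \to \infty$, where $\mu_i$ runs over $\spec(W(A^\star))$ with appropriate multiplicity and the $c_i$ are eigenvalues of $H_0$ restricted to the corresponding $\mu$-eigenspace of $W(A^\star)$. Then $\lambda_i(t) + \lambda_j(t) - 2\lambda_k(t) = t(\mu_i + \mu_j - 2\mu_k) + O(1)$ is nonzero for large $t$ whenever $\mu_i + \mu_j \neq 2\mu_k$. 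The combinatorial step leaves only the possibility $\mu_i = \mu_j = \mu_k$, which is relevant only when $k$ is even (where $W(A^\star)$ has multiplicity exactly $2$ because $\chi_S(-\epsilon) = \chi_S(\epsilon)$); but a single $2$-dimensional eigenspace provides only $2$ indices, too few to furnish a $3$-AP. Hence $F(tA^\star) \neq 0$ for all sufficiently large $t$, establishing $F \not\equiv 0$.

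The main obstacle I anticipate is the combinatorial classification of $T$: carefully tracking which intersection sizes $|S \cap T|$ can arise as $S$ ranges over $\binom{[n]}{k}$ in order to conclude the parity dichotomy above, together with correctly accommodating the forced $2$-fold degeneracy of $W(A^\star)$ when $k$ is even. I note that the statement implicitly requires $k < n$: for $k = n$ the perturbation reduces to the single operator $A_{[n]} X^{\otimes n}$ with eigenvalues $\pm A_{[n]}$, each of multiplicity $2^{n-1}$, so when $H_0 = 0$ the spectrum always contains non-proper $3$-APs from repeated eigenvalues and $F$ vanishes identically in $A$.
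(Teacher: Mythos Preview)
Your proposal is correct and follows essentially the same route as the paper: reduce to showing the symmetric polynomial $F$ is not identically zero, then via a scaling/limit argument reduce further to proving that $W(A^\star)=\sum_S A^\star_S\prod_{i\in S}X_i$ has no $3$-AP in its spectrum when the $A^\star_S$ are rationally independent. The only substantive difference is in the combinatorial kernel lemma (that $\chi_S(\epsilon)=\chi_S(\epsilon')$ for all $|S|=k$ forces $\epsilon'\in\{\epsilon,-\epsilon\}$, the second option arising only for even $k$): the paper proves this by induction on $k$, whereas you propose a direct argument via the possible values of $|S\cap T|$ as $S$ ranges over $\binom{[n]}{k}$; both work. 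Your explicit treatment of the multiplicity-$2$ degeneracy of $W(A^\star)$ when $k$ is even, and your observation that the statement as written fails for $k=n$, are points the paper's ``by the same argument'' reduction does not spell out.
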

For quadratic perturbation, we can also restrict the domain of the interactions to be supported on an arbitrary connected graph.
\begin{proposition}\label{thm:generic quadratic term}
    For any given Hamiltonian $H_0$ and (simple) connected graph $G = G([n], E),$ the spectrum of the Hamiltonian $H := H_0 + \sum_{\{i, j\}\in E(G)} A_{ij} X_i X_j$ almost surely (with respect to the randomness of $A = (A_{ij})_{\{i,j\} \in E(G)}\in \R^{E(G)}$) does not have any 3-arithmetic progressions, in the sense that the set of $A\in \R^{E(G)}$ such that $ H$ has a $3$-arithmetic progressions has Lebesgue measure $0$.
\end{proposition}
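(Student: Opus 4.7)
Following the strategy used for \cref{thm:generic transverse field,thm:generic higher order term} and sketched in the techniques section, label the eigenvalues of $H(A) = H_0 + K(A)$, where $K(A):=\sum_{\{i,j\}\in E} A_{ij} X_iX_j$, as $\lambda_1(A),\dots,\lambda_N(A)$ and set
\[
F(A) := \prod_{\substack{i,j,k\in[N]\\ \text{pairwise distinct}}} \bigl(\lambda_i(A)+\lambda_k(A)-2\lambda_j(A)\bigr).
\]
Being symmetric in $\lambda_1,\dots,\lambda_N$, $F$ is a polynomial in the coefficients of the characteristic polynomial of $H(A)$, hence a polynomial in $A$; and its zero set is exactly the set of $A$ for which $\spec H(A)$ contains a $3$-AP. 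It therefore suffices to exhibit a single $A$ with $F(A)\ne 0$, which I will do by analyzing the ray $A=tB$ as $t\to\infty$ for a carefully chosen $B\in\R^{E(G)}$.

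\textbf{Structure of $K(B)$ and genericity.} The operators $\{X_iX_j:\{i,j\}\in E\}$ mutually commute, so $K(B)$ is simultaneously diagonalized in the common $\pm$-eigenbasis of the $X_i$'s, with eigenvalues
\[
\mu_s(B)=\sum_{\{i,j\}\in E}B_{ij}\, s_is_j,\qquad s\in\{\pm 1\}^n.
\]
The symmetry $\mu_s=\mu_{-s}$ forces each eigenvalue of $K(B)$ to have multiplicity at least $2$. I claim that for $B$ outside a finite union of affine hyperplanes (a measure-zero set): (i) $\mu_s(B)\ne \mu_{s'}(B)$ whenever $s'\notin\{s,-s\}$, so every eigenvalue has multiplicity \emph{exactly} $2$; and (ii) no three pair-distinct representatives $s_1,s_2,s_3$ satisfy $\mu_{s_1}+\mu_{s_3}=2\mu_{s_2}$. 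Each of (i) and (ii) is cut out by a single linear condition on $B$, and these conditions are nontrivial precisely because $G$ is connected: in both cases, vanishing of the coefficient vector on every edge $\{i,j\}\in E$ would force the relevant sign-products to agree edge-wise, and propagating the resulting parity constraint along a spanning tree of $G$ forces the sign vectors to lie in a single pair $\{\pm s\}$, contradicting distinctness.

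\textbf{Leading-order analysis and conclusion.} For such $B$, degenerate first-order perturbation theory applied to $H_0+tK(B)$ gives
\[
\lambda_{(s,\alpha)}(tB) = t\mu_s(B) + \nu_{s,\alpha} + O(1/t),\qquad \alpha\in\{1,2\},
\]
where $\nu_{s,1},\nu_{s,2}$ are the eigenvalues of $P_{\mu_s}H_0P_{\mu_s}$ on the $2$-dimensional $\mu_s$-eigenspace of $K(B)$. Each factor of $F(tB)$ indexed by a distinct triple of eigenvalue-labels $(s_1,\alpha_1),(s_2,\alpha_2),(s_3,\alpha_3)$ then has leading-in-$t$ coefficient $\mu_{s_1}+\mu_{s_3}-2\mu_{s_2}$, and a short case check on how many of the $s_a$ coincide shows this is always nonzero: three equal $s_a$ are impossible, since a pair contains only two index-labels; when exactly two of the $s_a$ coincide the coefficient collapses to $\pm(\mu_s-\mu_{s'})$ for pair-distinct $s,s'$, nonzero by (i); and the case of three pair-distinct $s_a$ is ruled out by (ii). Consequently $F(tB)$, viewed as a polynomial in $t$, has a nonzero top-degree coefficient, hence is nonzero for all sufficiently large $t$, showing $F\not\equiv 0$. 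The main obstacle in this plan is the automatic double degeneracy of $\spec K(B)$, which could in principle force many leading coefficients of the factors of $F(tB)$ to vanish simultaneously; the resolution is the observation that each $\mu_s$-eigenspace is only $2$-dimensional, so any triple of distinct indices must straddle at least two different pairs, after which the connectedness of $G$ is exactly what supplies the nontriviality of the hyperplanes in (i)--(ii).
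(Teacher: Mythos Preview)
Your argument is correct and follows essentially the same route as the paper: reduce to showing the symmetric polynomial $F$ is not identically zero by rescaling/taking $t\to\infty$, then verify that for a generic $B$ each factor $\mu_{s_1}+\mu_{s_3}-2\mu_{s_2}$ is nonzero using the key combinatorial point that vanishing edge-wise coefficients force the sign vectors into a single pair $\{s,-s\}$ via connectedness of $G$. The only cosmetic differences are that the paper uses $\mathbb{Z}$-linear independence of the $A_{ij}$ in place of your finitely-many-hyperplanes genericity, and handles your three cases in one stroke via a pigeonhole-and-propagate argument on the indices $v,w,z\in\{\pm 1\}^n$; your invocation of degenerate perturbation theory is also more than you need, since only the leading term $t\mu_s$ (i.e., continuity of eigenvalues) is used.
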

As corollaries, we obtain the following results regarding the spectrum of the transverse field Ising model.
\begin{corollary}
    For any simple connected graph $G = G([n], E)$ and fixed external field $H,$ the spectrum of the Hamiltonian $H = \sum_{\{i, j\}\in E(G)} A_{ij} Z_i Z_j + h \sum_i X_i$ almost surely (with respect to the randomness of $A = (A_{ij})_{\{i,j\} \in E(G)}$) does not have any 3-arithmetic progressions, in the sense that the set of $A\in \R^{E(G)}$ such that the spectrum of $ H$ has a $3$-arithmetic progressions has Lebesgue measure $0$.
\end{corollary}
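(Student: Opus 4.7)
The plan is to reduce the statement to \cref{thm:generic quadratic term} via a global change of basis that interchanges the roles of $X$ and $Z$. Specifically, let $W := \mathsf{H}^{\otimes n}$, where $\mathsf{H} = \tfrac{1}{\sqrt{2}}\begin{pmatrix} 1 & 1 \\ 1 & -1 \end{pmatrix}$ is the single-qubit Hadamard matrix (I use $\mathsf{H}$ to avoid collision with the Hamiltonian symbol). Conjugation by $\mathsf{H}$ swaps the Pauli operators $X$ and $Z$, so $W Z_i W^\dag = X_i$ and $W X_i W^\dag = Z_i$ for every qubit $i \in [n]$. Consequently,
\[
W \left( \sum_{\{i,j\} \in E(G)} A_{ij} Z_i Z_j + h \sum_{i \in [n]} X_i \right) W^\dag = \sum_{\{i,j\} \in E(G)} A_{ij} X_i X_j + h \sum_{i \in [n]} Z_i,
\]
and since $W$ is unitary, the transformed Hamiltonian has the same spectrum as the original one.

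The right-hand side is precisely of the form considered in \cref{thm:generic quadratic term}, with the fixed Hamiltonian taken to be $H_0 := h \sum_{i \in [n]} Z_i$ and the random coefficients $\{A_{ij}\}_{\{i,j\}\in E(G)}$ multiplying the quadratic $X_i X_j$ interactions on the edges of the connected graph $G$. Invoking \cref{thm:generic quadratic term} for this choice of $H_0$ and $G$ yields that the set of $A \in \mathbb{R}^{E(G)}$ for which the transformed (and hence the original) Hamiltonian admits a $3$-term arithmetic progression in its spectrum has Lebesgue measure zero, which is exactly the corollary's conclusion.

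There is essentially no obstacle beyond identifying the correct black box to apply: the argument is a one-line reduction once one recalls that Hadamard conjugation intertwines $X$ with $Z$ qubitwise and that unitary conjugation preserves spectra. The only small thing worth double-checking is that \cref{thm:generic quadratic term} genuinely permits an arbitrary fixed $H_0$ (which it does, by its statement), so that absorbing the external-field term $h\sum_i Z_i$ into $H_0$ causes no issue.
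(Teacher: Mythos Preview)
Your proposal is correct and takes essentially the same approach as the paper: the corollary is stated immediately after \cref{thm:generic quadratic term} without a separate proof, precisely because the Hadamard conjugation you describe reduces it directly to that proposition with $H_0 = h\sum_i Z_i$. The paper even uses the same $X\leftrightarrow Z$ unitary equivalence inside the proof of \cref{thm:generic quadratic term} itself, so your reduction is exactly the intended one.
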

\begin{corollary}
    For any interaction matrix $A=(A_{ij})_{i,j},$ the spectrum of the Hamiltonian $ H = \sum_{i,j} A_{ij} Z_i Z_j + \sum_i h_i X_i$ almost surely (with respect to the randomness of $\boldsymbol{h} = (h_i)_{i=1}^n\in \R^n$) does not have any 3-arithmetic progressions or repeated eigenvalues, the sense that the set of $\boldsymbol{h}\in \R^n$ such that the spectrum of $H$ has any 3-arithmetic progressions or repeated eigenvalues has Lebesgue measure $0.$
\end{corollary}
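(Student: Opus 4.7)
The plan is to observe that this corollary is an immediate specialization of \cref{thm:generic transverse field}. Indeed, in \cref{thm:generic transverse field} we are free to choose $H_0 \in \C^{N\times N}$ to be any fixed Hamiltonian and $P \in \C^{2\times 2}$ to be any fixed Hermitian matrix which is not a scalar multiple of the identity.

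Taking $H_0 := \sum_{i,j} A_{ij} Z_i Z_j$, which is Hermitian (as the $Z_i$ commute and are Hermitian) and fixed once the interaction matrix $A$ is specified, and taking $P := X$, which is Hermitian and manifestly not a multiple of the identity, we have $P_i = I^{\otimes(i-1)} \otimes X \otimes I^{\otimes(n-i)} = X_i$. The perturbed Hamiltonian $H_0 + \sum_{i=1}^n h_i P_i$ is then exactly the transverse field Ising Hamiltonian $H = \sum_{i,j} A_{ij} Z_i Z_j + \sum_i h_i X_i$ in the statement. Applying \cref{thm:generic transverse field} directly yields that the set of $\boldsymbol{h} \in \R^n$ for which $H$ contains a 3-term arithmetic progression or repeated eigenvalues has Lebesgue measure $0$.

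There is essentially no obstacle here beyond matching notation. The real work is already done in \cref{thm:generic transverse field}, whose proof is based on noting that the ``bad set'' is the zero set of a suitable polynomial (a product of factors $(\lambda_i + \lambda_j - 2\lambda_k)$ and $(\lambda_i - \lambda_j)$ expressed as a symmetric polynomial in the eigenvalues, hence as a polynomial in the entries of $H$) and exhibiting a single $\boldsymbol{h}$ for which this polynomial does not vanish, so that the zero set has measure zero by standard arguments on real analytic/polynomial varieties.
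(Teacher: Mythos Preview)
Your proposal is correct and matches the paper's approach: the corollary is stated as an immediate consequence of \cref{thm:generic transverse field}, and your specialization $H_0 = \sum_{i,j} A_{ij} Z_i Z_j$, $P = X$ is exactly the intended instantiation.
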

The above shows that the spectrum of Hamiltonians with \emph{varying} interactions or external fields generically does not contain a long arithmetic progression. We can also to show that this property holds for some Hamiltonians with uniform interaction and external field.

\begin{proposition}\label{thm:1D XY+Z}
     Let $H$ be the 1D XY+Z Hamiltonian with periodic boundary condition, i.e.,
     \begin{equation}\label{eq:1D XY+Z periodic boundary condition}
         H  = J \sum_{i=1}^n X_i Y_{i+1} +  h \sum_{i=1}^n Z_i \quad \text{where}\quad Y_{n+1}= Y_1. \nonumber
     \end{equation}
     For an odd prime $n,$ the spectrum of $H$ almost surely (with respect to the randomness of $J$ and $h$) does not contain any 3-arithmetic progressions or repeated eigenvalues, in the sense that the set of $ (J,h)\in \R^2$ such that the spectrum of $ H$ has 3-arithmetic progression or repeated eigenvalues or repeated eigenvalues has Lebesgue measure $0$.\footnote{For odd primes $n$, $ H$ almost surely does not have repeated eigenvalues; this was proved in \cite[Lemma 1]{Keating_2015}.}
\end{proposition}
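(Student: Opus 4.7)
The plan is to apply the polynomial non-vanishing strategy from the Techniques subsection to this two-parameter family. Define
\[
F(J,h) := \prod_{\substack{i,j,k \text{ pairwise distinct}}} \bigl(\lambda_i(J,h) + \lambda_j(J,h) - 2\lambda_k(J,h)\bigr),
\]
where $\lambda_1(J,h), \ldots, \lambda_N(J,h)$ are the eigenvalues of $H(J,h)$. Because $F$ is symmetric in the $\lambda_i$'s, it is a polynomial in the entries of $H$, and hence a polynomial in $(J,h)$. If $F \not\equiv 0$, its zero set has Lebesgue measure zero, which together with the no-repeated-eigenvalue result of \cite[Lemma 1]{Keating_2015} quoted in the footnote yields the proposition.

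The crux is therefore to exhibit at least one pair $(J_0, h_0)$ at which $F(J_0, h_0) \neq 0$. I would first apply the Jordan--Wigner transformation to $H$: up to a parity-dependent boundary term, $H$ becomes a quadratic fermion Hamiltonian, diagonalizable via a Bogoliubov rotation, with spectrum of the form $\lambda_{\boldsymbol{\sigma}}(J,h) = E_0(J,h) + \sum_k \sigma_k\,\epsilon_k(J,h)$, where $\sigma_k \in \{0,1\}$, momenta $k$ range over either $\frac{2\pi m}{n}$ or $\frac{2\pi(m+1/2)}{n}$ depending on the fermion-parity sector, and $\epsilon_k(J,h)$ are single-particle dispersions given by square roots of explicit polynomials in $J$, $h$, $\cos k$, $\sin k$. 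A $3$-AP identity $\lambda_{\boldsymbol{\sigma}^{(1)}} + \lambda_{\boldsymbol{\sigma}^{(2)}} - 2\lambda_{\boldsymbol{\sigma}^{(3)}} \equiv 0$ in $(J,h)$ then becomes a $\mathbb{Q}$-linear relation $\sum_k c_k\,\epsilon_k(J,h) \equiv 0$ with $c_k := \sigma_k^{(1)} + \sigma_k^{(2)} - 2\sigma_k^{(3)} \in \{-2,-1,0,1,2\}$. The goal is to show that any such identical relation must have every $c_k = 0$; since $c_k=0$ and $\sigma_k \in \{0,1\}$ force $\sigma_k^{(1)}=\sigma_k^{(2)}=\sigma_k^{(3)}$, this would make the three configurations coincide, contradicting the distinctness of the eigenvalues. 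Here is where the odd prime hypothesis enters: the Galois group $\operatorname{Gal}(\mathbb{Q}(\zeta_n)/\mathbb{Q})$ acts transitively on the nontrivial $n$-th roots of unity, permutes the nonzero-momentum dispersions accordingly, and any $\mathbb{Q}$-linear relation among the $\epsilon_k$ must be Galois-invariant; combined with isolating the $k=0$ mode and an asymptotic comparison as $h\to\infty$ with $J$ fixed, this forces all $c_k$ to vanish.

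The main obstacle I anticipate is the bookkeeping around the Jordan--Wigner boundary term: periodic boundary conditions induce two different momentum quantizations in the even- and odd-fermion-parity sectors, so a potential $3$-AP could mix eigenvalues from different sectors, and one must rule out cross-sector $\mathbb{Q}$-linear dependencies between the two momentum sets as well as relations within each sector. A fallback strategy, if the Galois-theoretic argument turns out to be delicate, is to avoid the global linear independence claim altogether and instead construct an explicit $(J_0,h_0)$ by first-order perturbation theory in $J$ around $H = h\sum_i Z_i$ on each degenerate $\sum_i Z_i$-level: for sufficiently small $J/h$ the perturbed spectrum can be written down to leading order and checked, by the same Galois-style argument applied only to the leading corrections, to contain no proper $3$-AP.
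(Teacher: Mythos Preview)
Your outline is headed in the right direction and lands on the same core lemma as the paper: a $\mathbb{Z}$-linear identity $\sum_k c_k\,\epsilon_k(J,h)\equiv 0$ among the single-particle energies must have all $c_k=0$, and this is where the odd-prime hypothesis enters. But the paper gets there more cheaply in two respects, and your one self-identified obstacle is exactly the place where you are working harder than needed.

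First, the paper does not redo Jordan--Wigner at all. It simply cites the closed-form spectrum from \cite{Keating_2015}: after normalizing $h=1$, the eigenvalues are $\lambda_z(J)=\sum_{k=1}^n z_k m_k(J)$ with $z\in\{\pm1\}^n$, $m_k(J)=J\mu_k-\sqrt{J^2\mu_k^2+1}$, $\mu_k=\sin(2\pi k/n)$. This is a single formula over all $2^n$ eigenvalues with no parity-sector split, so your ``main obstacle'' about cross-sector relations simply does not arise. Second, rather than packaging everything into the global symmetric polynomial $F$, the paper works triple-by-triple: each $\lambda_v(J)+\lambda_w(J)-2\lambda_z(J)$ is a real-analytic function of $J$, hence its zero set has measure zero unless it vanishes identically. (Your $F$-as-polynomial route is also valid, but note that once you normalize $h=1$ the individual eigenvalues are not polynomial in $J$, so real-analyticity is the cleaner tool.)

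For the key linear-independence claim, the paper's argument is more elementary than Galois theory: Taylor-expand $\sum_k d_k m_k(J)=0$ at $J=0$ to extract $\sum_k d_k\mu_k=0$; use that $\mu_1,\ldots,\mu_{(n-1)/2}$ are $\mathbb{Z}$-independent for prime $n$ (via \cite{lenstra1979vanishing}) together with $\mu_{n-k}=-\mu_k$ to get $d_k=d_{n-k}$; substitute back, divide by $J$, and send $J\to\infty$ to force the remaining $d_k$ to vanish. Your Galois-invariance idea would ultimately reduce to the same $\mathbb{Z}$-independence of primitive-root-of-unity expressions, so there is no real gain, and your perturbative fallback is unnecessary.
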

We can also show that perturbing any fixed Hamiltonian $H_0$ by the 1D XY+Z Hamiltonian yields a Hamiltonian whose spectrum does not contain any 3-arithmetic progression or repeated eigenvalues. This class includes several translationally invariant 1D models studied in the literature (see \cite{Keating_2015}).
\begin{proposition} \label{thm:perturbation by 1D XY+Z}
    For any given Hamiltonian $H_0,$ the spectrum of the Hamiltonian
    \[  H  :=  H_0 + J \sum_{i=1}^n X_i Y_{i+1} +  h \sum_{i=1}^n Z_i \quad \text{where}\quad Y_{n+1}= Y_1\]
    almost surely (with respect to the randomness of $J$ and $h$) 
    does not contain any 3-arithmetic progression or repeated eigenvalues, in the sense that the set of $ (J,h)\in \R^2$ such that the spectrum $ H$ has any 3-arithmetic progressions or repeated eigenvalues has Lebesgue measure $0$.
\end{proposition}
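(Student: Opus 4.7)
The plan is to mimic the polynomial-nonvanishing strategy that underlies \cref{thm:generic transverse field,thm:1D XY+Z}. Introduce the real polynomial
\[
F(J,h) := \prod_{\substack{(i,j,k)\\i,j,k \text{ distinct}}}\!\bigl(\lambda_i(J,h)+\lambda_j(J,h)-2\lambda_k(J,h)\bigr)\cdot\prod_{i<j}\bigl(\lambda_i(J,h)-\lambda_j(J,h)\bigr)^2,
\]
where $\lambda_1,\ldots,\lambda_N$ are the eigenvalues of $H(J,h)$. The first product vanishes precisely when the eigenvalue multiset contains a $3$-term arithmetic progression (possibly with common difference $0$), and the second vanishes precisely when an eigenvalue is repeated, so $F(J,h)=0$ is exactly the bad event. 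Both products are invariant under permutations of $\{1,\ldots,N\}$, so $F$ is symmetric in the $\lambda_i$; by the fundamental theorem of symmetric polynomials it is a polynomial in the elementary symmetric functions of the $\lambda_i$, which are themselves polynomial in the entries of $H(J,h)$, and those entries are affine in $(J,h)$. Hence $F$ is a polynomial on $\R^2$, and either $F\equiv 0$ or its zero set has two-dimensional Lebesgue measure zero, which is exactly what we need.

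To exclude $F\equiv 0$, I would produce a witness $(J_0,h_0)$ with $F(J_0,h_0)\neq 0$ via a large-coupling limit that absorbs the fixed operator $H_0$. By \cref{thm:1D XY+Z}, for Lebesgue-almost every $(J_0,h_0)\in\R^2$ the pure chain
\[
H^\star := J_0\sum_{i=1}^n X_iY_{i+1} + h_0\sum_{i=1}^n Z_i
\]
has simple spectrum $\mu_1,\ldots,\mu_N$ without any $3$-term arithmetic progression. Pick such a pair and consider the one-parameter family $H_t := H_0 + tH^\star$ for $t\to\infty$. Since $H^\star$ has simple spectrum, first-order perturbation theory gives $\lambda_i(t)=t\mu_i+c_i+O(1/t)$ for some bounded constants $c_i$, so
\[
\lambda_i(t)-\lambda_j(t)=t(\mu_i-\mu_j)+O(1),\quad\lambda_i(t)+\lambda_j(t)-2\lambda_k(t)=t(\mu_i+\mu_j-2\mu_k)+O(1).
\]
By the choice of $H^\star$ every relevant leading coefficient is nonzero, hence for $t$ large each factor of $F(tJ_0,th_0)$ is nonzero and $F\not\equiv 0$, completing the proof.

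The hardest ingredient is producing the witness, and the scaling trick is exactly what reduces that task to the already-established \cref{thm:1D XY+Z}; without it one would have to find a good $(J_0,h_0)$ explicitly, which appears genuinely hard. The one loose end is that \cref{thm:1D XY+Z} is stated for $n$ an odd prime, so for general $n$ one would have to construct a witness directly, most naturally by diagonalizing $H^\star$ via the Jordan--Wigner fermionization of the 1D XY+Z chain and arguing that the resulting single-particle dispersion relation admits no linear relations of the form $\mu_i+\mu_j=2\mu_k$ for generic $(J_0,h_0)$. That arithmetic/combinatorial step is where the bulk of any remaining difficulty would lie.
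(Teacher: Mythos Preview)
Your proposal is correct and follows essentially the same approach as the paper: the paper's proof is the one-line remark that ``by the same argument as in the proof of \cref{thm:generic transverse field}, the statement reduces to \cref{thm:1D XY+Z},'' i.e.\ exactly your symmetric-polynomial-plus-scaling argument. Your observation about the odd-prime hypothesis is well taken; the paper's reduction inherits that restriction from \cref{thm:1D XY+Z} without stating it explicitly in \cref{thm:perturbation by 1D XY+Z}.
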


We first prove \cref{thm:generic transverse field}, which will be a model for the other proofs in this section. 

\begin{proof}[Proof of \cref{thm:generic transverse field}]
We use a similar strategy as in \cite{HuangHarrow25}.
Let the eigenvalues of $H$ be $\lambda_1 , \lambda_2, \dots, \lambda_N.$
Let \[F(h_1, \dots, h_n) := \prod_{i,j,k \text{ distinct}} (\lambda_i + \lambda_j - 2\lambda_k)\] and \[G (h_1, \dots, h_n) := \prod_{i,j: i\neq j} (\lambda_i-\lambda_j).\] 
Note that the spectrum of $H$ has a $3$-term arithmetic progression if and only if $F = 0, $ and similarly, the spectrum $H$ has eigenvalues with multiplicities if and only if $G=0.$ Since $F$ and $G$ are symmetric under permutation of the eigenvalues of $H$, they are symmetric polynomials in the eigenvalues $\lambda_1, \dots, \lambda_N,$ and thus by the fundamental theorem of symmetric polynomials, can be written as a polynomial in terms of $ \sum_{i=1}^N \lambda_i^k = \tr(H^k)$ for $k\in [N].$ Since $\tr(H^k) $ are polynomials in $h_1, \dots, h_n$ (treating the entries of $H_0$ as fixed constants), $F$ and $G$ are polynomials in $h_1, \dots, h_n.$ Since the zeros of a multivariate polynomial are of measure zero unless the polynomial is identically zero, it suffices to find a particular $\boldsymbol{h} = (h_i)_{i=1}^n$ such that $F(\boldsymbol{h})$ and $G(\boldsymbol{h}) $ are nonzero. Pick $R> 0$, and let $\tilde{h}_i := R^{-1} h_i$ and $ \tilde{H}_{\tilde{\boldsymbol{h}}} := R^{-1} H_0 + \sum_{i=1}^n \tilde{h}_i P_i$. Then, the eigenvalues of $\tilde{H}_{\tilde{\boldsymbol{h}}}$ are $\tilde{\lambda}_i := R^{-1} \lambda_i $ for $i\in [N].$ We can rewrite
\begin{align*}
    F(\boldsymbol{h}) &= R^{N(N-1)(N-2)} \prod_{i,j,k \text{ distinct}} (\tilde{\lambda}_i+ \tilde{\lambda}_j - 2 \tilde{\lambda}_k),\\
    G(\boldsymbol{h}) &= R^{N(N-1)} \prod_{i,j: i\neq j} (\tilde{\lambda}_i- \tilde{\lambda}_j).
\end{align*}
Let the eigenvalues of the Hermitian matrix $P\in \C^{2\times 2}$ be $ c_1, c_2 \in \R,$ where $ c_1 \neq c_2,$ since $P$ is not a multiple of the identity.

Note that the eigenvalues of $\sum_{i=1}^n \tilde{h}_i P_i $ are $ \lambda_j^* = \frac{c_1-c_2}{2} \sum_{i=1}^n (2x_i -1) \tilde{h}_i + C$ where $x_1 \cdots x_n \in \set{0,1}^n$ is the binary representation of $j,$ and $C = \frac{\lambda_1+\lambda_2}{2}\sum_i\tilde{h}_i.$ 
Keeping $\tilde{\boldsymbol{h}}$ fixed while taking $R \to +\infty,$
$ \prod_{i,j,k \text{ distinct}} (\tilde{\lambda}_i+ \tilde{\lambda}_j - 2 \tilde{\lambda}_k)$ (and $ \prod_{i,j: i\neq j} (\tilde{\lambda}_i- \tilde{\lambda}_j) $, respectively) converges to $ \prod_{i,j,k \text{ distinct}} (\lambda^*_i+ \lambda^*_j - 2 \lambda^*_k)$ ($ \prod_{i,j: i\neq j} (\lambda^*_i- \lambda^*_j )$ respectively), so it suffices to exhibit a particular $\tilde{\boldsymbol{h}}$ such that $ \prod_{i,j,k \text{ distinct}} (\lambda^*_i+ \lambda^*_j - 2 \lambda^*_k)\neq 0 $ and $ \prod_{i,j: i\neq j} (\lambda^*_i- \lambda^*_j ) \neq 0,$ which is equivalent to $\sum_{i=1}^n \tilde{h}_i P_i$ not having any $3$-APs or repeated eigenvalues in its spectrum. The existence of such an $\tilde{\boldsymbol{h}}$ is guaranteed by the following proposition.
 \end{proof}
\begin{proposition}\label{prop:generic transverse field helper}
 If $ \tilde{h}_1, \dots, \tilde{h}_n$ are linearly independent over $\Z,$ then $\sum_{i=1}^n \tilde{h}_i P_i$ does not have any $3$-APs or repeated eigenvalues in its spectrum.
\end{proposition}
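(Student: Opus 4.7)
The plan is to diagonalize $\sum_{i=1}^n \tilde h_i P_i$ simultaneously using the tensor product structure, and then translate each bad coincidence (repeated eigenvalue or proper $3$-AP) into a nontrivial integer linear combination of the $\tilde h_i$'s equal to zero. Since $P$ is Hermitian and not a multiple of the identity, it has two distinct real eigenvalues $c_1 \neq c_2$ with an orthonormal eigenbasis $|v_0\rangle,|v_1\rangle$; the operators $P_1,\dots,P_n$ act on disjoint tensor factors and hence mutually commute, so the product basis $\{\,|v_{x_1}\rangle\otimes\cdots\otimes|v_{x_n}\rangle : x \in \{0,1\}^n\,\}$ simultaneously diagonalizes all of them. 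The eigenvalues of $\sum_i \tilde h_i P_i$ are therefore $\lambda^*_x = \sum_{i=1}^n \tilde h_i\, c_{x_i+1}$ for $x \in \{0,1\}^n$.

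Next I would rule out repeated eigenvalues. Suppose $\lambda^*_x = \lambda^*_y$ with $x \neq y$. Then $\sum_i \tilde h_i (c_{x_i+1}-c_{y_i+1})=0$, where each coefficient $c_{x_i+1}-c_{y_i+1}$ lies in $\{0,\pm(c_1-c_2)\}$ and at least one is nonzero (because $x \neq y$). Dividing through by $c_1-c_2 \neq 0$ gives a nontrivial integer linear relation among $\tilde h_1,\dots,\tilde h_n$, contradicting $\mathbb{Z}$-linear independence.

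For the absence of proper $3$-APs, suppose three distinct values $\lambda^*_x,\lambda^*_z,\lambda^*_y$ in the spectrum satisfy $\lambda^*_x+\lambda^*_y=2\lambda^*_z$. Then $\sum_i \tilde h_i\bigl(c_{x_i+1}+c_{y_i+1}-2c_{z_i+1}\bigr)=0$, and each coefficient is an integer multiple of $c_1-c_2$ (with multiplier in $\{-2,-1,0,1,2\}$). The key verification, which I expect to be the main obstacle but is resolved by a short case analysis, is to show that this relation is \emph{nontrivial}. Namely, $c_{x_i+1}+c_{y_i+1}-2c_{z_i+1}=0$ forces $x_i=y_i=z_i$: if $x_i=y_i\neq z_i$ then the expression is $\pm 2(c_1-c_2)\neq 0$; and if $x_i\neq y_i$ then $c_{x_i+1}+c_{y_i+1}=c_1+c_2$, so equality would require $c_{z_i+1}=(c_1+c_2)/2\notin\{c_1,c_2\}$, contradicting $c_1\neq c_2$. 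Hence all coefficients vanishing would mean $x=y=z$, in which case $\lambda^*_x,\lambda^*_z,\lambda^*_y$ are not three distinct values. So at least one coefficient is nonzero, and dividing by $c_1-c_2$ yields the desired contradiction with $\mathbb{Z}$-linear independence of $\tilde h_1,\dots,\tilde h_n$.

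Combining the two steps, no $\lambda^*_x$ coincide and no three of them form a proper arithmetic progression, which is exactly the conclusion of the proposition.
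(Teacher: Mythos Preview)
The proposal is correct and takes essentially the same approach as the paper: diagonalize in the tensor-product eigenbasis of $P$, then reduce both the repeated-eigenvalue and the 3-AP conditions to nontrivial integer linear relations among $\tilde h_1,\dots,\tilde h_n$. The only cosmetic differences are your parametrization by $x\in\{0,1\}^n$ versus the paper's $z\in\{\pm1\}^n$, and your case analysis for nontriviality in the 3-AP step versus the paper's one-line observation that $v\neq w$ forces some coordinate with $v_i+w_i-2z_i=-2z_i\neq 0$.
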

\begin{proof}
The eigenvalues of $\sum_{i=1}^n \tilde{h}_i P_i$ are $\frac{c_1-c_2}{2} \sum_{i=1}^n z_i \tilde{h}_i  +C $ for $z\in \set{\pm 1}^n$ and $C = \frac{c_1+c_2}{2}\sum_i\tilde{h}_i,$ where $c_1 ,c_2\in \R$ are the eigenvalues of $P, $ and note that $c_1 \neq c_2$ since $P$ is not a multiple of the identity.

 Suppose for contradiction that $ \sum_{i=1}^n P_i \tilde{h}_i $ has repeated eigenvalues. Then there exist distinct $v,z\in \set{\pm 1}^n$ such that 
\[ 0 = \sum_{i=1}^n v_i \tilde{h}_i - \sum_{i=1}^n z_i \tilde{h}_i = \sum_{i=1}^n (v_i - z_i) \tilde{h}_i.\]
Since $ v, z$ in are distinct vectors in $\set{\pm 1}^n$, $ (v_i-z_i)_{i=1}^n\in \set{0,\pm 2}^n \subseteq \Z^n$ and is not the zero vector, so $ (\tilde{h}_i)_{i=1}^n$ are not linearly independent over $\Z,$ a contradiction.

Suppose for contradiction that there exists a $3$-AP, i.e., there exist (pairwise) distinct $v, w, z \in \set{\pm 1}^n$ such that 
\[ 0 = \sum_{i=1}^n v_i \tilde{h}_i + \sum_{i=1}^n w_i \tilde{h}_i - 2 \sum_{i=1}^n z_i \tilde{h}_i = \sum_{i=1}^n (v_i + w_i -2 z_i) \tilde{h}_i.\]
Since $v$ and $w$ are distinct, there exists $i$ such that $ v_i \neq w_i,$ and thus $v_i + w_i -2z_i= 1 + (-1) -2 z_i= -2z_i \neq 0.$ So $(v_i+w_i-2z_i)_{i=1}^n\in \set{0,\pm 2,\pm 4}^n \subseteq \Z^n$ and is not the zero vector, contradicting that $ (\tilde{h}_i)_{i=1}^n$ are linearly independent over $\Z.$ 
\end{proof}

\begin{proof}[Proof of \cref{thm:generic quadratic term}]
 By the same argument as in proof of \cref{thm:generic transverse field}, the statement reduces to the following proposition.
\end{proof}
\begin{proposition}
 If $G$ is connected and $ (A_{ij})_{\{i,j\} \in E(G)}$ are linearly independent over $\Z,$ then the spectrum of $ \sum_{\{i, j\}\in E(G)} A_{ij} X_i X_j $ does not have any 3-term arithmetic progression.
 \end{proposition}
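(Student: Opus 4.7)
My plan is to follow the same template as in the proof of Proposition~5.4 (the transverse-field helper), exploiting the crucial fact that the operators $X_iX_j$ all mutually commute. First I would diagonalize $H := \sum_{\{i,j\}\in E(G)} A_{ij} X_i X_j$ in the product basis $\{\ket{z} : z \in \{\pm 1\}^n\}$ of $X$-eigenvectors, obtaining eigenvalues $\lambda_z = \sum_{\{i,j\}\in E(G)} A_{ij}\, z_iz_j$ with the built-in $\mathbb{Z}_2$ symmetry $\lambda_z = \lambda_{-z}$. Since every eigenvalue therefore has multiplicity at least $2$, the target of the argument is to rule out \emph{proper} $3$-term APs in the spectrum.

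Next, I would suppose for contradiction that $\lambda_v + \lambda_w = 2\lambda_z$ for some $v,w,z \in \{\pm 1\}^n$ with $\lambda_v \neq \lambda_w$. The symmetry $\lambda_{-v} = \lambda_v$ forces $v \neq \pm w$, and substituting the explicit formula turns the AP equation into
\[
\sum_{\{i,j\}\in E(G)} A_{ij}\,(v_iv_j + w_iw_j - 2z_iz_j) = 0,
\]
an integer linear combination of the $A_{ij}$'s with coefficients in $\{0,\pm 2,\pm 4\}$. Linear independence of $(A_{ij})$ over $\Z$ then forces every coefficient to vanish. Since each of $v_iv_j$, $w_iw_j$, $z_iz_j$ lies in $\{\pm 1\}$, it is elementary that this vanishing is equivalent to $v_iv_j = w_iw_j = z_iz_j$; in particular $v_iv_j = w_iw_j$ for every edge $\{i,j\} \in E(G)$.

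The hard step, and the point at which connectedness of $G$ is essential, is concluding $w = \pm v$ from this edge-wise constraint. Setting $t_i := v_iw_i \in \{\pm 1\}$, the condition $v_iv_j = w_iw_j$ rewrites as $t_i = t_j$, so $t$ is constant along every edge of $G$; connectedness of $G$ then makes $t$ constant on $[n]$, giving $w = v$ or $w = -v$ and contradicting $v \neq \pm w$. Connectedness cannot be dropped: on a disconnected $G$ one could flip the signs of $v$ on a single component to obtain $w \neq \pm v$ with identical edge correlations $w_iw_j = v_iv_j$, producing accidental degeneracies $\lambda_w = \lambda_v$ that can easily seed $3$-APs.
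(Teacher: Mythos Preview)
Your approach is essentially the same as the paper's: diagonalize in a product basis (the paper passes to $\sum A_{ij} Z_iZ_j$ by unitary equivalence, you stay in the $X$-eigenbasis), write the AP equation as an integer linear combination of the $A_{ij}$'s, use linear independence over $\Z$ to force $v_iv_j = w_iw_j = z_iz_j$ on every edge, and then exploit connectedness of $G$. Your device $t_i := v_iw_i$ (constant along edges, hence globally constant) is a clean variant of the paper's pigeonhole-at-one-vertex-then-propagate argument.

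There is one small gap. You restrict from the outset to \emph{proper} 3-APs by assuming $\lambda_v \neq \lambda_w$, on the grounds that every eigenvalue already has multiplicity $\geq 2$. But the proposition as stated---and the reduction that uses it, via the polynomial $F = \prod_{i,j,k \text{ distinct}} (\lambda_i + \lambda_j - 2\lambda_k)$---requires ruling out \emph{all} 3-APs with pairwise distinct indices, including the degenerate case $\lambda_v = \lambda_w = \lambda_z$ (equivalently, showing no eigenvalue has multiplicity $\geq 3$). Multiplicity $\geq 2$ does not give a non-proper 3-AP for free; that needs multiplicity $\geq 3$, which is precisely what must still be excluded. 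The fix is immediate with your own tool: applying the $t$-argument to the pair $(v,z)$ as well yields $z = \pm v$, so $\{v,w,z\} \subseteq \{v,-v\}$, contradicting pairwise distinctness. The paper's proof handles both cases at once by starting directly from pairwise distinct $v,w,z$ (with no assumption on the $\lambda$'s).
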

 \begin{proof}
 
 Note that $ \sum_{\{i, j\}\in E(G)} A_{ij} X_i X_j$ is equivalent to $ \sum_{\{i, j\}\in E(G)} A_{ij} Z_i Z_j$ by a unitary transformation, so we only need to work with $ \sum_{\{i, j\}\in E(G)} A_{ij} Z_i Z_j.$ 
 
 The computational basis forms an eigenbasis of $ \sum_{\{i, j\}\in E(G)} A_{ij} Z_i Z_j$ where the eigenvalue corresponding to $ y\in \set{0,1}^n$ is $\sum_{\{i, j\}\in E(G)} A_{ij} (2y_i-1) (2 y_j-1) .$ The eigenvalues of $ \sum_{\{i, j\}\in E(G)} A_{ij} Z_i Z_j $ are thus $ \sum_{\{i, j\}\in E(G)} A_{ij} z_i z_j $ for $z\in \set{\pm 1}^n.$ Suppose for contradiction that there exists a 3-AP, i.e.\ there exist pairwise distinct $ v,w,z \in\set{\pm 1}^n$ such that 
 \[ 0 = \sum_{\{i,j\}\in E(G)} v_i v_j A_{ij} + \sum_{\{i,j\}\in E(G)} w_i w_j A_{ij} - 2 \sum_{\{i,j\}\in E(G)} z_i z_j A_{ij}= \sum_{\{i,j\}\in E(G)} (v_i v_j + w_i w_j - 2 z_i z_j) A_{ij}.\]
 It is easy to see that $ v_i v_j + w_i w_j - 2z_i z_j \in \{0,\pm 2, \pm 4\},$ so for linearly independent $(A_{ij}),$ the above is true iff $v_i v_j + w_i w_j - 2 z_i z_j = 0$ for all $\{i,j\} \in E(G). $ This implies that $v_i v_j =w_i w_j = z_i 
 z_j$ for all $\{i,j\} \in E(G).$ 
 
 Indeed, suppose there exists $\{i,j\}\in E(G)$ such that $ v_i v_j \neq w_i w_j $ then since $v_i v_j, w_i w_j \in \{\pm 1\},$ $v_i v_j + w_i w_j -2 z_i z_j= 0 -2z_i z_j \neq 0 ,$ a contradiction. Thus for all $\{i,j\}\in E(G):$ $v_i v_j = w_i w_j ,$ which then implies $v_i v_j = w_i w_j = z_i z_j$ since $v_i v_j + w_iw_j - 2z_i z_j = 0. $

 Fix an arbitrary $i\in[n].$ By the pigeonhole principle, since $v_i, w_i , z_i \in \{\pm 1\},$ either $ v_i =w_i $ or $v_i =z_i.$ Wlog assume $v_i = w_i.$ Then for any $j $ such that $ \{i,j\} \in E(G),$ $v_j = w_j$ (since $v_i v_j = w_i w_j$). By this same argument, any $ j$ such that there is a path in $G$ from $i$ to $j$ has $v_j = w_j,$ and this includes all vertices in $G$ since $G$ is connected. Thus $v$ is identical to $w,$ contradicting that $v$ and $w$ are distinct vectors in $\{\pm 1\}^n.$
 \end{proof}

\begin{proof}[Proof of \cref{thm:generic higher order term}]
 By the same argument as in proof of \cref{thm:generic transverse field}, the statement reduces to the following proposition.
\end{proof}
 \begin{proposition}
 If $ (A_{S})_{S \in \binom{[n]}{k} }$ are linearly independent over $\Z,$ then the spectrum of $ \sum_{S \in \binom{[n]}{k} } A_{S} \prod_{i \in S}X_i $ does not have any 3-term arithmetic progression.
 \end{proposition}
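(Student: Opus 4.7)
The plan is to reduce to a diagonal operator and then to a combinatorial identity ruled out by $\mathbb{Z}$-linear independence. First I would conjugate by $H^{\otimes n}$ (Hadamard on each qubit); this sends each $X_i$ to $Z_i$, preserves the spectrum, and makes the operator diagonal in the computational basis with eigenvalues
\[
\lambda_z \;=\; \sum_{S \in \binom{[n]}{k}} A_S\, \chi_S(z), \qquad \chi_S(z) := \prod_{i \in S} z_i,
\]
indexed by $z \in \{\pm 1\}^n$. Suppose for contradiction that distinct $u, v, w \in \{\pm 1\}^n$ give $\lambda_u + \lambda_v = 2\lambda_w$. Then
\[
\sum_{S \in \binom{[n]}{k}} A_S\bigl(\chi_S(u) + \chi_S(v) - 2\chi_S(w)\bigr) \;=\; 0,
\]
with each coefficient in $\{-4,-2,0,2,4\} \subseteq \mathbb{Z}$. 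By $\mathbb{Z}$-linear independence of $(A_S)$, every coefficient vanishes, and since $\chi_S(u), \chi_S(v) \in \{\pm 1\}$ while $2\chi_S(w) \in \{\pm 2\}$, the only possibility is
\[
\chi_S(u) \;=\; \chi_S(v) \;=\; \chi_S(w) \qquad \text{for every } S \in \tbinom{[n]}{k}.
\]

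The core combinatorial step is a lemma: if $t \in \{\pm 1\}^n$ satisfies $\prod_{i \in S} t_i = 1$ for every $|S|=k$, then (assuming $n \geq k+1$) $t$ is constant, with common value $c$ satisfying $c^k = 1$. I would prove this by comparing two $k$-subsets $S_1, S_2$ that differ in exactly one coordinate ($S_1 \triangle S_2 = \{i, j\}$): dividing their products forces $t_i = t_j$, and iterating yields constancy. Applying the lemma to $t_i := u_i v_i$ then gives $v = cu$ with $c^k = 1$.

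If $c = 1$ (forced when $k$ is odd) then $u = v$, a contradiction. The one remaining case is $k$ even with $v = -u$. Here I would apply the same lemma to the pair $(u, w)$ using $\chi_S(u) = \chi_S(w)$: this forces $w \in \{\pm u\}$; since $w \neq u$ we get $w = -u = v$, contradicting $w \neq v$. The main obstacle is precisely this even-$k$ case, where $\chi_S$ fails to distinguish $u$ from $-u$; it is resolved by noting that three distinct points cannot all lie in a single $\pm$-orbit. A minor caveat: the argument requires $n \geq k+1$ (in the degenerate case $n=k \geq 3$ the spectrum has only two values with high multiplicity and the conclusion actually fails); this appears to be an implicit assumption of the proposition, and it is the regime in which the enclosing \cref{thm:generic higher order term} is of interest.
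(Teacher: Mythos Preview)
Your argument is correct and follows the same overall arc as the paper's: conjugate by $H^{\otimes n}$ to diagonalize, use $\mathbb{Z}$-linear independence to force $\chi_S(u)=\chi_S(v)=\chi_S(w)$ for every $S\in\binom{[n]}{k}$, and then invoke a combinatorial lemma to conclude that $u,v,w$ all lie in a single $\{\pm\}$-orbit of size at most two, contradicting distinctness.

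The only genuine difference is in how that lemma is established. The paper proves by induction on $k$ that agreement on all degree-$k$ characters forces $v=w$ when $k$ is odd and $v\in\{w,-w\}$ when $k$ is even, peeling off one coordinate at a time. Your proof is the direct one: comparing the two $k$-sets $T\cup\{i\}$ and $T\cup\{j\}$ immediately gives $t_i=t_j$, hence $t$ is constant. This is shorter and makes the role of the hypothesis $n\geq k+1$ transparent (it is exactly the condition for such a $T\subseteq[n]\setminus\{i,j\}$ of size $k-1$ to exist). Your caveat about $n=k\geq 3$ is well taken and applies equally to the paper's proof: the inductive step there also tacitly uses a subset $S\in\binom{[n]\setminus\{i^*\}}{k}$, which is empty when $n=k$.
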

 \begin{proof}
 Note that \(\sum_{S \in \binom{[n]}{k}} A_S \prod_{i \in S} X_i\) is unitarily equivalent to \(\sum_{S \in \binom{[n]}{k}} A_S \prod_{i \in S} Z_i\), so it suffices to work with the latter. The computational basis forms an eigenbasis for \(\sum_{S \in \binom{[n]}{k}} A_S \prod_{i \in S} Z_i\), with eigenvalues indexed by \(y \in \{0,1\}^n\) given by \(\sum_{S \in \binom{[n]}{k}} A_S \prod_{i \in S} (2 y_i - 1)\). Equivalently, the eigenvalues are \(\sum_{S \in \binom{[n]}{k}} A_S \prod_{i \in S} z_i\) for \(z \in \{\pm 1\}^n\).

Suppose for contradiction that there exists a 3-AP,i.e. then by the same argument as in the proof of \cref{thm:generic quadratic term}, there exists pairwise distinct $ v,w,z \in\set{\pm 1}^n$ such that for all $ S\in \binom{[n]}{k}$, $\prod_{i\in S} v_i = \prod_{i\in S} w_i =\prod_{i\in S} z_i.$
We prove the following claim by induction on $k\geq 1,$ and concludes that the above implies either $v=w $ or $v=z,$ contradicting that $v,w,z$ are pairwise distinct. 
 \begin{claim}
 Let $v,w\in \{\pm 1\}^n$ such that for all $S\in\binom{[n]}{k}$, $\prod_{i\in S} v_i = \prod_{i\in S} w_i.$ If $k$ is odd then $v_i =w_i$ for all $i\in [n].$ If $k$ is even then either $v=w$ or $v = -w.$
 \end{claim}

 The base case $k = 1$ is trivially true. Consider $k\geq 2$, and suppose we have proved the claim for all $k'\leq k-1.$ There are two cases, $k$ is odd and $k$ is even.
 
 Suppose $k$ is odd. 
 If $ v_i\neq w_i$ for all $i\in [n],$ then since $v_i, w_i\in\{\pm1\},$ it must be that $v_i = -w_i$ for all $i \in [n],$ but this implies $ \prod_{i\in S} v_i = (-1)^k\prod_{i\in S} w_i = -\prod_{i\in S} w_i \neq \prod_{i\in S} w_i,$ a contradiction. Hence, there exists $ i^*$ such that $ v_{i^*} = w_{i^*}.$ This means for any $T\subseteq \binom{[n]\setminus \{i^*\} }{k-1},$ by factoring out $v_{i^*} = w_{i^*}$ from $\prod_{i\in T\cup \{i^*\}} v_{i} =\prod_{i\in T\cup \{i^*\}} w_{i}, $ we have $ \prod_{i\in T} v_i = \prod_{i\in T} w_i.$ Apply the induction hypothesis for $k-1$, and note that $k-1$ is even, we have that either $ v_i = w_i$ for all $i\in [n]$ or $v_i =-w_i$ for all $i\in [n]\setminus \{i^*\}.$ In the second case, for any $S\subseteq \binom{[n]\setminus \{i^*\}}{k},$ $ \prod_{i\in S} v_i =- \prod_{i\in S} w_i \neq \prod_{i\in S} w_i$ (a contradiction). So the first case i.e. $v_i=w_i$ for all $i\in [n]$ holds.

 Suppose $k$ is even. Fix an arbitrary index $i^*,$ then there are two cases: either $ v_{i^*} = w_{i^*}$ or $v_{i^*} = - w_{i^*}.$ If $v_{i^*} = w_{i^*}$ then by the same argument as above, $ \prod_{i\in T} v_i =\prod_{i\in T} w_i$ for any $T\in \binom{[n]\setminus \{i^*\}}{k-1},$ which, by the inductive hypothesis, implies that $v_i = w_i$ for all $i\in [n].$ For the second case, $ v_{i^*} = -w_{i^*}.$ Let $\tilde{v}_i = -v_i$ for all $i,$ then $\tilde{v}_{i^*}=w_{i^*}$ and $\prod_{i\in T} \tilde{v}_i = \prod_{i\in T} w_i$ for any $T\in \binom{[n]\setminus \{i^*\}}{k-1},$ thus $ \tilde{v}_i =w_i$ for all $i\in [n]$ by the induction hypothesis.
 \end{proof}

\begin{proof}[Proof of \cref{thm:1D XY+Z}]
 We normalize the Hamiltonian so that $h =1 .$
 By \cite{Keating_2015}, the eigenvalues of $ H$ are $\lambda_z (J)= \sum_{k=1}^n z_k m_k(J) $ where $m_k(J) = J \mu_k- \sqrt{J^2 \mu_k^2 +1}$ and $ \mu_k = \sin\left(\frac{2\pi k}{n}\right),$ for $z\in \{\pm 1\}^n.$ Let $n$ be an odd prime.



The set of \(J \in \mathbb{R}\) for which \(H\) contains a 3-term arithmetic progression (3-AP) is
\[
\bigcup_{\substack{v,w,z \in \{\pm 1\}^n \\ \text{distinct}}} \{ J \in \mathbb{R} \mid \lambda_v(J) + \lambda_w(J) - 2 \lambda_z(J) = 0 \}.
\]
Thus, it suffices to show that for each distinct triple \(v,w,z \in \{\pm 1\}^n\), the set
\[
\{ J \in \mathbb{R} \mid \lambda_v(J) + \lambda_w(J) - 2 \lambda_z(J) = 0 \}
\]
has measure zero. Since \(\lambda_v(J) + \lambda_w(J) - 2 \lambda_z(J)\) is a real-analytic function of \(J\), its zero set has measure zero unless the function is identically zero.

 Suppose that $ \lambda_{v} (J) +\lambda_z(J) -2\lambda_w(J)$ is identically $0,$ then for $d_k = \frac{v_k+w_k-z_k}{2}\in \set{0,\pm 1,\pm 2},$ we have: 
 \begin{equation}\label{eq:1D uniform external field main}    \forall J: \sum_{k=1}^n d_k m_k(J) = 0
 \end{equation}
 We will show the following claim:
 \begin{claim}
     If \cref{eq:1D uniform external field main} holds for some $d_1, \cdots, d_n \in \Z$ then $d_1 = \cdots = d_n =0.$
 \end{claim}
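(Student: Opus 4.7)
The plan is to exploit the special structure of $m_k(J)=J\mu_k-\sqrt{J^2\mu_k^2+1}$ together with the primality of $n$ in two separate steps: first reduce the identity to a $\mathbb Q$-linear relation among the $\sin(2\pi k/n)$, then invoke cyclotomic theory to kill it.

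\textbf{Step 1 (separating algebraic parts).} Since $\mu_n=\sin(2\pi)=0$, we have $m_n(J)=-1$; and for $k\neq n$, $\mu_{n-k}=-\mu_k$ so $\mu_k^2=\mu_{n-k}^2$. Pairing indices $k$ with $n-k$ for $k\in\{1,\ldots,(n-1)/2\}$ and writing $f_k(J):=\sqrt{J^2\mu_k^2+1}$, the hypothesis becomes
\begin{equation*}
-d_n\;+\;J\!\!\sum_{k=1}^{(n-1)/2}\!\mu_k(d_k-d_{n-k})\;-\;\sum_{k=1}^{(n-1)/2}(d_k+d_{n-k})\,f_k(J)\;\equiv\;0.
\end{equation*}
I would then argue that $1,\,J,\,f_1(J),\ldots,f_{(n-1)/2}(J)$ are $\mathbb R$-linearly independent as functions of $J$. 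Because $n$ is an odd prime, the identity $\sin^2(2\pi k/n)=\sin^2(2\pi j/n)$ forces $k\equiv\pm j\pmod n$, so the values $\mu_1^2,\ldots,\mu_{(n-1)/2}^2$ are pairwise distinct and positive. Hence the polynomials $J^2\mu_k^2+1\in\mathbb R[J]$ are distinct irreducibles, so every nonempty product of them is squarefree in $\mathbb R[J]$ and not a square in $\mathbb R(J)$. A standard Galois/Kummer argument (each $f_k$ generates a quadratic extension of $\mathbb R(J)$; their compositum has degree $2^{(n-1)/2}$) gives $\mathbb R(J)$-linear independence of the $f_k$, and $J$ has degree $1$ over $\mathbb R$ so it is independent of the $f_k$ and of $1$. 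Matching coefficients in the displayed identity yields $d_n=0$, $d_k+d_{n-k}=0$ for every $k\le(n-1)/2$, and $\sum_{k=1}^{(n-1)/2}\mu_k(d_k-d_{n-k})=0$. Substituting $d_{n-k}=-d_k$ into the last equation collapses it to
\begin{equation*}
\sum_{k=1}^{(n-1)/2}\sin(2\pi k/n)\,d_k\;=\;0,\qquad d_k\in\mathbb Z.
\end{equation*}

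\textbf{Step 2 (cyclotomic independence).} It remains to show that $\sin(2\pi/n),\sin(4\pi/n),\ldots,\sin(2\pi\cdot\tfrac{n-1}{2}/n)$ are linearly independent over $\mathbb Q$ when $n$ is an odd prime. Multiplying by $2i$ turns a relation $\sum c_k\sin(2\pi k/n)=0$ into $\sum_{k=1}^{(n-1)/2}c_k(\zeta_n^k-\zeta_n^{n-k})=0$, i.e.\ $\sum_{k=1}^{n-1}a_k\zeta_n^k=0$ with $a_k=c_k$ for $k\le(n-1)/2$ and $a_k=-c_{n-k}$ for $k\ge(n+1)/2$. Because $n$ is prime, $\Phi_n(x)=1+x+\cdots+x^{n-1}$ is the minimal polynomial of $\zeta_n$, so $\{\zeta_n,\zeta_n^2,\ldots,\zeta_n^{n-1}\}$ is a $\mathbb Q$-basis of $\mathbb Q(\zeta_n)$; all $a_k$ therefore vanish, forcing $c_k=0$. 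Applying this with $c_k=d_k$ finishes the proof that all $d_k=0$.

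\textbf{Anticipated difficulty.} Step 1's separation argument is the most delicate: one has to justify that the square-root functions behave like genuinely independent "radicals" over $\mathbb R(J)$, not merely asymptotically. The cleanest route is the Kummer-style observation above (distinct irreducible radicands $\Rightarrow$ independent quadratic extensions), but one must be careful to include the linear polynomial $J$ (which lies in $\mathbb R(J)$, outside the radical extension) and the constant $1$ in the same independence statement; this is automatic once one works with the $2^{(n-1)/2}$-dimensional $\mathbb R(J)$-basis $\{1,f_{k_1}\cdots f_{k_s}\}_{S\subseteq\{1,\ldots,(n-1)/2\}}$ and uses that $\mathbb R(J)$ itself contains $J$. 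Step 2 is classical and short once the reduction to the cyclotomic basis is written out.
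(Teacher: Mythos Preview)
Your proof is correct, and your Step 2 is essentially the same cyclotomic argument the paper uses (the paper cites Lenstra for the independence of $\{\sin(2\pi k/n)\}_{k\le (n-1)/2}$ over $\mathbb{Z}$; you spell it out via the basis $\zeta_n,\ldots,\zeta_n^{n-1}$ of $\mathbb{Q}(\zeta_n)$).

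Step 1, however, is genuinely different. The paper never invokes Kummer theory or function-field extensions; instead it extracts two scalar relations by purely analytic means: the $O(J)$ term of the Taylor expansion at $J=0$ gives $\sum_k \mu_k d_k=0$, hence $d_k=d_{n-k}$ via the cyclotomic independence; then substituting back, dividing by $J$, and sending $J\to+\infty$ collapses the equation to $\sum_{k\le(n-1)/2} d_k\mu_k=0$, which (again by independence) forces $d_k=0$. Note that the two routes even pass through \emph{opposite} intermediate symmetries: the paper first deduces $d_k=d_{n-k}$ from the linear part and only later kills it, whereas your $f_k$-coefficient reading gives $d_k+d_{n-k}=0$ immediately. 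Your algebraic separation is cleaner and more conceptual---once the radicands are shown to be distinct irreducibles, the Kummer argument delivers the full $\mathbb{R}(J)$-linear independence in one stroke and would generalize better to other models. The paper's approach is more elementary (only Taylor series and a limit), avoids any appeal to field extensions, and is slightly shorter in this particular case.
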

Given this claim, we conclude that $ 0=2d_k =  v_k+w_k-z_k\forall k\in [n].$ But, since $v,w$ are distinct, there exists $k\in [n]$ such that $ v_k \neq w_k,$ in which case, $ 2 d_k = v_k+w_k -2z_k = -2z_k \neq 0,$ a contradiction.

Now, we verify the claim. 

 First, observe that for odd prime $n$, $\mu_1,\cdots, \mu_{(n-1)/2}$ are linearly independent over $\Z.$ Suppose for contradiction that there exists $ a_1, \cdots, a_{(n-1)/2}\in \Z$ such that $0 = \sum_{j=1 }^{(n-1)/2}  a_j \mu_j.$ Since each $\mu_j$ can be rewritten as $\mu_j = \frac{\omega^j -\omega^{-j}}{2i} $ where $\omega = e^{2i\pi/n},$ this would imply that $\set{\omega_j}_{j=1}^{n-1}$ is linearly dependent over $\Z$, a contradiction (see \cite[Lemma 2.11]{lenstra1979vanishing}).

 For small $J,$ Taylor series around $J =0$ for \cref{eq:1D uniform external field main} gives
 \[0=- 2\sum_k d_k + 2J \sum_{k} \mu_k d_k -J^2 \sum_{k} \mu_k^2 d_k + O(J^3). \]
This implies:
\[ 0 =\sum_{k=1}^n \mu_k d_k \]
Using the facts that
\[\mu_n =\sin(2\pi) =0\quad \text{and} \quad \forall k\in [n-1]: \mu_k =  \sin\left(\frac{2\pi k}{n}\right) = -  \sin\left(\frac{2\pi (n-k)}{n}\right)=-\mu_{n-k},\]
we can rewrite
\[  0 =\sum_{k=1}^n \mu_k d_k = \sum_{k=1}^{(n-1)/2} \mu_k (d_k - d_{n-k}) \]
Since the set $\set{\mu_k}_{k=1}^{(n-1)/2}$ is linearly independent over $\Z,$ this implies \[d_k = d_{n-k}=0\forall k\in\set{1, \cdots, (n-1)/2}.\] Substituting this into \cref{eq:1D uniform external field main}, and using $m_n(J) =  -1$ and $\mu_k = -\mu_{n-k},$ we get: 
\begin{align*}
    0 &= \sum_{k=1}^n d_k m_k(J) \\
    &=-d_n+\sum_{k=1}^{n-1} J\mu_k d_k - \sum_{k=1}^{n-1} d_k \sqrt{J^2 \mu_k^2 + 1}    \\
    &= -d_n +J \sum_{k=1}^{(n-1)/2} d_k (\mu_k + \mu_{n-k})- 2\sum_{k=1}^{(n-1)/2} d_k \sqrt{J^2 \mu_k^2 + 1} \\
    &= -d_n -  2\sum_{k=1}^{(n-1)/2} d_k \sqrt{J^2 \mu_k^2 + 1} 
\end{align*}
This equation holds for all $J.$ Dividing both sides by $ J$ yields:
\[ 0 = -\frac{d_n}{J} - 2\sum_{k=1}^{(n-1)/2} d_k \sqrt{\mu_k^2 + (1/J)^2} \]
Taking the limit $J \to +\infty,$ and using the fact that $\mu_k =  \sin\left(\frac{2\pi k}{n}\right) \geq 0\forall k \in \set{1, \cdots, (n-1)/2},  $ we obtain:
\[ 0 = -2 \sum_{k=1}^{(n-1)/2} d_k\mu_k  \]
Again, using the linear independence of $\set{\mu_k}_{k=1}^{(n-1)/2}$ over $\Z,$ we conclude $  d_k =0 \forall \in\set{1, \cdots, (n-1)/2}.$ Since $d_k = d_{n-k}\forall k\in\set{1, \cdots, (n-1)/2},$ this implies $d_k =0\forall k\in [n-1].$ Finally, substituting into \cref{eq:1D uniform external field main} and using $m_n(J) =-1$ yields $d_n = 0$, which finishes the claim.

Now, we prove that the set of $J$ s.t. $ H$ contains repeated eigenvalues has measure $0$\footnote{This was shown in \cite{Keating_2015}, but we include a proof for the sake of completeness}. By the same argument above, we only need to verify that for $ v, w \in \set{\pm 1}^n,$ if $  \lambda_{v} (J) -\lambda_w(J) $ is identically $0$ then $v=w.$ Indeed, letting $d'_k  = \frac{v_k -w_k}{2}\in \set{0, \pm 1}\subseteq \Z,$ if
\[  \lambda_v(J) - \lambda_w(J) =2\sum_{k=1}^n d'_k m_k(J) \]
is identically $0$ then the above claim implies $d'_k =0\forall k\in [n]$ and thus $v=w.$
\end{proof}

\begin{proof}[Proof of \cref{thm:perturbation by 1D XY+Z}]
    By the same argument as in proof of \cref{thm:generic transverse field}, the statement reduces to \cref{thm:1D XY+Z}.
\end{proof}

\section*{Acknowledgements}
We thank Thiago Bergamaschi, Lin Lin, Michael Ragone, Kevin Stubbs, and James Sud for insightful discussions.
JB, NS, and ZS are funded by NSF Grant CCF-2420130. ZS was also supported by NSF Grant DGE-2146752. AS was funded in part by NSF grant CCF-2231095. TDV is supported by the Miller Institute for Basic research in Science. SG is supported in part by NSF Career grant 1945172 and the Miller Institute of Basic research in Science.

\bibliographystyle{plainnat}  
\bibliography{main}
\appendix

\section{Deferred proofs} \label{sec:missing proofs}
\begin{proof}[Proof of \cref{prop:simple comparison result when orbit has dimension 1}]
    Since $\mathcal{L}(\ket{u_i}\bra{u_j})\subseteq \text{span} (\ket{u_i}\bra{u_j}) ,$ we can write $ \mathcal{L}(\ket{u_i}\bra{u_j})= c_{i,j} \ket{u_i}\bra{u_j}  $ for some $c_{i,j}\in \C.$

We use the same notations as in \cref{def:classical Markov generator}.
    Since for $i\neq j,$ each $\text{span} (\ket{u_i}\bra{u_j})$ forms an invariant subspace under $\mathcal{L},$ we only need to lower bound the spectral gap in this subspace. Note that for any observable $f$ in this subspace, we can write $f = c\ket{u_i}\bra{u_j}$ with $ c\in \C,$ $ \mathcal{E}_\mathcal{L}(f) =-\sqrt{\pi(u_i)\pi(u_j)} \lvert c\rvert^2 c_{ij} $ and $ \Var(f) = \sqrt{\pi(u_i)\pi(u_j)} \lvert c\rvert^2 ,$ so if $\pi(i)\pi(j)=0$ then $ \lambda(f) =+\infty$ and we are done. From now on, assume $\pi(u_i),\pi(u_j)\neq 0.$ In this case, the spectral gap in this subspace is $ -c_{ij}.$
    
    Let $E_i$ be the eigenvalue corresponding to eigenvector $u_i.$
One can directly calculate (see also \cite[Theorem 11]{temme13}):
\begin{align*}
    -c_{ij} &= -\langle\ket{u_i}\bra{u_j} , \mathcal{L}(\ket{u_i}\bra{u_j})\rangle \\
    &= \sum_{S} -\left(G(0) \bra{u_i} S^\dag\ket{u_i} \bra{u_j} S\ket{u_j}- \frac{1}{2}(\sum_{k} G(E_k - E_j) \lvert\bra{u_j} S \ket{u_k}\rvert^2 + \sum_{k} G(E_k - E_i) \lvert\bra{u_i} S \ket{u_k}\rvert^2\right)  \\
    &= \frac{1}{2} G(0)\sum_S\lvert\bra{u_i} S \ket{u_i} - \bra{u_j} S \ket{u_j}\rvert^2   + \frac{1}{2}\left(\sum_{k\neq j} G(E_k - E_j)\sum_S \lvert\bra{u_j} S \ket{u_k}\rvert^2 + \sum_{k\neq i} G(E_k - E_i) \sum_S \lvert\bra{u_i} S \ket{u_k}\rvert^2\right) \\
    &= \frac{1}{2} G(0)\sum_S\lvert\bra{u_i} S \ket{u_i} - \bra{u_j} S \ket{u_j}\rvert^2  +\frac{1}{2}(\sum_{k\neq j} P_{\mathcal{L},U} [u_j \to u_k] +\sum_{k\neq i} P_{\mathcal{L},U} [u_i \to u_k] ) \\
    &\geq \frac{1}{2}(\sum_{k\neq j} P_{\mathcal{L},U} [u_j \to u_k] +\sum_{k\neq i} P_{\mathcal{L},U} [u_i \to u_k] )\\
    &\geq \frac{1}{2} \lambda_{\mathcal{L},U,\text{cl}},
\end{align*}
which finishes the claim.

 To prove the last inequality, consider the test function $F : U\to \R$ where $F (u_k) = \textbf{1}[k=i],$ and noting that the classical Markov generator is reversible (see proof of \cref{prop:classical chain V0}), then
 \[ \mathcal{E}_{\mathcal{L}, U, \text{cl}}(F) = \frac{1}{2}(\sum_{k\neq i}\pi(u_i) P_{\mathcal{L},U} (u_i \to u_k) + \sum_{k\neq i}\pi(u_k) P_{\mathcal{L},U} (u_k \to u_i)) = \pi(u_i) \sum_{k\neq i}P_{\mathcal{L},U} (u_i \to u_k) \]
 and
 \[\Var_\pi(F) = \pi(u_i)(1-\pi(u_i));\]
 thus
 \[ \sum_{k\neq i}P_{\mathcal{L},U} (u_i \to u_k) \geq (1-\pi(u_i))\lambda_{\mathcal{L},U,\text{cl}}.\]
 Similarly, 
 \[\sum_{k\neq j} P_{\mathcal{L},U} [u_j \to u_k]\geq (1-\pi(u_j))\lambda_{\mathcal{L},U,\text{cl}},\]
 so
 \[\sum_{k\neq j} P_{\mathcal{L},U} [u_j \to u_k] +\sum_{k\neq i} P_{\mathcal{L},U} [u_i \to u_k]  \geq (2-\pi(u_i)-\pi(u_j)) \lambda_{\mathcal{L},U,\text{cl}} \geq \lambda_{\mathcal{L},U,\text{cl}}.\]
    

     When $ H$ has simple spectrum and simple Bohr spectrum, each subspace $V_\omega$ has dimension $1,$ so the precondition holds and $H$ has a unique orthonormal eigenbasis, so $\lambda_{\mathcal{L}}\geq \frac{1}{2}\lambda_{\mathcal{L},\text{cl}}$. 
\end{proof}
\begin{proof}[Proof of \cref{prop:dirichlet form as sum of commutators}]
Note that $\mathcal{S}$ is self-adjoint, and $S^\dag(-\omega) = S(\omega)^\dag, $ so we can write
\begin{align}
&2 \langle \mathcal{L}_{\omega, S}(f) + \mathcal{L}_{-\omega, S^\dag}(f), f \rangle_\rho \notag \\
&\qquad= G(\omega) \left\langle S(\omega)^\dag [f, S(\omega)] + [S(\omega)^\dag, f] S(\omega), f \right\rangle_\rho \notag \\
&\qquad\qquad + G(-\omega) \left\langle S^\dag(-\omega)^\dag [f, S^\dag(-\omega)] + [S^\dag(-\omega)^\dag, f] S^\dag(-\omega), f \right\rangle_\rho \notag \\
&\qquad= G(\omega) \left( \tr\left( S(\omega)^\dag [f, S(\omega)] \rho^{1/2} f^\dag \rho^{1/2} \right) 
     + \tr\left( [S(\omega)^\dag, f] S(\omega) \rho^{1/2} f^\dag \rho^{1/2} \right) \right) \notag \\
&\qquad\qquad + G(-\omega) \left( \tr\left( S(\omega) [f, S(\omega)^\dag] \rho^{1/2} f^\dag \rho^{1/2} \right) 
     + \tr\left( [S(\omega), f] S(\omega)^\dag \rho^{1/2} f^\dag \rho^{1/2} \right) \right) \notag \\
&\qquad=_{\text{(1)}} G(\omega) e^{\beta \omega /2} \left( \tr\left( [f, S(\omega)] \rho^{1/2} f^\dag S(\omega)^\dag \rho^{1/2} \right) 
     + \tr\left( [S(\omega)^\dag, f] \rho^{1/2} S(\omega) f^\dag \rho^{1/2} \right) \right) \notag \\
&\qquad\qquad + G(-\omega) e^{-\beta \omega /2} \left( \tr\left( [f, S(\omega)^\dag] \rho^{1/2} f^\dag S(\omega) \rho^{1/2} \right)
     + \tr\left( [S(\omega), f] \rho^{1/2} S(\omega)^\dag f^\dag \rho^{1/2} \right) \right) \notag \\
&\qquad= \tilde{G}(\omega) \left( \tr\left( [f, S(\omega)] \rho^{1/2} f^\dag S(\omega)^\dag \rho^{1/2} \right) 
     + \tr\left( [S(\omega), f] \rho^{1/2} S(\omega)^\dag f^\dag \rho^{1/2} \right) \right) \notag \\
&\qquad\qquad + \tilde{G}(-\omega) \left( \tr\left( [S(\omega)^\dag, f] \rho^{1/2} S(\omega) f^\dag \rho^{1/2} \right)
     + \tr\left( [f, S(\omega)^\dag] \rho^{1/2} f^\dag S(\omega) \rho^{1/2} \right) \right) \notag \\
&\qquad=_{\text{(2)}} -\tilde{G}(\omega) \tr \left( [S(\omega), f] \rho^{1/2} [S(\omega), f]^\dag \rho^{1/2} \right) 
     - \tilde{G}(-\omega) \tr \left( [S(\omega)^\dag, f] \rho^{1/2} [S(\omega)^\dag, f]^\dag \rho^{1/2} \right) \notag \\
&\qquad=_{\text{(3)}} -\tilde{G}(\omega) \| [S(\omega), f] \|_\rho^2 
     - \tilde{G}(-\omega) \| [S^\dag(-\omega), f] \|_\rho^2,\nonumber
\end{align}
where (1) is due to applying \cref{prop:operator in V omega commutation with rho} to $S(\omega)\in V_\omega$ and $S(\omega)^\dag \in V_{-\omega}$, (2) is by rearranging terms and the fact that $\tilde{G}(\omega) = G(\omega) e^{\frac{\beta \omega }{2}} = \tilde{G}(-\omega),$ and (3) is by definition of $\| \cdot \|_\rho$ and $ S(\omega)^\dag = S^\dag (-\omega).$ Hence, we obtain the desired equality
\begin{align*}
\mathcal{E}_{\mathcal{L}}(f) &= -\left\langle \sum_{S\in \mathcal{S}, \omega} \mathcal{L}_{\omega, S} (f) , f \right\rangle_\rho \\
&= - \frac{1}{2} \sum_{S\in \mathcal{S}, \omega} \langle \mathcal{L}_{\omega, S}(f) + \mathcal{L}_{-\omega, S^\dag}(f), f \rangle_\rho
 \\
 &= \frac{1}{4} \sum_{S\in \mathcal{S}, \omega} ( \tilde{G}(\omega) \| [S(\omega), f] \|_\rho^2 
     + \tilde{G}(-\omega) \| [S^\dag(-\omega), f] \|_\rho^2)= \frac{1}{2} \sum_{S,\omega}  \tilde{G}(\omega) \| [S(\omega), f]\|_\rho^2. \qedhere    
\end{align*}

\end{proof}

\begin{proof}[Proof of \cref{prop:generalized cheeger}]
For $x\neq y,$ let $Q(x,y) = \pi(x) P[x\to y] .$
We adapt the standard proof of Cheeger's inequality from \cite{LP17}.
We first show the following claim:
\begin{claim}\label{claim:cheeger helper claim}
    Given a function $\psi:X \to \R_{\geq 0},$ order $X$ so that $\psi$ is non-increasing. If $\pi( \psi > 0)\leq 1/2$ then
\[
\E_\pi[\psi] \leq \Phi^{-1} \sum_{\substack{x, y \in X \\ x < y}} (\psi(x) - \psi(y)) Q(x,y)
\]
\end{claim}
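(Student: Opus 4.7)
The plan is to use a standard level-set / layer-cake argument, which reduces the statement for general nonnegative $\psi$ to the definition of the bottleneck ratio $\Phi$ applied to prefix sets in the chosen ordering.

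First, I will define the super-level sets $S_t := \{x \in X : \psi(x) > t\}$ for $t \geq 0$. Because we have ordered $X$ so that $\psi$ is non-increasing, each $S_t$ is a prefix of $X$ in this ordering. Moreover, $S_t \subseteq \{\psi > 0\}$ for every $t \geq 0$, so the hypothesis $\pi(\psi > 0) \leq 1/2$ gives $\pi(S_t) \leq 1/2$ for all $t \geq 0$. Thus, by the definition of the bottleneck ratio, $Q(S_t, S_t^c) \geq \Phi \, \pi(S_t)$ for all such $t$.

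Next, I will rewrite both sides as integrals over $t$ using the layer-cake representation $\psi(x) = \int_0^\infty \mathbf{1}[\psi(x) > t] \, dt$. For the left-hand side, Fubini gives $\E_\pi[\psi] = \int_0^\infty \pi(S_t) \, dt$. For the right-hand side, the identity $\psi(x) - \psi(y) = \int_0^\infty (\mathbf{1}[\psi(x) > t] - \mathbf{1}[\psi(y) > t]) \, dt$ (valid because $\psi(x) \geq \psi(y)$ for $x < y$) together with Fubini yields
\[
\sum_{x < y} (\psi(x) - \psi(y)) Q(x, y) = \int_0^\infty \Bigl( \sum_{x < y} \mathbf{1}[x \in S_t, \, y \notin S_t] \, Q(x,y) \Bigr) dt.
\]
The key point is that since $S_t$ is a prefix, the condition $x \in S_t$, $y \notin S_t$ automatically implies $x < y$, so by the reversibility identity $Q(x,y) = Q(y,x)$ the inner sum equals $Q(S_t, S_t^c)$.

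Combining these two integral representations with the bottleneck bound $Q(S_t, S_t^c) \geq \Phi \, \pi(S_t)$ and integrating over $t$ gives the claim. The only subtlety to double-check is the interchange of sum and integral (which is justified since everything is nonnegative) and the fact that the reversibility of the generator is used in identifying the edge boundary $Q(S_t, S_t^c)$ with the one-directional prefix sum; no step is truly an obstacle, as this is the standard discrete co-area argument adapted to continuous-time Markov generators rather than transition matrices.
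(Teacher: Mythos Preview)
Your proposal is correct and follows essentially the same layer-cake/co-area argument as the paper: define the super-level sets $S_t$, apply the bottleneck bound to each, and integrate over $t$ using $\E_\pi[\psi]=\int_0^\infty \pi(S_t)\,dt$ and $\psi(x)-\psi(y)=\int_0^\infty \mathbf{1}[\psi(x)>t\ge\psi(y)]\,dt$. One small remark: the appeal to reversibility is not actually needed, since the prefix structure already gives $\{x\in S_t,\,y\notin S_t\}\Rightarrow x<y$, so the inner sum is exactly $Q(S_t,S_t^c)$ by definition.
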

\begin{proof}
    Let $S = \{x : \psi(x) > t\}$ with $t > 0$. 
    Since $ \pi(S) \leq \pi( \psi > 0)\leq 1/2,$ we use the definition of the bottleneck ratio  $\Phi$ to obtain:
\[\Phi \leq 
\frac{Q(S, S^c)}{\pi(S)} = \frac{\sum_{\substack{x, y \in X}} Q(x,y) \mathbf{1}[\psi(x) > t\geq \psi(y)]}{\pi(\psi > t)}
\]
Rearranging and noting that $\psi(x) > \psi(y)$ only for $x < y$, we get:
\[\pi(\psi > t) \leq \Phi^{-1} \sum_{\substack{x, y \in X}} Q(x,y) \mathbf{1}[\psi(x) > t\geq \psi(y)]\]

Integrating over $t$, noting that
\[
\int_0^\infty  \mathbf{1}[\psi(x) > t\geq \psi(y)] \, dt = \psi(x) - \psi(y),
\]
and that $ \E_\pi[\psi] = \int_{0}^\infty \pi(\psi > t),$ we obtain:
\[
\E_\pi[\psi] \leq \Phi^{-1} \sum_{x < y} (\psi(x) - \psi(y)) Q(x,y).\qedhere
\]
\end{proof}
Define the inner product \(\langle \cdot, \cdot \rangle_\pi\) on the space of real-valued functions \(f : X \to \mathbb{R}\) by
\[
\langle f, g \rangle_\pi := \sum_{x \in X} \pi(x) f(x) g(x).
\]
We view \(P\) as a matrix with entries \(P[x, y] = P[x \to y]\), and set \(P[x \to x] := -\sum_{y \neq x} P[x \to y]\), so that each row sums to zero. Note that the Dirichlet form can be written as \(\mathcal{E}(f) = \langle -P f, f \rangle_\pi\).

Since \(P\) is self-adjoint with respect to \(\langle \cdot, \cdot \rangle_\pi\), the operator \(-P\) has real, nonnegative eigenvalues. Its smallest eigenvalue is \(0\), and the spectral gap \(\lambda\) is the second smallest eigenvalue. Let \(f_2 : X \to \mathbb{R}\) be an eigenfunction corresponding to \(\lambda\), so that \(-P f_2 = \lambda f_2\). Without loss of generality, assume \(\pi(f_2 > 0) \leq 1/2\); otherwise, we may replace \(f_2\) with \(-f_2\). Define a nonnegative function \(f : X \to \mathbb{R}_{\geq 0}\) by
\[
f(x) := \max(f_2(x), 0),
\]
and let \(g := -P f\).

We show that:
\begin{equation}\label{eq:cheeger helper}
    \forall x: g(x)\leq \lambda f(x)
\end{equation}
There are two cases, $ f(x) = 0$ and $f(x) >0.$ 

If $f(x)=0:$  $ g(x) = -\sum_{y:y\neq x} P[x\to y] f(y)\leq 0 = f(x).$ 

If $ f(x) >0:$ In this case, $f(x) = f_2(x)$ and since $f(y) \geq f_2(y) \forall y,$ we have
\begin{align*}
  g(x) &= -P[x\to x] f(x) -\sum_{y:y\neq x} P[x\to y] f(y) = -P[x\to x] f_2(x) -\sum_{y:y\neq x} P[x\to y] f(y) \\
  &\leq  -P[x\to x] f_2(x) -\sum_{y:y\neq x} P[x\to y] f_2(y) 
  = (-P f_2) (x) = \lambda f_2(x) = \lambda f(x).  
\end{align*}
Since $f(x) \geq 0\forall x,$ \cref{eq:cheeger helper} yields
\begin{equation}\label{eq:cheeger proof helper equation 2}
     \mathcal{E}(f) = \langle -P f, f\rangle_\pi = \sum_{x} \pi(x)) g(x)  f(x) \leq \lambda \sum_{x} \pi(x) f(x)^2 = \lambda \langle f, f\rangle_\pi.
\end{equation}
Applying \cref{claim:cheeger helper claim} for the function $\psi=f^2,$ and use the ordering of $X$ induced $\psi,$ we get:
\[\langle f, f\rangle_\pi  =\E_\pi[\psi] \leq \Phi^{-1} \sum_{x<y} (f^2(x) - f^2(y)) Q(x,y) \]
Since both sides are nonnegative, we may square the inequality and then apply the Cauchy–Schwarz inequality to get:
\[\langle f, f\rangle_\pi^2 \leq \Phi^{-2} \left[\sum_{x<y} (f^2(x) - f^2(y)) Q(x,y) \right]^2\leq \Phi^{-2} \left[\sum_{x<y} (f(x) - f(y))^2 Q(x,y) \right] \left[\sum_{x<y} (f(x) + f(y))^2 Q(x,y) \right] \]
Using the identities $(f(x) + f(y))^2  = 2(f(x)^2 + f(y)^2) - (f(x)-f(y))^2 $, $\sum_{x<y} (f(x) - f(y))^2 Q(x,y)  = \mathcal{E}(f) ,$ and $Q(x,y)=Q(y,x),$ we can rewrite
\begin{align*}
    \sum_{x<y} (f(x) + f(y))^2 Q(x,y)  &=  2\sum_{x<y} (f(x)^2 + f(y)^2 )\pi(x) P[x\to y] -\sum_{x<y} (f(x) - f(y))^2 Q(x,y) \\
    &=2 \sum_{x\in X}\left(\pi(x) f(x)^2 \sum_{y\neq x}P[x\to y] \right) - \mathcal{E}(f)\\
    &\leq 2 \Lambda \langle f, f\rangle_\pi - \mathcal{E}(f)
\end{align*}
thus
\[\langle f, f\rangle_\pi^2 \leq \Phi^{-2} \mathcal{E}(f) (  2\Lambda \langle f, f\rangle_\pi - \mathcal{E}(f)) \leq 2\Lambda \Phi^{-2}  \mathcal{E}(f) \,   \langle f, f\rangle_\pi \]
Dividing both sides by $\langle f, f\rangle_\pi \geq 0$
and applying \cref{eq:cheeger proof helper equation 2}, we obtain
\[ \Phi^2 \leq 2 \Lambda \lambda.\]
\end{proof}
\end{document}